\definecolor {processblue}{cmyk}{0.96,0,0,0}
\newtheorem{prop}{Proposition}
\newtheorem{thm}{Theorem}
\newtheorem{rmk}{Remark}
\newtheorem{lemma}{Lemma}
\newtheorem{definition}{Definition}
\newtheorem{exe}{Example}
\newtheorem{cor}{Corollary}
\newcommand{\vol}{\mbox{\upshape{vol} }}
\newcommand{\rep}{\text{\upshape{rep} }}
\newcommand{\trep}{t_{\text{\upshape{rep} }}}
\newif\ifmnote
\newif\iftodolist
\begin{document}

\title{\sc Reliability of Erasure Coded Storage Systems: A Combinatorial-Geometric Approach\thanks{This work was partially presented at the IEEE International Conference on BigData 2013, and  at the IEEE Information Theory Workshop 2014, Tasmania, Australia.}}

\author{Vinay~A.~Vaishampayan, \textit{Fellow, IEEE}\thanks{Vinay A. Vaishampayan, formerly with AT\&T's Shannon Laboratory, is now with the City University of New York, College of Staten Island, Department of Engineering Science and Physics. His work was supported in part by CNPq Grant 400441/2014-4 and PSC-CUNY Award \# 68631-00 46} and Antonio~Campello, \textit{Member, IEEE}\thanks{Antonio Campello is currently with T\'el\'ecom-ParisTech, France and University of Campinas, Brazil. His work was supported by S\~ao Paulo Research Foundation (FAPESP) grants 2013/25219-5, 2014/20602-8 and was initiated during a short-term visit to AT\&T Shannon Laboratory, NJ, USA.} \thanks{Copyright (c) 2014 IEEE. Personal use of this material is permitted.  However, permission to use this material for any other purposes must be obtained from the IEEE by sending a request to pubs-permissions@ieee.org.}}

\maketitle

\begin{abstract}
We consider the probability of data loss, or equivalently, the reliability function for an erasure coded  distributed data storage system under worst case conditions. Data loss in an erasure coded system depends on probability distributions for the disk repair duration and the disk failure duration.  In previous works, the data loss probability of such systems has  been studied under the assumption of exponentially distributed disk failure and disk repair durations, using well-known analytic methods from the theory of Markov processes. These methods lead to an estimate of the integral of the reliability function.

Here, we  address the problem of directly calculating the data loss probability for general repair and failure duration distributions. A  closed limiting form is developed for the probability of data loss and it is shown that the probability of the event that a repair duration exceeds a failure duration is sufficient for characterizing the data loss probability. 

For the case of constant repair duration, we develop an expression for the conditional data loss probability given the number of failures experienced by a each node in a given time window. We do so by developing a geometric approach that relies on the computation of volumes of a family of polytopes that are related to the code. An exact calculation is provided and an upper bound on the data loss probability is obtained by posing the problem as a set avoidance problem.  Theoretical calculations are compared to simulation results. 
\end{abstract}


\section{Introduction}
\label{sec:Introduction}
Distributed data storage systems are growing in popularity, driven by demand and enabled by the availability of broadband networks, and declining costs of storage devices. Erasure coding represents a practical method for building highly reliable storage systems using low cost, less reliable storage drives. In an erasure coded storage system, a block of $k$ information symbols from some finite set
is encoded into a block of $n$ coded symbols by an $(n,k)$ erasure code and the $n$ code symbols are placed on separate disks.
When a disk fails, it is {\it repaired}, i.e.  redundant information in the code is used to recompute the erased symbol which is then 
placed on a replacement disk. Repair is essential for the reliability of the overall
system. Data loss occurs or the system fails when the total number of failed disks at any time exceeds the erasure correcting capability of the code.
If disks are repaired swiftly, the number of failed disks can be kept small on average, reducing the probability of data loss. 

 An important metric is the reliability function $R(t)$, defined to be the probability that data is not lost in the time window $[0,t]$. In previous works~\cite{Angus:1988}, \cite{Chen:1994}, it is assumed that the repair and failure durations are exponentially distributed random variables and the mean time to data loss (MTTDL) is  determined by analyzing a state transition diagram, where the system state at a given time is defined as the number of working disks at that time, see e.g.~\cite{resch2011development}, \cite{Angus:1988}, \cite{Chen:1994}. The reliability function $R(t)$ is then estimated by  the formula $\exp (-t/\mbox{MTTDL})$.  Exponentially distributed and independent  failure durations are critical for this analysis to proceed. 
Several disk failure and disk repair modeling and measurement studies have been reported in the literature, e.g. ~\cite{schroeder2007disk},~\cite{pinheiro2007failure}, \cite{elerath2007enhanced}, \cite{xin2003reliability}. It is concluded that real world storage devices do not exhibit exponentially distributed lifetimes and that the Weibull distribution with appropriately chosen parameters is a  more appropriate model for failure and repair durations. Simulation is a valuable tool for evaluating reliability of disk storage systems, see e.g. \cite{greenan2009reliability} which also includes a comprehensive review of previous modeling studies. 
In a recent contribution~\cite{VenkThesis2012}, it is shown that the reliability analysis based on the above exponential model is robust to changes in the disk failure time distribution. It is worth noting that~\cite{VenkThesis2012}  also points out that the analysis of MTTDL is \emph{not} robust to changes in the repair duration distribution.

The main contributions of this work  are summarized below.
\begin{enumerate}
\item We derive a formula for the data loss probability $P({\mathcal D}_t)$ for small $G$ and large $t$, for general independent and identically distributed (iid) failure and repair distributions, (\ref{eq:LimitingForm}), restated here for convenience
\begin{equation}
\frac{  P({\mathcal D}_t) }{(G/n)^{(n-k)} } \approx  \frac{(n-1)!}{(k-1)!} \frac{t}{E(Y)},
\end{equation}
where random variable $Y$ represents a failure duration, $G$ is the probability that $Y<Z$, where $Z$ is a random repair duration and an $(n,k)$ MDS erasure code is used.
This is obtained by conditioning on a specific sequence of binary events, to be described later. Our derivation shows that
for general distributions, the data loss probability is characterized in terms of the probability that the failure duration is smaller than the repair duration, and supports the fact (already known in the literature) that failure and repair rates are insufficient characterizations for determining the data loss probability. Our contribution is to show that the above mentioned probability is {\it sufficient} for characterizing the limiting data loss probability.

\item For constant repair duration, by conditioning on the number of failure events on each node, we arrive at another expression for the data loss probability, as well as a lower bound i.e.  we derive an expression for $P_{\bm{m}}({\mathcal D}_t)$, the data loss probability conditioned on $\bm{m}=(m_{1},m_{2},\ldots,m_{n})$, where $m_{i}$ is the number of failures for disk $i$ in time window $[0,t]$. Specifically, we prove that $P_{\bm{m}}(\mathcal{D}_t)$ has the following asymptotic behavior as  $\tau = t_\rep/t \to 0$: 
\begin{eqnarray}
\lefteqn{\lim_{\tau\to 0} \frac{P_{\bm{m}}(D_t)}{\tau^{n-k}} } \nonumber \\
& = (n-k+1)! \displaystyle \sum_{(i_1,\ldots,i_{n-k+1}) \above 0pt \mbox{{\tiny distinct}}}{m_{i_1} m_{i_2}\ldots m_{i_{n-k+1}}}. \nonumber \\
\end{eqnarray}
This analysis  holds for exponential failure distributions and provides a finer analysis of the system, not addressed by previous Markov chain approaches. It also has some implications to non-homogeneous Poisson processes.

\item By viewing the data loss probability calculation for constant repair duration as a problem of set avoidance by the Cartesian product of random sets, we derive an \emph{upper} bound on the data loss probability,
\begin{equation} 
\begin{split} P_{\bm{m}}(\mathcal{D}_t) \leq 1 -\left(1-\frac{\mbox{\upshape{vol }}\mathcal{R}}{t^n}\right)^{m_1 \ldots m_n},
\end{split}
\end{equation}
where $\mathcal{R} \subset [0,t]^n$ is a suitably defined error region associated with the code. The simplicity of the upper bound and the fact that its asymptotic behavior is comparable to the closed forms in some regimes makes it useful in practice. Methods for sharpening this bound remain as an open question.
\item  We explore the connection between the erasure code and a family of polytopes that determine the error region. This connection is, in our opinion, interesting in its own right, even though it comes from an error probability calculation. Our contribution here is to develop a systematic  approach for calculating the volume of a set of ordered points with constrained differences between successive elements. This method underlies the calculations  for constant repair duration in this paper.
\end{enumerate}

The paper is organized as follows. Sec.~\ref{sec:Statement} contains a problem statement and states the  assumptions that underlie our analysis. The data loss probability for general distributions is derived in Sec.~\ref{sec-anal-gen}.
For constant repair durations, we explore the combinatorial and geometric aspects  of the problem of evaluating the data loss probability in Sec.~\ref{sec:constant}.  Sec.~\ref{sec:avoidance} presents a method for upper bounding the data loss probability for constant repair duration by viewing the problem as a set avoidance problem. 
Volume calculations that underlie both the direct calculation as well as the set avoidance upper bound are presented in Sec.~\ref{sec:ErrorRegions}. Numerical and simulation results that explore some of the implications of the theory developed are presented in Sec.~\ref{sec:Numerical}. The paper is summarized and suggestions for future research are presented in Sec.~\ref{sec:Conclusions}. Some mathematical details and a proofs are contained in the appendix.

\section{Assumptions, Problem Statement and an Example}
\label{sec:Statement}
Code symbols from an MDS $(n,k)$ erasure code are written to $n$ disks\footnote{To be precise, in the modern terminology it is said that the information is stored in a \textit{node}. Throughout the paper we use the looser term disk instead, in analogy to classical storage systems.}. We assume  that disk failures occur independently and that the disk failure process is modeled by an independent increment process with known probability distribution.  


When a disk fails, data is downloaded from other disks and used to repair the lost symbols on the failed disk. We refer to these disks as {\it helper} disks, and to the set of helper disks as the {\it helper set}. The probability distribution of the repair duration, $Z$, is known, and repair durations are assumed to be independent and identically distributed. Since the codes are MDS, we consider that data is available as long as at least $k$ disks are working (alternatively, if there was no instant of time at which less than $k$ disks were working).
Thus a data loss event occurs in the interval $[0,t)$ if the number of failed disks exceeds $(n-k)$ the erasure correcting capability
of the code.

Characterization of a data loss event is subtle and depends on the system architecture, as well as on characteristics of the erasure code.
An example is shown in  Fig.~\ref{fig:ErrorEventParallel} for constant repair duration $\trep$. Disk 1 has failed and the helper set consists of disks 3 and 4. However, prior to disk 1 being restored, disk 2 fails. With a traditional MDS code, replacement symbols for disk 2 would be computed and the repair of disk 2 would begin without
interrupting the repair of disk 1. On the other hand, in systems that perform functional repair~\cite{dimakis2010network}, it is possible that the symbols for disk 1 would need to be
recomputed as well, which implies that the repair process for disk 1 would need to be restarted. As a consequence, this sequence of failures and repairs results in a data loss event. 

\begin{framed}{In our analysis we consider a disk to be repaired if and only if that disk repairs successfully, or a subsequently failed disk repairs successfully before the total number of failed disks exceeds the erasure correcting capability of the code. }
\end{framed}

\begin{figure}[htbp] 
   \centering
   \includegraphics[width=3in]{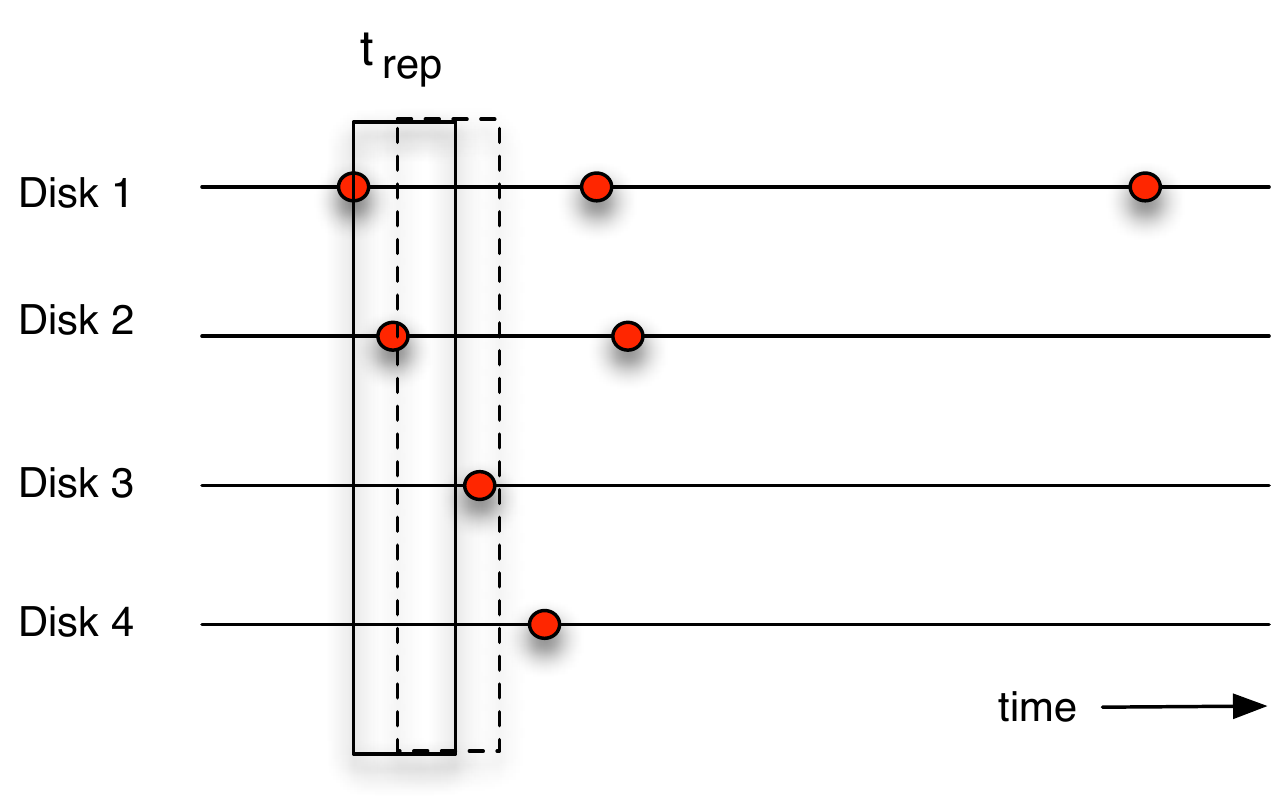} 
   \caption{Sequence of disk failures (shaded dots) that causes a data loss for a $(4,2)$ coded system and helper set of size $2$.  Here the helper set for disk $1$  is $\{3,4\}$. When disk 2 fails, even though the helper set remains unchanged, the symbols for disk 1 must be recomputed (for functional repair).  Since  disk 3 fails prior to the repair being completed, a data loss event has occurred.}
   \label{fig:ErrorEventParallel}
\end{figure}

\section{Analysis for General Failure and Repair Time Distributions}
\label{sec-anal-gen}
\begin{figure}[h] 
   \centering
   \includegraphics[width=3in]{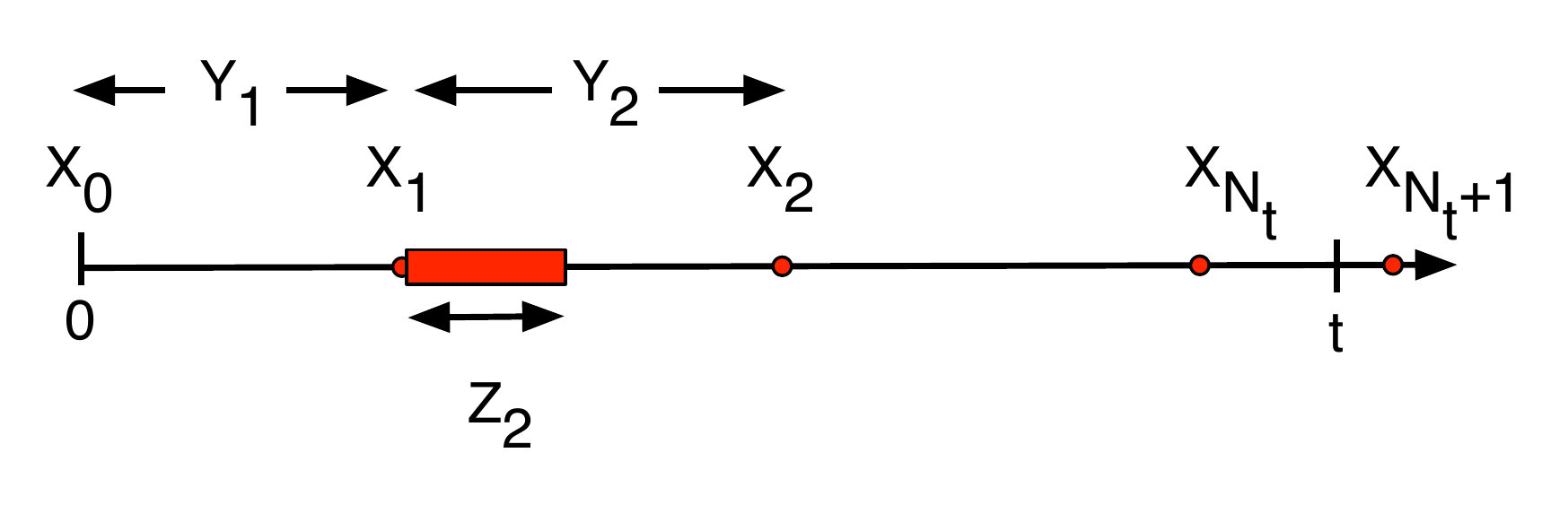} 
   \caption{Illustration of setup for failure times and repair durations.}
   \label{fig:anal-gen}
\end{figure}
We make the following assumption about our failure process. The $i$th inter-failure duration (hereafter referred to as the failure duration) for the \emph{system} is denoted $Y_i$.  The process $\{Y_i,~i=1,2,\ldots,\}$ is an i.i.d. process with known probability density function (pdf) $f_Y(\cdot)$, cumulative distributive function (CDF) $F_Y(\cdot)$, where $f_Y(y)=0$ for $y < 0$. Let $X_0=0$ and $X_i=X_{i-1}+Y_{i}$, $i=1,2,\ldots$. Here $X_i,~i=1,2,\ldots,$ is the instant at which the $i$th disk failure in the system occurs.  Note that $X_0$ is not regarded as a failure instant. Let the random variable $N_t$ count the number of failures in time interval $[0,t)$, i.e. $X_{N_t} <t$ and $X_{N_t+1}\geq t$. For an $(n,k)$ MDS code an error cannot occur in $[0,t)$ if $N_t < n-k+1$.

Our failure process associates with each $X_i$, a disk label drawn independently and uniformly from the set $\{1,2,\ldots,n\}$, where $n$ is the block length of the $(n,k)$ MDS erasure code being used.
The amount of time taken to repair a disk after the $i$th failure instant $X_i$ is denoted $Z_{i+1}$. The process 
$\{Z_i,~i=2,3,\ldots\}$ is assumed to be i.i.d with pdf $f_Z(\cdot)$ and CDF $F_Z(\cdot)$. 
We also define the indicator  random variable $B_1=1$, and $B_i=1$ if $Y_i <  Z_i$, $B_i=0$,  otherwise, for $i=2,3,\ldots$.  We will use the notation $\pmb{B}_{i:j}=(B_i,B_{i+1},\ldots,B_{j})$. Our calculation is based on runs of ones ('1-runs') in $B_{2:s}$. Let $\pmb{U}_s=(U_1,U_2,\ldots,U_R)$ denote the vector of runs of 1's in $\pmb{B}_{2:s+1}$ for $s \geq 1$, where $R$ denotes the number of 1-runs in $\pmb{B}_{2:s}$.  Thus, for the sequence $\pmb{b}_{1:11}=10001111011$, $u_s=\phi$ and $r=0$, for $s=1,2,3$, $u_s=(1)$ and $r=1$, for $s=4$  and $\pmb{u}_{s}=(4,2)$, $r=2$, for $s=10$.  

The probability of data loss is given by
\begin{eqnarray}
\lefteqn{P(\mathcal{D}_t)} \nonumber  \\
& = & \sum_{s=n-k+1}^\infty \sum_{\pmb{u}}P(\mathcal{D}_t| N_t=s, \pmb{U}_s=\pmb{u})P(N_t=s, \pmb{U}_s=\pmb{u}). \nonumber \\
\label{eqn-gen-1}
\end{eqnarray}
In (\ref{eqn-gen-1}), the term $P(\mathcal{D}_t| N_t=s, \pmb{U}_s=\pmb{u})$,  is the conditional probability of the event that in one of the 1-runs of a failure vector $\pmb{B}_{2:s+1}$ with $1$-run vector $\pmb{u}$, the number of distinct disk failures exceeds $(n-k)$.    The probability that exactly $l$ distinct disks fail during a run of length $u$, denoted $\Pi_n(u+1,l)$,   is given by
\begin{equation}
\Pi_n(u+1,l)=\frac{1}{n^{u+1}} {n \choose l} \sum_{a_1>0,a_2>0,\ldots,a_l>0} {u+1 \choose a_1,a_2,\ldots,a_l},
\end{equation}
and $\Pi_n(u+1,l)=0$ for $l>u+1$.
In terms of $\Pi_n(u,l)$ we obtain 
\begin{equation}
P(\mathcal{D}_t| N_t=s,  \pmb{U}_s=\pmb{u})=1-\prod_{j=1}^r \sum_{i=1}^{n-k}\Pi_n(u_j+1,i),
\end{equation}
where $r$ is the number of 1-runs in the sequence of failures.
Note that $$\sum_{a_1>0,a_2>0,\ldots,a_l>0} {u \choose a_1,a_2,\ldots,a_l}=l! S_2(u,l),$$ where $S_2(u,l)$ is the Stirling number of the second kind.

The second term in (\ref{eqn-gen-1}), $P(N_t=s,  \pmb{U}_s=\pmb{u})$ is given by
\begin{eqnarray}
& &P(N_t=s,  \pmb{U}_s=\pmb{u}) = P(N_t=s|\pmb{U}_s=\pmb{u})P(\pmb{U}_s=\pmb{u}) \nonumber \\
& & = P(N_t=s|\pmb{U}_s=\pmb{u})N(s,\pmb{u})(1-G)^{s-w(\pmb{u})} G^{w(\pmb{u})} ,
\label{eqn:JointProb}
\end{eqnarray}
where $w(\pmb{u})=\sum_{i=1}^r u_i$, $N(s,\pmb{u})$ is the number of binary sequences $\pmb{b}_{2:s+1}$ (of length $s$) with a 1-run vector $\pmb{u}$, $G:=\int_{0}^{\infty}F_y(z)f_Z(z)dz$ is the probability that $Z_i > Y_i$ and $P(\pmb{U}_s=\pmb{u})$ denotes the probability of selecting a vector $\pmb{B}_{2:s+1}$ with 1-run vector $\pmb{u}$.

We thus obtain the following expression for the data loss probability,
\begin{eqnarray}
\lefteqn{P(\mathcal{D}_t) =  } & \nonumber \\
&=& \sum_{s=n-k+1}^{\infty}  \sum_{\pmb{u}}P(\mathcal{D}_t| N_t=s, \pmb{U}_s=\pmb{u}) \nonumber  \\ 
& & \times N(s,\pmb{u})(1-G)^{s-w(\pmb{u})} G^{w(\pmb{u})} Pr(N_t=s|\pmb{U}_s=\pmb{u})\nonumber \\
& =&\sum_{s=n-k+1}^\infty  
 \sum_{\pmb{u}} \left(1-\prod_{j=1}^r \sum_{i=1}^{n-k}\Pi_n(u_j+1,i) \right) \nonumber  \\ 
 & & \times N(s,\pmb{u})(1-G)^{s-w(\pmb{u})} G^{w(\pmb{u})} Pr(N_t=s|\pmb{U}_s=\pmb{u}).\nonumber \\
\label{eqn-dlp-2}
\end{eqnarray}

\subsubsection{A Lower Bound}
A lower bound is obtained by  writing
\begin{eqnarray}
P(\mathcal{D}_t) 
& = &  \sum_{s=n-k+1}^{\infty}\sum_{\pmb{b}}P(\mathcal{D}_t| N_t=s, \pmb{B}_{2:s+1}=\pmb{b}) \nonumber \\ 
& & \times P(N_t=s, \pmb{B}_{2:s+1}=\pmb{b}) \nonumber \\
\label{eqn-rewrite-errprob}
\end{eqnarray}
and restricting the sum in (\ref{eqn-rewrite-errprob}) to $\pmb{b} \in \mathcal{B}$ where 
\begin{eqnarray}
\mathcal{B} & := & \{1^{n-k}0^{s-(n-k)},01^{n-k}0^{s-1-(n-k)}, \nonumber \\
& & \ldots,0^{s-1-(n-k)}1^{n-k}0\}.
\end{eqnarray}
Observe that the cardinality  $|\mathcal{B}|=s-(n-k)$. Also
$$P(\mathcal{D}_t| N_t=s, \pmb{B}_{2:s+1}=\pmb{b})=\Pi_n(n-k+1,n-k+1)$$ for $\pmb{b} \in \mathcal{B}$ and $$\Pi_n(n-k+1,n-k+1)=\frac{(n-1)!}{n^{n-k}(k-1)! }.$$  
Thus we obtain 
\begin{eqnarray}
\lefteqn{P(\mathcal{D}_t) } \nonumber \\
& \geq   &  \frac{(n-1)!}{n^{n-k}(k-1)! } \sum_{s=n-k+1}^\infty \sum_{\pmb{b} \in \mathcal{B}}P(N_t=s,\pmb{B}_{2:s+1}=\pmb{b}) \nonumber \\
& = & \frac{(n-1)!}{n^{n-k}(k-1)! } \sum_{s=n-k+1}^\infty \sum_{\pmb{b} \in \mathcal{B}}P(N_t=s|\pmb{B}_{2:s+1}=\pmb{b}) \nonumber \\
& & \times P(\pmb{B}_{2:s+1}=\pmb{b}) \nonumber \\
& = & \frac{(n-1)!}{n^{n-k}(k-1)! } \sum_{s=n-k+1}^\infty \sum_{\pmb{b} \in \mathcal{B}}P(N_t=s|\pmb{B}_{2:s+1}=\pmb{b}) \nonumber \\
& & \times (1-G)^{s-(n-k)}G^{n-k} \nonumber \\
& = & \frac{G^{n-k} (n-1)!}{n^{n-k}(k-1)! } \sum_{s=n-k+1}^\infty (1-G)^{s-(n-k)}  \nonumber \\
& & \sum_{\pmb{b} \in \mathcal{B}}P(N_t=s|\pmb{B}_{2:s+1}=\pmb{b}) \nonumber \\
& \stackrel{(a)}{=} & \frac{G^{n-k} (n-1)!}{n^{n-k}(k-1)! } \sum_{s=n-k+1}^\infty (1-G)^{s-(n-k)} \nonumber \\
& & \times (s-(n-k))P(N_t=s|\pmb{B}_{2:s+1}=1^{n-k}0^{s-(n-k)}) \nonumber \\
& \stackrel{(b)}{=} & \frac{G^{n-k} (n-1)!}{n^{n-k}(k-1)! } \sum_{s=n-k+1}^\infty (1-G)^{s-(n-k)} \nonumber \\
& & \times  (s-(n-k))P(N_t=s|\pmb{B}_{2:*}=1^{n-k}0^*) 
\label{eqn:LongLowerBound}
\end{eqnarray}
where  in (a) we have used the facts (i) $|\mathcal{B}|=s-(n-k)$, and  (ii)  $P(N_t=s|\pmb{B}_{2:s+1}=\pmb{b})$ is a constant for $\pmb{b} \in \mathcal{B}$.  Observe that in (b) the conditioning event is that the infinitely long sequence $\pmb{B}_{2:*}=1^{n-k}000... =:1^{n-k}0^*$. Henceforth, we denote $\pmb{B}_{2:*}$ by $\pmb{B}$.

\subsubsection{Limiting Behavior as $G\to 0, t \to \infty$.}
The lower bound in (\ref{eqn:LongLowerBound}) is tight in the limit as $G \to 0$ since it accounts for all terms except ones that are  $o(G^{n-k})$. Thus
\begin{eqnarray}
\lefteqn{\lim_{G \to 0} \frac{  P({\mathcal D}_t) }{(G/n)^{(n-k)}} \frac{(k-1)!}{(n-1)!} } \nonumber \\  
                                                                                      & = &  \sum_{s=n-k+1}^\infty (s-(n-k))P(N_t=s|\pmb{B}=1^{n-k}0^*) \nonumber \\
                                                                                      & =  & \sum_{i=n-k+1}^\infty P(N_t \geq i | \pmb{B}=1^{n-k}0^*).
\label{eqn-limit}
\end{eqnarray}
In order to estimate (\ref{eqn-limit}), we follow the approach taken in \cite{gallager1995discrete}, Ch. 3. Define indicator random variable $I_i,~i=1,2,\ldots$
\begin{eqnarray}
I_i:=\left\{ \begin{array}{ll}
1, & i=1  \mbox{ or }  (i>1 \mbox{ and } \sum_{j=1}^{i-1}Y_j < t), \nonumber \\
0, & \mbox{ otherwise. }
\end{array}
\right.
\end{eqnarray}
Note that the events $\{I_n=1\}$ and $\{N_t \geq n-1\}$ are identical.
Now consider 
\begin{eqnarray}
\lefteqn{E\left(Y_1+Y_2+\ldots+Y_{N_t+1} | \pmb{B}=1^{n-k}0^*\right)} \nonumber\\ 
& = & E(\sum_{i=1}^\infty Y_i I_i| \pmb{B}=1^{n-k}0^*) \nonumber \\
& = & E(Y_1I_1)+E(\sum_{i=2}^{n-k+1} Y_i I_i| \pmb{B}=1^{n-k}0^*)  \nonumber \\
& & + E(\sum_{i=n-k+2}^\infty Y_i I_i| \pmb{B}=1^{n-k}0^*)\nonumber \\
&\stackrel{(a)}{=} & \underbrace{E(Y) + E(Y|Y<Z)\sum_{i=1}^{n-k}P(N_t \geq i | \pmb{B}=1^{n-k}0^*)}_{A}  \nonumber \\
& & +E(Y|Y\geq Z) \sum_{i=n-k+1}^\infty P(N_t \geq i|\pmb{B}=1^{n-k}0^*), \nonumber \\
\label{eqn-wald}
\end{eqnarray} 
where in (a) we have used the independence of $Y_i$ and $I_i$.
The left hand side in (\ref{eqn-wald}) is larger than $t$ by the mean residual time $\Delta:=E(X_{N_t + 1}-t|\pmb{B})$. Thus
\begin{eqnarray}
{\sum_{i=n-k+1}^\infty P(N_t \geq i|\pmb{B}=1^{n-k}0^*)} 
& = &  \frac{t +\Delta-A}{E(Y|Y \geq Z)}. \nonumber \\
\end{eqnarray}
Upon assuming that $E(Y^2|Y>Z)$ is finite it follows that $E(\Delta|\pmb{B})$ is finite.  Thus for large $t$
\begin{eqnarray}
\lim_{t \to \infty}\frac{1}{t}\sum_{i=n-k+1}^\infty P(N_t \geq i|\pmb{B}=1^{n-k}0^*) = \frac{1}{E(Y|Y \geq Z)}.
\end{eqnarray}
Since $E(Y|Y\geq Z) \to E(Y)$ as $G \to 0$ we obtain 
\begin{equation}
\boxed{\lim_{\stackrel{G \to 0, t \to \infty}{ tG^{n-k} \to 0}} \frac{  P({\mathcal D}_t)/t }{(G/n)^{(n-k)} } = \frac{(n-1)!}{(k-1)!} \frac{1}{E(Y)}.}
\label{eq:LimitingForm}
\end{equation}

Equation \eqref{eq:LimitingForm} leads to an \emph{interesting and useful} qualitative conclusion about worst case repair duration distributions for the  probability of data loss.  If $G$ is sufficiently small, the approximation $  P({\mathcal D}_t) \approx \frac{(n-1)!}{(k-1)!} \frac{t}{E(Y)}
 {(G/n)^{(n-k)} }$ becomes sharp. If, in addition, we assume that $F_Y(y)$ is \textit{convex} (which holds for practical failure distributions such as exponential and a subset of Weibull distributions), by Jensen's inequality we have $G = P(Y \leq Z) \leq P(Y \leq E[Z])$, and thus
\begin{equation}  P({\mathcal D}_t) \lessapprox \frac{(n-1)!}{(k-1)!} \frac{t}{E(Y)} (P(Y \leq E[Z])/n)^{(n-k)}.
\end{equation}
This means that in the limiting regime, the highest probability of data loss over a large class of failure distributions is when the time of repair is \textit{constant}. 
\section{Probability of Data Loss Conditioned on the Number of Failures for Constant Repair Duration}
\label{sec:constant}
We now turn our attention to the probability of data loss conditioned on the number of failures of each disk. The calculations hereafter consider the special yet important case of exponentially distributed failure durations and constant repair duration. We give the reader a glimpse of the main results for the case of a $(2,1)$ erasure correcting code. Analyses for general $(n,k)$ codes are presented in the subsequent subsection.
\subsection{Motivating Example: $(2,1)$ code}
\label{sec:motivating}

Suppose that we have one symbol of information stored in two disks. Let $m_1$ and $m_2$ denote the number of failures of disks $1$ and $2$, respectively. Let $X_{11},\ldots,X_{1m_1}$ and $X_{21},\ldots,X_{2m_2}$ be their random failure instants. By analyzing the failure timeline of both disks, we see that an error event occurs if and only if, for some failure instant $X_{1i}$ of disk $1$ and $X_{2j}$ of disk $2$, we have $|X_{1i} - X_{2j}| \leq t_\rep$.  Alternatively, there is no data loss if the random vector ${{\bm{X}} = (X_{11}, \ldots, X_{1m_1}, X_{21}, \ldots, X_{2m_2})}$ lies in the region:
$$\mathcal{R}^c = \left\{ \bm{x}: 0 \leq x_{1i}, x_{2j} \leq t,  |x_{1i} - x_{2j}| > t_{\rep}, \forall i,j \right\}.$$
The probability that $\bm{X} \in \mathcal{R}^c$ an be calculated exactly, as outlined next. 
Consider the permutation $\bm{\pi}$ on the set $\{1,2,\ldots,s\}$, $s=m_{1}+m_{2}$, which sorts $\bm{X}$ in ascending order. A corresponding failure pattern $\bm{f}$ is obtained by applying $\bm{\pi}$ to the vector $(1^{m_{1}}2^{m_{2}})$. Given a permutation $\bm{\pi}$, a  \emph{transition} $(i,i+1)$ is defined as a pair of consecutive positions of the failure pattern for which $f_{i} \neq f_{i+1}$, i.e. a transition identifies consecutive failure instants that correspond to distinct disks.  Let $\xi(\bm\pi)$ denote the number of transitions for a given permutation $\bm\pi$.\begin{prop} The probability of data loss $P_{\bm{m}}(\mathcal{D}_t)$ of a $(2,1)$-code given $\bm{m}=(m_{1},m_{2})$, $m_{i}$ the number of failures for disk $i$, is given by
\begin{equation}
P_{\bm{m}}(\mathcal{D}_t) = 1 - \sum_{j=1}^{s-1}  (1- j t_\rep/t)^{s} Pr(\xi(\bm\pi)=j),
\end{equation}
where $s=m_{1}+m_{2}$.
\end{prop}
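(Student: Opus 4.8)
The plan is to reduce the statement to a question about iid uniform order statistics and then to a volume computation on a simplex. First I would use the assumption (in force in this section) that failure durations are exponential: conditioned on disk $i$ having exactly $m_i$ failures in $[0,t]$, its failure instants are distributed as $m_i$ iid $\mathrm{Unif}[0,t]$ variables, and the two disks are independent. Hence $\bm{X}=(X_{11},\ldots,X_{1m_1},X_{21},\ldots,X_{2m_2})$ is a vector of $s=m_1+m_2$ iid $\mathrm{Unif}[0,t]$ coordinates, the first $m_1$ labelled $1$ and the last $m_2$ labelled $2$. By the error-event characterization stated just before the proposition, $\mathcal{D}_t$ \emph{fails} to occur iff $\bm X\in\mathcal{R}^c$, i.e. iff $|X_{1i}-X_{2j}|>t_\rep$ for all $i,j$; I may assume $m_1,m_2\ge 1$, as otherwise $P_{\bm m}(\mathcal D_t)=0$ trivially.

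Next I would translate this all-pairs condition into a condition on consecutive order statistics. Let $X_{(1)}<\cdots<X_{(s)}$ be the order statistics, with $\bm\pi$ the sorting permutation and $\bm f$ the induced failure pattern as in the statement. The claim is that no data loss holds \emph{iff} $X_{(\ell+1)}-X_{(\ell)}>t_\rep$ at every transition $(\ell,\ell+1)$ of $\bm f$. The forward implication is immediate, since a transition pair is a pair of failure instants from different disks. For the converse, given any $p<q$ with $f_p\neq f_q$ the label must change somewhere strictly between positions $p$ and $q$, so there is a transition $(\ell,\ell+1)$ with $p\le\ell<\ell+1\le q$, and then $X_{(q)}-X_{(p)}\ge X_{(\ell+1)}-X_{(\ell)}>t_\rep$ by telescoping. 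Thus the event of interest depends on $\bm X$ only through the $\xi(\bm\pi)$ spacings located at transitions.

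Now I would condition on $\bm\pi$ (equivalently, on $\bm f$ and hence on the positions of the transitions). Writing $G_0=X_{(1)}$, $G_\ell=X_{(\ell+1)}-X_{(\ell)}$ for $1\le\ell\le s-1$, and $G_s=t-X_{(s)}$, the classical fact that for iid continuous variables the rank permutation is independent of the order statistics gives that $\bm\pi$ is independent of $(G_0,\ldots,G_s)$. Hence $\Pr(\text{no data loss}\mid\bm\pi)$ equals the probability that a \emph{fixed} set of $j:=\xi(\bm\pi)$ internal spacings each exceed $t_\rep$. Since $(G_0,\ldots,G_s)/t$ is uniform on the standard $s$-simplex, the affine change of variables subtracting $t_\rep$ from each of the $j$ chosen coordinates shows this probability is $(1-jt_\rep/t)^s$ when $jt_\rep\le t$ and $0$ otherwise; in particular it depends on $\bm\pi$ only through $j$.

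Finally, averaging over $\bm\pi$ yields $\Pr(\text{no data loss})=\sum_j (1-jt_\rep/t)^s\,\Pr(\xi(\bm\pi)=j)$. Because $m_1,m_2\ge 1$ forces at least one transition and at most $s-1$ transitions are possible, $\Pr(\xi(\bm\pi)=j)=0$ outside $1\le j\le s-1$, so the sum may be restricted to that range; then $P_{\bm m}(\mathcal D_t)=1-\Pr(\text{no data loss})$ gives the claim. I expect the main obstacle to be the equivalence in the second paragraph, namely making rigorous that only the transition spacings are binding (the telescoping argument) together with the independence of $\bm\pi$ and the spacing vector invoked in the third paragraph; the simplex volume evaluation itself is routine.
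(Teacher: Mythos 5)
Your proposal is correct and follows essentially the same route as the paper: condition on the sorting permutation $\bm\pi$, observe that the conditional no-data-loss probability is a (normalized) simplex volume depending only on the number of transitions $\xi(\bm\pi)$, evaluate it as $(1-j t_\rep/t)^s$, and average over $\bm\pi$. You additionally spell out two steps the paper leaves implicit or defers to Section~\ref{sec:ErrorRegions} and the Appendix---the telescoping argument reducing the all-pairs separation condition to separation at consecutive transitions, and the simplex-volume evaluation via the spacing vector $(G_0,\ldots,G_s)$---but the decomposition and the key volume fact are the same.
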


\begin{proof}
\begin{eqnarray}
P_{\bm{m}}(\mathcal{D}_t^{c}) & = & \sum_{\bm\pi} P_{\bm{m}}(\mathcal{D}_t^{c}| \bm\pi) Pr(\bm\pi) \nonumber\\
&=&  \sum_{j}\sum_{\bm\pi~:\xi(\bm\pi)=j} P_{\bm{m}}(\mathcal{D}_t^{c}| \bm\pi) Pr(\bm\pi) \nonumber \\
& \stackrel{(a)}{=} & \sum_{j}\sum_{\bm\pi~:\xi(\bm\pi)=j} (1-j \trep/t)^{s} Pr(\bm\pi) \nonumber \\ &=&  \sum_{j} (1-j \trep/t)^{s} Pr(\xi(\bm\pi)=j).
\end{eqnarray}
In (a) we have used the fact that $P_{\bm{m}}(\mathcal{D}_t^{c}| \bm\pi)=\vol {\mathcal R}_{\bm{\pi}}^{c}$, where 
\begin{equation}
\begin{split}
{\mathcal R}^{c}_{\bm{\bm\pi}} = & \left\{(x_{1},x_{2},\ldots,x_{s})~:~0 \leq x_{1} \leq x_{2} \leq \ldots \leq x_{s} \leq t, \right. \\ & \quad \left.  x_{m+1}-x_{m} > \trep  \mbox{ for every transition } (m,m+1) \right\},
\end{split}
\end{equation}
and the fact that $\vol {\mathcal R}_{\bm{\pi}}^{c}=(1-j \trep/t)^{s}$, when $\xi(\bm\pi)=j$ as will be shown in Sec~\ref{sec:ErrorRegions} . 
\end{proof}
The following corollary provides the asymptotic behavior when $\trep/t$ is small.
\begin{cor} $\displaystyle \lim_{t_\rep/t \to 0} \frac{P_{\bm{m}}(\mathcal{D}_t)}{t_{\rep}/t} = {2m_1m_2}$.
\label{cor:asymptotic21}
\end{cor}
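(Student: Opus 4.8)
The plan is to substitute the closed form from the Proposition into a first-order Taylor expansion in $\tau:=\trep/t$, and then to identify the leading coefficient with $s$ times the expected number of transitions of a uniformly random ordering of the multiset $\{1^{m_1},2^{m_2}\}$.

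First I would note that for $m_1,m_2\geq 1$ the number of transitions satisfies $\xi(\bm\pi)\in\{1,2,\ldots,s-1\}$, so $\sum_{j=1}^{s-1}\Pr(\xi(\bm\pi)=j)=1$ and the Proposition may be rewritten, after subtracting the constant term, as
\begin{equation}
P_{\bm m}(\mathcal{D}_t)=\sum_{j=1}^{s-1}\left(1-(1-j\tau)^{s}\right)\Pr(\xi(\bm\pi)=j).
\end{equation}
The sum is finite and every $j$ is bounded by $s-1$, so $1-(1-j\tau)^{s}=sj\tau+O(\tau^{2})$ with a remainder uniform in $j$. Dividing by $\tau$ and letting $\tau\to 0$ therefore gives
\begin{equation}
\lim_{\tau\to 0}\frac{P_{\bm m}(\mathcal{D}_t)}{\tau}=s\sum_{j=1}^{s-1}j\,\Pr(\xi(\bm\pi)=j)=s\,E\!\left[\xi(\bm\pi)\right].
\end{equation}

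It then remains to evaluate $E[\xi(\bm\pi)]$. Since $\bm\pi$ is uniform, the associated failure pattern $\bm f$ is a uniformly random arrangement of $m_1$ symbols equal to $1$ and $m_2$ symbols equal to $2$, and $\xi(\bm\pi)=\sum_{i=1}^{s-1}\mathbf{1}[f_i\neq f_{i+1}]$. By linearity of expectation and exchangeability of the coordinates of $\bm f$, $E[\xi(\bm\pi)]=(s-1)\Pr[f_1\neq f_2]$. A direct count gives $\Pr[(f_1,f_2)=(1,2)]=\Pr[(f_1,f_2)=(2,1)]=\binom{s-2}{m_1-1}/\binom{s}{m_1}=\frac{m_1m_2}{s(s-1)}$, so $\Pr[f_1\neq f_2]=\frac{2m_1m_2}{s(s-1)}$ and hence $E[\xi(\bm\pi)]=\frac{2m_1m_2}{s}$. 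Substituting back, $\lim_{\tau\to 0}P_{\bm m}(\mathcal{D}_t)/\tau=s\cdot\frac{2m_1m_2}{s}=2m_1m_2$, which is the claim (recall $\tau=\trep/t$).

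No step here is a genuine obstacle. The only points needing brief justification are that $\xi(\bm\pi)\geq 1$, which licenses the reindexing through $\sum_j\Pr(\xi(\bm\pi)=j)=1$, and that the $O(\tau^2)$ remainder is uniform over the finitely many values of $j$ so that the limit commutes with the sum; the single mildly combinatorial ingredient, the evaluation of $E[\xi(\bm\pi)]$, is dispatched by linearity of expectation. An alternative route that avoids $E[\xi]$ altogether is to differentiate the expression of the Proposition at $\tau=0$ directly, obtaining $P_{\bm m}'(\mathcal{D}_t)\big|_{\tau=0}=s\sum_j j\,\Pr(\xi(\bm\pi)=j)$, and then proceed as above.
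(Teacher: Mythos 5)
Your proof is correct. The limit calculation itself is equivalent to the paper's: the paper expands $(1-j\tau)^s$ by the binomial theorem inside the sum, implicitly using $\sum_j\Pr(\xi(\bm\pi)=j)=1$ (which requires $\xi\geq 1$, a point you make explicit), and arrives at the same quantity $s\sum_j j\,\Pr(\xi(\bm\pi)=j)=sE[\xi(\bm\pi)]$. Where you genuinely diverge is in evaluating $E[\xi(\bm\pi)]$. The paper does not compute it on the spot; it defers to Theorem~\ref{thm:AveragesUgly}, the general formula for the expected number of tight clusters of length $l$ in a failure pattern for an arbitrary $(n,k)$ code, specialized to $l=n-k=1$. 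You instead give a short, self-contained argument: write $\xi(\bm\pi)=\sum_{i=1}^{s-1}\mathbf{1}[f_i\neq f_{i+1}]$, use linearity of expectation and exchangeability to reduce to $(s-1)\Pr[f_1\neq f_2]$, and compute $\Pr[f_1\neq f_2]=2m_1m_2/(s(s-1))$ by a hypergeometric count. Both routes give $E[\xi(\bm\pi)]=2m_1m_2/s$. The paper's route exhibits the corollary as an instance of the general cluster-counting machinery that is needed later for arbitrary $(n,k)$; your route is more elementary, avoids the (somewhat involved) proof of Theorem~\ref{thm:AveragesUgly} entirely, and is arguably the cleaner choice if one only needs the $(2,1)$ case.
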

\begin{proof} We have 
\begin{equation*}\begin{split}\displaystyle & \lim_{t_\rep/t \to 0} \frac{P_{\bm{m}}(\mathcal{D}_t)}{t_{\rep}/t} = \\
& = \lim_{t_\rep/t \to 0} \sum_{j=1}^{s-1} \sum_{i=1}^{s} {s \choose i} \frac{(-1)^{i+1} j^i(t_\rep/t)^i Pr(\xi(\pi)=j)}{t_\rep /t} \\
& = {s}\sum_{j=1}^{s-1} j Pr(\xi(\pi)=j).
\end{split}
\end{equation*}
The summation in the last term---the average number of transitions in a permutation---is shown to be equal to $2m_1 m_2/s$ in  Thm.~\ref{thm:AveragesUgly} in Sec.~\ref{sec:closedForm}.
\end{proof}


\subsection{The Reliability of $(n,k)$ MDS Codes: Direct Approach}
\label{sec:closedForm}
To state the probability of data loss of an $(n,k)$ code we need some initial definitions.
Let $X_{i1},\ldots,X_{im_i}$ be the random failure instants of disk $i$ and let
\begin{equation}
\bm{X}=(X_{11},\ldots,X_{1m_{1}},X_{21},\ldots,X_{2m_{2}},\ldots,X_{n1},\ldots,X_{nm_{n}}).
\label{eq:bigVector}
\end{equation}
Denote the total number of disk failures in $[0,t]$ by  $s:=\sum_{i=1}^nm_i$.  Given a sample $\bm{x}$ drawn from the distribution of $\bm{X}$, we define the \textit{failure pattern} $\bm{f} $ as the vector obtained by applying the permutation which sorts $\bm{x}$ in ascending order to  $(1^{m_{1}}2^{m_{2}}\ldots n^{m_{n}})$ (the ties are broken arbitrarily and associated to events with zero probability). Note that the number of possible orderings of $\bm{x}$, $s!$, is the number of possible failure patterns $\bm{f}$ times $m_1!\ldots m_n!$. For example, the failure pattern for Fig.~\ref{fig:ErrorEventParallel} would be $(1,2,3,4,1,2,1)$.

%

Let $b \geq a \geq 1$ be integers. We denote by $[a,b]_{\mathbb{N}}$ the integer interval $\{i~\in \mathbb{N}:~a \leq i \leq b\}$, define its length to be $b-a$, and make the following definitions:

\begin{definition} {\bf Cluster $[a,b]_{\mathbb{N}}$}: An interval $[a,b]_{\mathbb{N}}$ such that $\left\{ f(i), a \leq i \leq b\right\}$ contains exactly $n-k+1$ distinct entries. The {\bf length} of a cluster is the length of the interval $[a,b]_{\mathbb{N}}$.\end{definition}

\begin{definition} {\bf Tight Cluster}: A cluster that does not contain a cluster of shorter length. \label{def:tightCluster}\end{definition}
\noindent Note that a \textit{transition} (in the sense of Section \ref{sec:motivating}) corresponds to a tight cluster  for a $(2,1)$ code, which by definition is of length $2$. 
\begin{definition} {\bf Minimal Cluster}: A cluster of length $n-k$. \end{definition}
\noindent
A minimal cluster is tight, but not every tight cluster is minimal. Furthermore, a cluster $[a,b]_{\mathbb{N}}$ is \emph{tight} if and only if $f(a)$ and $f(b)$ are distinct, and $\{f(i)~:~a<i<b\}$ has exactly $n-k-1$ distinct entries which are distinct from $f(a)$ and $f(b)$.
\begin{exe}\label{ex:failurePattern} Consider the failure pattern $(1,2,3,4,1,1,2,1)$ for a $(4,2)$ code. In this case, $[1,3]_{\mathbb{N}}$,$[2,4]_{\mathbb{N}}$, $[3,5]_{\mathbb{N}}$, $[3,6]_{\mathbb{N}}$, $[4,7]_{\mathbb{N}}$ and $[4,8]_{\mathbb{N}}$ are clusters. All but $[3,6]_{\mathbb{N}}$ and $[4,8]_{\mathbb{N}}$ are tight clusters, while $[1,3]_{\mathbb{N}}$,$[2,4]_{\mathbb{N}}$, and $[3,5]_{\mathbb{N}}$ are minimal clusters.
\end{exe}

Tight clusters correspond to critical sucessive failures that may cause data loss.
\begin{definition}
Let $\bm{b} = (b_1,\ldots, b_{l}), l \leq s-1,$ be a binary vector. The {\bf restriction} of $\bm{b}$ to an interval $[u,v]_{\mathbb{N}}$ is $\bm{b}([u,v]_{\mathbb{N}})=(b_u,\ldots,b_{v-1})$.
\end{definition}
\begin{definition}\label{def:errorRegion} {\bf Region associated with $\bm{b}$}
$$\mathcal{R}_{\bm{b}} = \left\{ (x_1,\ldots,x_s): \begin{array}{c} 0 \leq x_1 \leq \ldots \leq x_s \leq t \\
x_{i+1} - x_i < t_{\rep} \mbox{ if } b_i = 1 \\
x_{i+1} - x_i \geq t_{\rep} \mbox{ if } b_i = 0\end{array} \right\}.$$
\end{definition}
\begin{rmk}
Often some of the successive differences are unconstrained. For example, if $x_{2}-x_{1}> \trep$, and $x_{5}-x_{4} < \trep$ and $s=6$, then $\bm{b}$ should be written as $0**1*$, where $*$ in position $i$ indicates that no constraint is imposed between $x_{i+1}$ and $x_{i}$. As we will see later, as far as volume calculations are concerned nothing is lost  by considering $\bm{b}$ to be $01$, i.e. omitting the $*$'s and
writing $\bm{b}$ as $0^{i}1^{j}$ where $i$ is the number of $\geq$ constraints and $j$ is the number of $<$ constraints.
\end{rmk}
\begin{definition}{\bf Fundamental Simplex $\mathcal S$}: $\{\bm{x}: 0\leq x_1\leq x_2\leq \ldots \leq x_{s}\leq t \}$.
\end{definition}
\begin{definition}{\bf Volume Polynomial}. Given a subregion $\hat{\mathcal S}$ of the fundamental simplex $\mathcal S$, we define volume polynomial $v(\rho)=(s!/\trep^{n} )\vol \hat{\mathcal S} $, where $\rho:=t/\trep$. If $\hat{\mathcal S}=\mathcal{R}_{\bm{b}}$, then
we will use the notation $v_{\bm{b}}(\rho)$. As will be seen later, the volume polynomial depends on $\bm{b}$ through the number of constraints. Thus if $\bm{b}$ contains $i$ zeros and $j$ ones, corresponding to  $i+j$ constraints, we will write $v_{ij}(\rho)$ interchangeably with $v_{\bm{b}}(\rho)$.
\end{definition}
\subsection{Characterization of data loss event}
\label{sec:characterization}
When there are no consecutive repeated elements in $\bm{f}$, we consider that data loss occurs if there is an ordered sequence of failures $x_i,\ldots,x_{i+n-k+1}$ from $n-k+1$ different disks such that $x_{j+1}-x_j < t_\rep$, for all $j = i, \ldots i+n-k+1$. When there is at least one repeated number in the failure pattern (for example $\bm{f} = (1,1,2,2,4,4,3)$) we assume that there is a data loss event if there exists an ordered sequence $(x_{i},\ldots,x_{i+l})$ from more than $n-k+1$ disks such that $x_{j+1}-x_j < t_\rep$. 

We have two equivalent characterizations of an error event, given a failure pattern $\bm{f}$:
\begin{itemize}
\item[(i)] A binary vector $\bm{b}$ is a {\bf no-error vector} if the restriction of $\bm{b}$ to \emph{every} tight cluster  of $\bm{f}$ has weight at most $(l-1)$, where $l$ is the length of that tight cluster. 
\item[(ii)] The vector $\bm{b}$ is an {\bf error vector} if its restriction to \emph{at least} one tight cluster of length $l$ has weight $l$. 
\end{itemize}
Let us call $B_{\bm{f}}$ the set of all error vectors $\bm{b}$ for a given failure pattern $\bm{f}$. 
\begin{exe}
Consider a $(4,2)$ MDS code with  $\bm{m}=(2,2,1,1)$ and suppose the failure pattern is $121234$. Then ${\mathcal B}_f$ consists of the error vectors $**110$, $**011$ and $**111$. Following our convention of dropping the $*$'s and writing $\bm{b}$ as $0^i1^j$ we write ${\mathcal B}_f=\{2(0^11^2),0^01^3 \}$.
\end{exe}
From simple observations, one can find the following expression for $P_{\bm{m}}(\mathcal{D}_t)$.
\begin{thm} The probability of data loss satisfies
\begin{equation}P_{\bm{m}}(\mathcal{D}_t) =
\frac{1}{\rho^{s}{s \choose m_{1},m_{2},\ldots,m_{s}}}\sum_{\bm{f}} \sum_{\bm{b} \in B_{\bm{f}}}v_{\bm{b}}(\rho).
\label{eqn:thmMain}
\end{equation}
\label{thm:thmMain}
\end{thm}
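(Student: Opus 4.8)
\emph{Proof strategy.} The plan is to prove \eqref{eqn:thmMain} by the law of total probability, conditioning on the failure pattern $\bm f$:
\[
P_{\bm m}(\mathcal D_t)=\sum_{\bm f} P_{\bm m}(\mathcal D_t\mid \bm f)\,P(\bm f).
\]
First I would reduce the distribution of the failure instants to a uniform one: since the system failure durations are exponential and every failure receives an independent uniform disk label, the per-disk failure processes are independent Poisson processes, so conditioned on $\bm m=(m_1,\ldots,m_n)$ the $m_i$ failure instants of disk $i$ are i.i.d.\ uniform on $[0,t]$, independently across disks, and hence $\bm X$ in \eqref{eq:bigVector} is uniform on $[0,t]^{s}$. (Nothing is lost by ignoring the internal order within a disk, because $\mathcal D_t$ depends on $\bm X$ only through the per-disk \emph{sets} of failure times.) The permutation sorting $\bm X$ is then uniform over the $s!$ orderings, ties have probability zero, and exactly $m_1!\cdots m_n!$ orderings produce any prescribed pattern $\bm f$; hence $P(\bm f)=m_1!\cdots m_n!/s!=1/\binom{s}{m_1,\ldots,m_n}$, the same for every $\bm f$, and conditioned on the sorting permutation (a fortiori on the coarser event that the pattern equals $\bm f$) the ordered failure instants $(x_1\le\cdots\le x_s)$ are uniformly distributed on the fundamental simplex $\mathcal S$.

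For the conditional probability I would use the characterization of Sec.~\ref{sec:characterization}: given $\bm f$, data loss occurs exactly when the binary vector $\bm b$ recording which successive gaps of $(x_1,\ldots,x_s)$ are smaller than $t_\rep$ belongs to $B_{\bm f}$. Since for almost every point of $\mathcal S$ no successive gap equals $t_\rep$, the regions $\mathcal R_{\bm b}$ of Definition~\ref{def:errorRegion} partition $\mathcal S$ up to a null set, so $\bigcup_{\bm b\in B_{\bm f}}\mathcal R_{\bm b}$ is a \emph{disjoint} union and
\[
P_{\bm m}(\mathcal D_t\mid\bm f)=\sum_{\bm b\in B_{\bm f}}\vol \mathcal R_{\bm b}=\frac{1}{\rho^{s}}\sum_{\bm b\in B_{\bm f}}v_{\bm b}(\rho),
\]
the first equality exactly as in the proof of the Proposition above, the second by the definition of the volume polynomial $v_{\bm b}$ (whose closed form is derived in Sec.~\ref{sec:ErrorRegions}). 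Substituting this and $P(\bm f)=1/\binom{s}{m_1,\ldots,m_n}$ into the total-probability sum, and taking the common factors outside the double sum, gives \eqref{eqn:thmMain}.

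The two facts I am borrowing --- that characterizations (i) and (ii) of Sec.~\ref{sec:characterization} agree, and that $v_{\bm b}(\rho)=v_{ij}(\rho)$ depends on $\bm b$ only through the numbers $i,j$ of $\ge$- and $<$-constraints --- are established in Secs.~\ref{sec:characterization} and~\ref{sec:ErrorRegions} respectively, so that, granting them, the theorem is just this conditioning argument. The step that needs the most care is the treatment of the positions of $\bm b$ that no tight cluster constrains (the ``$*$'' entries of the Remark after Definition~\ref{def:errorRegion}): one must check that replacing $B_{\bm f}$ by the shorthand family of error vectors written $0^i1^j$, and reading $v_{\bm b}$ as the corresponding $v_{ij}(\rho)$, reproduces the full-binary sum exactly --- which holds because $\vol$ is additive over all binary completions of a partially specified constraint vector and the completed regions are pairwise disjoint, so that each portion of the data-loss region is counted once and only once. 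This is precisely the reduction illustrated by the $(4,2)$-code, $\bm m=(2,2,1,1)$, pattern $121234$ example that precedes the theorem.
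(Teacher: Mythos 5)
Your proof follows the same conditioning-on-the-failure-pattern argument as the paper's: total probability over $\bm f$ with $P(\bm f)=1/\binom{s}{m_1,\ldots,m_n}$, then the conditional data loss probability as a disjoint-union volume (normalized to $1/\rho^s\cdot v_{\bm b}(\rho)$) summed over $\bm b\in B_{\bm f}$. It is correct, and in fact fills in two details the paper states tersely — the reduction to i.i.d.\ uniform failure instants conditioned on $\bm m$, and the justification that dropping the ``$*$'' entries and working with $v_{ij}(\rho)$ counts each piece of the data-loss region exactly once.
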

\begin{proof} Let $\bm{\hat{X}}$ be the random vector associated to the ordered failure times. 
Let $P(\bm{f})$ be the probability that $\hat{\bm{X}}$ has pattern $\bm{f}$.
\begin{equation*}
\begin{split}
P_{\bm{m}}(\mathcal{D}_t) &= \sum_{\bm{f}} P_{\bm{m}}(\mathcal{D}_t | \bm{f}) P(\bm{f}) \\ &= {s \choose m_1,\ldots,m_n}^{-1}\sum_{\bm{f}} P_{\bm{m}}(\mathcal{D}_t | \bm{f}) \\ &\stackrel{(a)}{=} {s \choose m_1,\ldots,m_n}^{-1}\sum_{\bm{f}} \sum_{\bm{b} \in B_{\bm{f}}} P_{\bm{m}}(\bm{\hat{X}} \in \mathcal{R}_{\bm{b}}) \\ & \stackrel{(b)}{=} \frac{m_1!m_2! \ldots m_n!}{t^s} \sum_{\bm{f}} \sum_{\bm{b} \in B_{\bm{f}}} \vol{\mathcal{R}_b}
\end{split}
\end{equation*}
where (a) is due to the characterization of a data loss event, given $\bm{f}$, and (b) follows from the fact that the set of ordered vectors $\hat{X}$ has volume $t^s/s!$.
\end{proof}

Thus, to give explicit forms for $P_{\bm{m}}(\mathcal{D}_t)$, we need two elements
\begin{itemize}
\item[(i)] Computations of the volume of the error regions $\mathcal{R}_b$, or equivalently, computation of the volume polynomial $v_{\bm{b}}(\rho)$.
\item[(ii)] Enumeration of the set of error vectors $B_{\bm{f}}$.
\end{itemize}
The volume computation is addressed in Sec.~\ref{sec:ErrorRegions}. We address the problem of enumerating the error vectors in this section and use Thm.~\ref{thm:volPol}, Sec.~\ref{sec:ErrorRegions} in order obtain the asymptotic behavior of $P_{m}(\mathcal{D}_t)$ as $t_{\rep}/t \to 0$.

Thm. \ref{thm:volPol}, Sec.~\ref{sec:ErrorRegions} gives a formula for computing $v_{\bm{b}}(\rho)$.
In particular, it shows that if  $\bm{b}$ is some permutation of $0^i1^j$, i.e. $w(\bm{b}) = j$, then
\begin{equation}
v_{ij}(\rho) = \frac{s!}{(s-j)!} \rho^{s-j} + O(\rho^{s-j-1}),
\end{equation}
where $s$ is the number of failures in $[0,t]$.
This means that the dominant terms in $P_{\bm{m}}(\mathcal{D}_t)$ are when $w(\bm{b}) = n-k$. In this case
\begin{equation}
v_{i,n-k}(\rho)=\frac{s!}{(s-(n-k))!}\rho^{s-(n-k)}+ O(\rho^{s-(n-k+1)}).
\end{equation}
Note also that dominant terms correspond to minimal failure clusters (i.e., of length $(n-k)$). This characterization suffices to prove the asymptotic behavior of $P_{m}(\mathcal{D}_t)$ as $t_{\rep}/t \to 0$. Let $j_{\bm{f},n-k}$ be the number of minimal failure clusters in $\bm{f}$. We have
\begin{equation}
\begin{split}
P_{\bm{m}}(D_t) &= \frac{s!}{(s-(n-k))!}\rho^{-(n-k)} \sum_{\bm{f}} \frac{j_{\bm{f},n-k}}{{s \choose m_1,m_2,\ldots,m_n}}+\\
&+O(\rho^{-(n-k+1)})
\end{split}
\end{equation}
Thus
\begin{equation}
\lim_{\rho \to \infty} {P_{\bm{m}}(D_t)}{\rho^{n-k}} = \frac{s!}{(s-(n-k))!} \sum_{\bm{f}} \frac{j_{\bm{f},n-k}}{{s \choose m_1,m_2,\ldots,m_n}}.
\label{eq:asymptoticsPDt}
\end{equation}
As will be shown later in Corollary \ref{cor:AveragesUgly}, 
\begin{equation}\begin{split} &A_{n-k} := \sum_{\bm{f}} \frac{j_{\bm{f}}}{{s \choose m_1,m_2,\ldots,m_n}} = \\
&= \frac{(n-k+1)! (s-(n-k))!}{s!} \sum_{(i_1,\ldots,i_{n-k+1}) \above 0pt \mbox{{\tiny distinct}}}{m_{i_1}\ldots m_{i_{n-k+1}}},
\end{split}
\label{eq:formulaToProve}
\end{equation}
which leads to 
\begin{equation}
\boxed{\begin{split} &\lim_{\rho \to \infty} {P_{\bm{m}}(D_t)}{\rho^{n-k}} = \\ &(n-k+1)!   \sum_{(i_1,\ldots,i_{n-k+1}) \above 0pt \mbox{{\tiny distinct}}}{m_{i_1} m_{i_2}\ldots m_{i_{n-k+1}}}. \end{split}}
\label{eq:asymptoticsPDt-final}
\end{equation}

\begin{rmk} The contribution to $P_{\bm{m}}(D_t)$ from data loss events related to non-minimal clusters is negligible in the limit $\trep/t \to 0$. 
\label{rmk:asymptoticallyNegligible}
\end{rmk}

\begin{rmk}A result with very similar flavor was proved in \cite[Ch. 6]{VenkThesis2012}, in spite of the difference between the models.  The approximations in  \cite[Sec. 6.3.2]{VenkThesis2012} show that the dominant term in the mean time to data loss is due to a ``direct path" of failures from $n-k+1$ different disks. This is completely analogous to the fact that the dominant term in $P_{\bm{m}}(D_t)$ is due to minimal clusters (i.e., to the probability associated to a succession of failures from \textit{exactly} $n-k+1$ disks).
\end{rmk}

%

\subsection{Upper Bounding the Error Term}
By enumerating all failure patterns, we calculate $P_{\bm{m}}(\mathcal{D}_t)$ explicitly. However, combinatorial upper bounds for the error terms may be useful. We derive an asymptotically optimal bound in this subsection.

Given a failure pattern $\bm{f}$, there is an error if the restriction  of the vector $\bm{b}$ to at least one tight cluster of length $l$ has weight $l$ (see characterization (ii) at the start of Sec. \ref{sec:characterization}). Let $I_1,\ldots,I_p$ be the tight clusters of $\bm{f}$ ($I_j = [a_j,b_j]_{\mathbb{N}}$). Let $l_j$ be the length of the $j$-th tight cluster.
\begin{equation}\begin{split} P_m(\mathcal{D}_t | f) &= P\left(b(I_1) = {1}^{l_1} \mbox{ or } b(I_2) = {1}^{l_2} \mbox{ or }\ldots b(I_p) = {1}^{l_p}\right) \\ &\leq \sum_{j=1}^p P\left(b(I_j) = 1^{l_j}\right)=\frac{1}{t^s}\sum_{j=1}^p \vol R_{1^{l_j}} \\ &= \sum_{l=n-k}^s j_{\bm{f},l} \vol R_{1^{l}},\end{split}\end{equation}
where we define $ j_{\bm{f},l} $ to be the number of tight clusters of length $l$ in $\bm{f}$. From the above inequality:

\begin{equation}\begin{split} P_m(\mathcal{D}_t) & \leq \frac{m_1!\ldots m_n!}{t^s} \sum_{\bm{f}} \sum_{l=n-k}^s j_{\bm{f},l} \vol R_{1^{l}} \\ &= \frac{m_1!\ldots m_n!}{t^s}  \sum_{l=n-k}^s \sum_{\bm{f}} j_{\bm{f},l} \vol R_{1^{l}} \\ &= \frac{s!}{t^s} \sum_{l=n-k}^s \vol R_{1^{l}} \underbrace{\left(\sum_{\bm{f}} j_{\bm{f},l}/{ s \choose {m_1,\ldots,m_n}}\right)}_{:= A_{l}}. \end{split}\end{equation}
But $A_{l}$ is the average number of tight clusters of length $l$. Also note that $l = n-k$ is the dominant term. Hence this upper bound collapses with exact calculation for vanishing time of repair.

The following theorem gives a closed form expression for $A_{l}$.


\begin{thm} Let $\mathcal{I}_{n-k+1}$ be the set of all $(n-k+1)$-tuples of distinct numbers $(i_1,\ldots,i_{n-k+1})$, $1 \leq i_j \leq n$.

\begin{equation}\begin{split} A_{l} = &(s-l){ s \choose l+1}^{-1} \times \\& \sum_{(i_1,\ldots i_{n-k+1}) \in \mathcal{I}_{n-k+1}} \sum_{{q_i \geq 1} \above 0pt \sum q_i = l-1} m_{i_1} m_{i_{n-k+1}} \prod_{j=2}^{n-k-1}{ m_{i_j} \choose {q_j}}. 
\end{split}\end{equation}
\label{thm:AveragesUgly}
\end{thm}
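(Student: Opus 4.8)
The plan is to compute $A_l = \sum_{\bm f} j_{\bm f,l}/\binom{s}{m_1,\ldots,m_n}$ by linearity of expectation over a uniformly random failure pattern $\bm f$. Concretely, pick the ordered failure times and look at the induced pattern; equivalently, $\bm f$ is a uniformly random arrangement of the multiset $\{1^{m_1},\ldots,n^{m_n}\}$, and $A_l = \mathbb{E}[\,j_{\bm f,l}\,]$ is the expected number of tight clusters of length $l$. First I would fix a candidate window — a set of $l+1$ consecutive positions $[a,a+l]_{\mathbb N}$ in the pattern — and compute the probability that this window is a tight cluster. By the tightness characterization recalled just before Example \ref{ex:failurePattern}, $[a,a+l]_{\mathbb N}$ is a tight cluster exactly when $f(a)$ and $f(a+l)$ are distinct, the interior $l-1$ positions carry exactly $n-k-1$ further distinct disk labels all different from $f(a),f(a+l)$, and the endpoints $f(a),f(a+l)$ do \emph{not} appear in the interior. (This last point — endpoints absent from the interior — is what makes the cluster \emph{tight} rather than merely a cluster; I would be careful to invoke exactly this.)

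Next I would count, by a symmetry/exchangeability argument, the number of labelled configurations realizing such a window. The key fact is that for a uniformly random arrangement, the probability that a fixed set of $l+1$ positions receives a prescribed \emph{ordered} tuple of labels (with multiplicities) depends only on the multiset of labels, via the hypergeometric-type ratio of falling factorials; and once we know which positions outside the window there is further freedom, those cancel against the overall $\binom{s}{m_1,\ldots,m_n}$. So I would sum over the choice of the $(n-k+1)$ distinct disks $(i_1,\ldots,i_{n-k+1})$ — say $i_1$ at the left endpoint, $i_{n-k+1}$ at the right endpoint, and $i_2,\ldots,i_{n-k}$ the interior labels — over the composition $(q_2,\ldots,q_{n-k})$ with $q_j\geq 1$, $\sum q_j = l-1$ giving how many interior slots each interior disk occupies, and over the arrangement of those interior slots. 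This produces the factors $m_{i_1}$ (one failure of disk $i_1$ used at the left), $m_{i_{n-k+1}}$ (one at the right), $\binom{m_{i_j}}{q_j}$ for each interior disk, a multinomial $\binom{l-1}{q_2,\ldots,q_{n-k}}$ for arranging the interior, and the number $(s-l)$ of choices of the starting position $a$ (windows $[1,l+1]_{\mathbb N},\ldots,[s-l,s]_{\mathbb N}$). Dividing the product of these local counts by $\binom{s}{l+1}$ — the normalization coming from the hypergeometric weight of the window relative to the whole arrangement — should give exactly the claimed formula, with the interior multinomial absorbed as indicated. I would double-check the $n-k=1$ case against the transition count $A_1 = 2m_1m_2/s$ from Corollary \ref{cor:asymptotic21}, where the formula collapses (empty interior, $l=1$, $(s-1)\binom{s}{2}^{-1}m_{i_1}m_{i_2}$ summed over ordered distinct pairs).

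The main obstacle I anticipate is getting the normalization and the interior multinomial factor exactly right: one must be careful that the hypergeometric probability for the ordered window content is $\frac{(s-l-1)!}{s!}\,\bigl(\text{product of }m_{i_j}\text{ falling factorials}\bigr)\cdots$ and that, after multiplying by $\binom{s}{m_1,\ldots,m_n}$ and by the number of ways to arrange disk $i_j$'s $q_j$ chosen failures inside the window, everything telescopes to the stated $(s-l)\binom{s}{l+1}^{-1}\sum\prod$ form — in particular checking that the $\binom{l-1}{q_2,\ldots,q_{n-k}}$ is what turns the $\binom{m_{i_j}}{q_j}$'s (unordered choices) plus interior arrangements into the correct count, and that no overcounting of the same tight cluster occurs. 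A secondary subtlety is handling ties / repeated consecutive labels in $\bm f$ consistently with the convention in Section \ref{sec:characterization}, but since ties occur with probability zero in the underlying continuous model this only affects the combinatorial bookkeeping and can be dispatched by noting that a window with a repeated endpoint label simply fails the tightness test and contributes nothing. Finally, I would record the corollary $A_{n-k}$ (Corollary \ref{cor:AveragesUgly}) as the specialization $l = n-k$, where each $q_j = 1$, the interior multinomial is $(n-k-1)!$, and the formula reduces to $\frac{(n-k+1)!(s-(n-k))!}{s!}\sum_{\text{distinct}} m_{i_1}\cdots m_{i_{n-k+1}}$, matching \eqref{eq:formulaToProve}.
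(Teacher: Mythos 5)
Your approach is the same as the paper's: express $A_l$ by linearity of expectation as $(s-l)\,\Pr\bigl([1,l+1]_{\mathbb N}\text{ is tight}\bigr)$, then count window contents. Where the two diverge is only in the level of detail, and unfortunately the step you defer — ``should give exactly the claimed formula, with the interior multinomial absorbed'' — is precisely where the argument needs care and where the paper's own proof is also terse. You explicitly flag that the hypergeometric weight for a prescribed ordered window content is of the form $\frac{(s-l-1)!}{s!}\prod_i (m_i)_{r_i}$, and that after multiplying by the interior arrangement count $\binom{l-1}{q_2,\ldots,q_{n-k}}$ the $\prod_j q_j!$ factors cancel the falling factorials down to binomials. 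If you actually carry that out, the prefactor you get (summing over \emph{ordered} distinct tuples $(i_1,\ldots,i_{n-k+1})$, with $i_1$ at the left endpoint and $i_{n-k+1}$ at the right, and dividing by the $(n-k-1)!$ relabellings of the interior disks) is
\begin{equation*}
A_l \;=\; (s-l)\,\frac{(l-1)!\,(s-l-1)!}{s!\,(n-k-1)!}\sum_{(i_1,\ldots,i_{n-k+1})\ \mathrm{distinct}}\ \sum_{q_j\geq 1,\ \sum q_j=l-1} m_{i_1} m_{i_{n-k+1}} \prod_{j=2}^{n-k}\binom{m_{i_j}}{q_j},
\end{equation*}
which does \emph{not} coincide with the stated $(s-l)\binom{s}{l+1}^{-1}\sum\cdots$ unless $(l+1)l\,(n-k-1)!=1$. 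Also note the product index in the theorem statement runs to $n-k-1$, which appears to be a typo for $n-k$ (otherwise the interior binomials never appear and the $(4,2)$, $l=3$ case already breaks).

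The sanity check you propose against Corollary~\ref{cor:asymptotic21} is the right instinct, but you assert it ``matches'' without actually computing it. For $n-k=1$, $l=1$, $n=2$, the theorem as written yields $(s-1)\binom{s}{2}^{-1}\cdot 2m_1m_2 = 4m_1m_2/s$, whereas the average number of transitions is $2m_1m_2/s$ (and a direct hand count, say $m_1=m_2=1$, $s=2$, gives $A_1 = 1$, not $2$). The expression above gives $2m_1m_2/s$, and specializing it to $l=n-k$ gives $A_{n-k}=\frac{(s-(n-k))!}{s!}\sum_{\mathrm{ordered\ distinct}}m_{i_1}\cdots m_{i_{n-k+1}}$; comparing with Corollary~\ref{cor:AveragesUgly} shows the $(n-k+1)!$ there must be interpreted as converting the sum over unordered $(n-k+1)$-subsets into the sum over ordered tuples. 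So the concrete gap in your proposal is: you identify the right quantities (hypergeometric prefactor, interior multinomial, $(s-l)$ window positions) but do not carry the telescoping through, and had you done so — including the numerical check you yourself suggest — you would have found the prefactor and product range in the statement need correcting rather than confirming.
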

\begin{proof}
Given a failure pattern $\bm{f}$, let $Y_j$, $j = 1, \ldots, s-l$, be indicator random variables which are $1$ if $[j,{j+l}]_{\mathbb{N}}$ is a tight cluster and $0$ otherwise. We would like to calculate $A_l = E[Y_1+\ldots+Y_{s-l}] = (s-l) E[Y_1]$. But $E[Y_1]$ is the probability that $[1,{l+1}]_{\mathbb{N}}$ is a tight cluster of $\bm{f}$. We use Definition ~\ref{def:tightCluster} (and the corresponding lemma) to calculate this probability. Pick a random pattern $(F_1,\ldots,F_{l+1})$ (there are ${s \choose l+1}$ ways of doing so). A tight cluster is formed by choosing two different numbers for endpoints $F_1$ and $F_{l+1}$ (say $i_1$ and $i_{n-k+1}$), and then choosing $(n-k-1)$ other numbers ($i_2,\ldots,i_{n-k}$) to fill the remaining $(l-1)$ positions. If $l>n-k$, some of the numbers will appear more than once in $f_2,\ldots,f_{l}$. Suppose that $i_j$ appears $q_j$ times (there are ${m_{i_j} \choose q_j}$ ways in which this happens). Since $i_{1}$ and $i_{n-k+1}$ appear only once, the total choices for the pattern are the product between $m_{1} m_{n-k+1}$  and the choices for $i_2,\ldots, i_{n-k}$. Summing over all possible $q_j$ gives us the final answer.
\end{proof}
\begin{cor} $A_{n-k}$ is given by Equation \eqref{eq:formulaToProve}\label{cor:AveragesUgly}
\end{cor}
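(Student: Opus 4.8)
The plan is to obtain the corollary as the $l=n-k$ specialization of Theorem~\ref{thm:AveragesUgly}. The first, purely bookkeeping, step is to note that $j_{\bm f}=j_{\bm f,\,n-k}$ counts minimal clusters, which are exactly the tight clusters of length $n-k$; hence the quantity $A_{n-k}=\sum_{\bm f} j_{\bm f}/\binom{s}{m_1,\ldots,m_n}$ on the left of \eqref{eq:formulaToProve} is the ``average number of tight clusters of length $l$'' of Theorem~\ref{thm:AveragesUgly} evaluated at $l=n-k$. So it suffices to set $l=n-k$ in the closed form of that theorem and simplify.

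Next I would examine the inner sum over the exponents when $l=n-k$. The constraints there are $q_j\ge 1$ for each of the $n-k-1$ interior indices $i_2,\ldots,i_{n-k}$, together with $\sum_j q_j=l-1=n-k-1$. A sum of $n-k-1$ positive integers equal to $n-k-1$ forces $q_j=1$ for every $j$, so that inner sum collapses to a single term and each $\binom{m_{i_j}}{q_j}$ becomes $\binom{m_{i_j}}{1}=m_{i_j}$. The summand $m_{i_1}m_{i_{n-k+1}}\prod_{j=2}^{n-k}\binom{m_{i_j}}{q_j}$ therefore reduces to the full product $m_{i_1}m_{i_2}\cdots m_{i_{n-k+1}}$, and the double sum in Theorem~\ref{thm:AveragesUgly} becomes $(s-(n-k))\binom{s}{n-k+1}^{-1}\sum m_{i_1}\cdots m_{i_{n-k+1}}$ over distinct $(n-k+1)$-tuples. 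This degeneracy is also the content of Remark~\ref{rmk:asymptoticallyNegligible}: at this order only minimal clusters contribute a nondegenerate exponent sum.

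Finally I would perform the elementary cancellation
\[
(s-(n-k))\binom{s}{n-k+1}^{-1}=(s-(n-k))\,\frac{(n-k+1)!\,(s-(n-k+1))!}{s!}=\frac{(n-k+1)!\,(s-(n-k))!}{s!},
\]
using $(s-(n-k))\cdot(s-(n-k+1))!=(s-(n-k))!$. Substituting back gives precisely the right-hand side of \eqref{eq:formulaToProve}, and plugging this into \eqref{eq:asymptoticsPDt} yields the boxed limit \eqref{eq:asymptoticsPDt-final}.

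There is no real obstacle here, since all the work lives in Theorem~\ref{thm:AveragesUgly}; the only care needed is in the index accounting (a minimal cluster has exactly $n-k-1$ interior positions, which is what trivializes the exponent multiset) and in the convention for the outer sum over ``distinct $(n-k+1)$-tuples'' versus unordered $(n-k+1)$-subsets, the two differing by the factor $(n-k+1)!$ already present in the formula. As a sanity check I would specialize to the $(2,1)$ code, where $n-k+1=2$, minimal clusters are transitions, and \eqref{eq:formulaToProve} must reduce to the average number of transitions $2m_1m_2/s$ used in Corollary~\ref{cor:asymptotic21}; this pins down the convention. For a proof that does not route through Theorem~\ref{thm:AveragesUgly}, the same value comes out directly from linearity of expectation on the indicators $Y_j$ of that theorem's proof specialized to length $l=n-k$: by exchangeability $A_{n-k}=(s-(n-k))\,P(F_1,\ldots,F_{n-k+1}\text{ pairwise distinct})$, and this probability equals $\binom{s}{n-k+1}^{-1}$ times the number of $(n-k+1)$-subsets of the $s$ failure instants meeting $n-k+1$ distinct disks, namely $\sum m_{i_1}\cdots m_{i_{n-k+1}}$ over distinct disk subsets.
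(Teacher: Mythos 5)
Your proof is correct and follows exactly the route the paper intends: Corollary~\ref{cor:AveragesUgly} is just Theorem~\ref{thm:AveragesUgly} specialized to $l=n-k$, where the constraint $\sum q_j=n-k-1$ over the $n-k-1$ interior indices forces $q_j\equiv 1$ so the inner sum collapses, and the prefactor identity $(s-(n-k))\binom{s}{n-k+1}^{-1}=(n-k+1)!\,(s-(n-k))!/s!$ finishes it. Note that in doing so you have (correctly) read the product in the theorem's displayed formula as $\prod_{j=2}^{n-k}$, matching the $n-k-1$ interior indices $i_2,\ldots,i_{n-k}$ described in its proof; the printed upper limit $n-k-1$ there appears to be a typo.
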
 

We now estimate the probability of data loss (or equivalently, the reliability) of an erasure coded storage system with Poisson failures.

\begin{thm} For Poisson distributed failures with rate parameter $\lambda$ and constant repair time $\trep$
\begin{equation}\displaystyle \lim_{t_\rep\to 0} \frac{P(\mathcal{D}_t)}{t_\rep^{n-k}} = \frac{n!}{(k-1)!} \lambda^{n-k+1} t
\label{eqn:asymptoticConst}
\end{equation}
\label{thm:asymptoticBehavior}
\end{thm}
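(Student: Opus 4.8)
The plan is to obtain \eqref{eqn:asymptoticConst} by averaging the conditional result \eqref{eq:asymptoticsPDt-final} over the random multiplicities $\bm m = (m_1,\dots,m_n)$, exploiting the fact that for a Poisson process of rate $\lambda$ on $[0,t]$ with uniform i.i.d.\ disk labels, the per-disk failure counts $m_i$ are i.i.d.\ Poisson with parameter $\lambda t / n$. First I would write $P(\mathcal D_t) = E_{\bm m}\!\left[ P_{\bm m}(\mathcal D_t)\right]$, and substitute the asymptotic expansion $P_{\bm m}(\mathcal D_t) = \trep^{n-k} \rho^{n-k} P_{\bm m}(\mathcal D_t) \big/ t^{\,?}$ — more carefully, from \eqref{eq:asymptoticsPDt-final} we have, for fixed $\bm m$,
\begin{equation*}
P_{\bm m}(\mathcal D_t) = \trep^{n-k}\, t^{-(n-k)} (n-k+1)! \sum_{(i_1,\dots,i_{n-k+1})\ \mathrm{distinct}} m_{i_1}\cdots m_{i_{n-k+1}} + o(\trep^{n-k}),
\end{equation*}
since $\rho = t/\trep$. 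Dividing by $\trep^{n-k}$ and letting $\trep \to 0$ with $t$ fixed, the plan is to pass the limit inside the expectation over $\bm m$ (justified below), obtaining
\begin{equation*}
\lim_{\trep\to 0}\frac{P(\mathcal D_t)}{\trep^{n-k}} = t^{-(n-k)}(n-k+1)!\; E_{\bm m}\!\left[ \sum_{(i_1,\dots,i_{n-k+1})\ \mathrm{distinct}} m_{i_1}\cdots m_{i_{n-k+1}}\right].
\end{equation*}

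The next step is the combinatorial/probabilistic evaluation of that expectation. Since the $m_i$ are independent, each term $E[m_{i_1}\cdots m_{i_{n-k+1}}]$ over distinct indices factors as $\prod_j E[m_{i_j}] = (\lambda t/n)^{n-k+1}$. The number of ordered $(n-k+1)$-tuples of distinct indices from $\{1,\dots,n\}$ is $n!/(k-1)!$ (that is, $n(n-1)\cdots(k)$, which has $n-k+1$ factors). Hence the expectation equals $\frac{n!}{(k-1)!}\left(\frac{\lambda t}{n}\right)^{n-k+1}$, and plugging in,
\begin{equation*}
\lim_{\trep\to 0}\frac{P(\mathcal D_t)}{\trep^{n-k}} = t^{-(n-k)}(n-k+1)!\cdot \frac{n!}{(k-1)!}\cdot\frac{\lambda^{n-k+1}t^{n-k+1}}{n^{n-k+1}},
\end{equation*}
which should be checked to collapse to $\frac{n!}{(k-1)!}\lambda^{n-k+1} t$; this forces the identity $(n-k+1)!\,/\,n^{n-k+1} = 1$, so the factor $(n-k+1)!$ must actually be absorbed differently — I would recheck against the $(2,1)$ case (Corollary~\ref{cor:asymptotic21} gives $2 m_1 m_2$, and $E[2 m_1 m_2] = 2(\lambda t/2)^2 = \lambda^2 t^2/2$, while the claimed answer is $\frac{2!}{0!}\lambda t = 2\lambda t$), which reveals that the correct conditioning is $\trep \to 0$ at the level of $P_{\bm m}(\mathcal D_t)/\trep^{n-k}$ using $\rho^{n-k} P_{\bm m}(\mathcal D_t)$ meaning $t$ is being sent to a regime tied to $\trep$, not held fixed. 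So the honest route is to re-derive the small-$\trep$ form of $P(\mathcal D_t)$ directly: condition on $N_t = s$ (Poisson with mean $\lambda t$), note $P_{\bm m}(\mathcal D_t) = \Theta(\trep^{n-k})$ with the constant linear in $s$ through the leading minimal-cluster count, and sum the series in $s$ — the expected number of minimal clusters in a random length-$s$ pattern is, by Theorem~\ref{thm:AveragesUgly}/Corollary~\ref{cor:AveragesUgly}, asymptotically $\binom{s}{n-k+1}\cdot\frac{(n-k+1)!}{n^{n-k+1}}\cdot(\text{correction})$, and $E\binom{N_t}{n-k+1} = (\lambda t)^{n-k+1}/(n-k+1)!$.

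The key steps in order: (1) reduce to computing $\lim_{\trep\to0}\trep^{-(n-k)}E_{\bm m}[P_{\bm m}(\mathcal D_t)]$ via \eqref{eq:asymptoticsPDt-final}; (2) justify interchanging limit and expectation — this is the main obstacle, since one needs a $\trep$-uniform dominating bound on $P_{\bm m}(\mathcal D_t)/\trep^{n-k}$, and $P_{\bm m}(\mathcal D_t)$ grows with $\|\bm m\|$; here I would invoke the set-avoidance upper bound $P_{\bm m}(\mathcal D_t)\le 1-(1-\operatorname{vol}\mathcal R/t^n)^{m_1\cdots m_n}$ together with $\operatorname{vol}\mathcal R = O(\trep^{n-k})$ to get $P_{\bm m}(\mathcal D_t)/\trep^{n-k}\le C\, m_1\cdots m_n$, whose expectation under the Poisson product law is finite, giving dominated convergence; (3) evaluate $E_{\bm m}$ of the symmetric sum by independence, as above, keeping careful track of the distinct-index count $n!/(k-1)!$ and the mean $\lambda t/n$; (4) simplify and reconcile the normalization by cross-checking the $(2,1)$ and general Markov-chain MTTDL formulas in the literature to pin down the correct power of $n$. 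I expect step (2), the uniform integrability/dominated convergence justification for swapping $\lim_{\trep\to0}$ past $E_{\bm m}$, to be the real work; the algebra in (3)–(4) is routine once the bookkeeping of $n^{n-k+1}$ versus $(n-k+1)!$ is done carefully against the known special cases.
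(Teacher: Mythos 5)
Your overall strategy --- average the conditional result \eqref{eq:asymptoticsPDt-final} over the Poisson multiplicities $\bm m$, interchange limit and expectation, and evaluate the resulting symmetric moment by independence --- is exactly the paper's proof. Your proposed dominated-convergence justification for the interchange (combining the set-avoidance bound of Thm.~\ref{thm:Bound} with $(1-x)^r \geq 1-rx$ and $\vol\mathcal R = O(\trep^{n-k})$ to get $P_{\bm m}(\mathcal D_t)/\trep^{n-k} \leq C\,m_1\cdots m_n$, whose Poisson-product expectation is finite) is in fact \emph{more careful} than the paper's one-line appeal to bounded convergence: $P_{\bm m}(\mathcal D_t)/\trep^{n-k}$ is \emph{not} uniformly bounded over $\bm m$ (its conditional limit grows polynomially with $\bm m$), so a dominator, not a uniform bound, is what is actually needed; you would still have to patch the cases where some $m_i = 0$, where Thm.~\ref{thm:Bound} does not literally apply.

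The genuine gap is the per-disk failure rate, and it is what produced the phantom discrepancy. You take $m_i \sim \mathrm{Poisson}(\lambda t/n)$, as would arise from a single system-level Poisson process of rate $\lambda$ thinned uniformly over the $n$ disks --- the framing of Sec.~\ref{sec-anal-gen}. But in Thm.~\ref{thm:asymptoticBehavior} the parameter $\lambda$ is the \emph{per-disk} rate: each disk fails independently at rate $\lambda$, so $M_i \sim \mathrm{Poisson}(\lambda t)$ and $E[M_i]=\lambda t$. (This is also forced by consistency with \eqref{eq:LimitingForm}: there the system inter-failure mean is $E(Y)=1/(n\lambda)$ and $G/n \approx \lambda\trep$, which reproduces \eqref{eqn:asymptoticConst} exactly.) Secondarily, the sum over ``distinct'' $(i_1,\ldots,i_{n-k+1})$ in \eqref{eq:asymptoticsPDt-final} must be read as over \emph{unordered} index sets, of which there are $\binom{n}{n-k+1}$; this is the reading required by Cor.~\ref{cor:asymptotic21} in the $(2,1)$ case, where the right-hand side would otherwise be $4m_1m_2$ rather than $2m_1m_2$. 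With both fixes, the expectation is $\binom{n}{n-k+1}(\lambda t)^{n-k+1}$ and multiplying by $(n-k+1)!/t^{n-k}$ gives $\frac{n!}{(k-1)!}\lambda^{n-k+1}t$ with nothing left over. There is no hidden $t\!\leftrightarrow\!\trep$ scaling subtlety, and the detour via conditioning on $N_t = s$ and counting minimal clusters in a random length-$s$ pattern is unnecessary. Cross-checking against the $(2,1)$ case was the right instinct; the diagnosis of the mismatch was not.
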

\begin{proof} Let $M_i \sim \mbox{Poisson}(\lambda t)$ be the random variable associated to the number of failures until time $t$.
\begin{equation*} \begin{split} &\lim_{t_\rep\to 0} \frac{P(\mathcal{D}_t)}{t_\rep^{n-k}} = \lim_{t_\rep\to 0}\sum_{\bm{m}}\frac{P_m(\mathcal{D}_t)}{t_\rep^{n-k}} P(\bm{M}=\bm{m}) \\ & \stackrel{(a)}{=} \sum_{\bm{m}}\lim_{t_\rep\to 0}\frac{P_m(\mathcal{D}_t)}{t_\rep^{n-k}} P(\bm{M}=\bm{m}) \\&\stackrel{(b)}{=} \frac{(n-k+1)!}{t^{n-k}} \sum_{\bm{m}} \sum_{(i_1,\ldots,i_{n-k+1}) \above 0pt \mbox{{\tiny distinct}}}{m_{i_1}\ldots m_{i_{n-k+1}}} P(\bm{M} = \bm{m}) \\& = \frac{(n-k+1)!}{t^{n-k}} \sum_{(i_1,\ldots,i_{n-k+1}) \above 0pt \mbox{{\tiny distinct}}}{E[M_{i_1}] \ldots E[M_{i_{n-k+1}}]} \\
& = (n-k+1)! {n \choose n-k+1} \lambda^{n-k+1} t.
\end{split}
\end{equation*}
Interchanging the limit and summation in $(a)$ is justified by bounded convergence (since $P_{\bm{m}}(\mathcal{D}_t)/t_\rep^{n-k}$ is naturally uniformly bounded). Step $(b)$ follows from the asymptotics derived in \eqref{eq:asymptoticsPDt}.
\end{proof}

\subsection{Possible Generalizations}
The machinery developed in this section has some implications to the reliability of other failure point processes. In a fairly general setup, suppose that the failure mechanism is such that the joint probability density between the random failure instants of disks $1$ to $n$ (cf Eq. \ref{eq:bigVector}), conditioned on the number of failures, is given by $g(\bm{x})$. If $g(\bm{x})$ is bounded for all $\bm{x}$, we have the following qualitative result:

 \begin{thm} As $t_\rep \to 0$, the probability of data loss $P_{\bm{m}}(\mathcal{D}_t)$ of an erasure coded system cannot decay slower $t_\rep^{n-k}$.
\label{thm:asymptoticGeneral}
\end{thm}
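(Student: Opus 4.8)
The plan is to mimic the structure of the exact-calculation argument from Sec.~\ref{sec:characterization}, but replace the uniform density on the fundamental simplex by the bounded density $g(\bm{x})$, and then show that the resulting data-loss probability is $O(t_\rep^{n-k})$. First I would condition on the failure pattern $\bm{f}$ exactly as in Thm.~\ref{thm:thmMain}: given $\bm{f}$, a data loss event occurs only if the ordered failure vector $\hat{\bm{X}}$ lands in one of the regions $\mathcal{R}_{\bm{b}}$ with $\bm{b}\in B_{\bm{f}}$, and every such $\bm{b}$ has weight $w(\bm{b})\geq n-k$ (it must contain a $1$-block of length $\geq n-k$ coming from some tight cluster). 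Hence, writing $M$ for the upper bound on $g$,
\begin{equation}
P_{\bm{m}}(\mathcal{D}_t)\;\leq\; \sum_{\bm{f}}\;\sum_{\bm{b}\in B_{\bm{f}}} \int_{\mathcal{R}_{\bm{b}}} g(\bm{x})\,d\bm{x}\;\leq\; M \sum_{\bm{f}}\;\sum_{\bm{b}\in B_{\bm{f}}} \vol\mathcal{R}_{\bm{b}}.
\label{eqn:genBound}
\end{equation}

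Next I would control each $\vol\mathcal{R}_{\bm{b}}$. By the Volume Polynomial machinery (Thm.~\ref{thm:volPol} and the displayed consequence $v_{ij}(\rho) = \frac{s!}{(s-j)!}\rho^{s-j}+O(\rho^{s-j-1})$), a region $\mathcal{R}_{\bm{b}}$ whose defining binary vector has $j = w(\bm{b})$ ones satisfies $\vol\mathcal{R}_{\bm{b}} = (\trep^s/s!)\,v_{ij}(\rho) = O(\trep^{j})$ as $\trep\to 0$, since $\rho = t/\trep$ and the leading term is $(\trep^s/s!)\cdot \frac{s!}{(s-j)!}(t/\trep)^{s-j} = \frac{t^{s-j}}{(s-j)!}\trep^{j}$. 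Because every $\bm{b}\in B_{\bm{f}}$ has $w(\bm{b})\geq n-k$, each summand in \eqref{eqn:genBound} is $O(\trep^{n-k})$; the number of failure patterns $\bm{f}$ and the number of error vectors $B_{\bm{f}}$ are finite and depend only on $\bm{m}$ and $n,k$, not on $\trep$, so the whole sum is $O(\trep^{n-k})$. Dividing by $\trep^{n-k}$ gives a finite limsup, i.e. $P_{\bm{m}}(\mathcal{D}_t)$ cannot decay slower than $\trep^{n-k}$. If one wants the statement unconditioned on $\bm{m}$, one averages over the law of the failure counts, using that $P_{\bm{m}}(\mathcal{D}_t)/\trep^{n-k}$ is uniformly bounded (as in step (a) of Thm.~\ref{thm:asymptoticBehavior}) to pass the bound through the expectation.

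The main obstacle is making the step ``$\vol\mathcal{R}_{\bm{b}} = O(\trep^{w(\bm{b})})$'' fully rigorous in the presence of the $*$ (unconstrained) coordinates: one must invoke the remark that, for volume purposes, $\bm{b}$ may be replaced by $0^i1^j$ with $i+j$ the number of genuine constraints, so that Thm.~\ref{thm:volPol} applies verbatim and the exponent of $\trep$ is exactly the number of $<t_\rep$ constraints, which is at least $n-k$. A secondary technical point is that $\mathcal{R}_{\bm{b}}\subseteq\mathcal{S}$ has total measure bounded by $\vol\mathcal{S} = t^s/s!$ uniformly, so there is no issue of the bound blowing up; and the boundedness of $g$ is exactly what converts the volume estimate into a probability estimate in \eqref{eqn:genBound}. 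No sharper combinatorial identity (of the $A_{n-k}$ type) is needed here, since only the order in $\trep$ is claimed.
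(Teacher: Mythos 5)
Your proposal is correct and follows essentially the same route as the paper's proof: bound the order-statistic density $\hat{g}$ by a constant, thereby reducing the probability $\int_{\mathcal{R}_{\bm{b}}}\hat{g}\,d\bm{x}$ to a multiple of $\vol\mathcal{R}_{\bm{b}}$, and then invoke the volume-polynomial asymptotics to conclude the $O(\trep^{n-k})$ bound. You spell out the intermediate step (every $\bm{b}\in B_{\bm{f}}$ has $w(\bm{b})\geq n-k$, hence each $\vol\mathcal{R}_{\bm{b}}=O(\trep^{w(\bm{b})})$) more explicitly than the paper, which simply cites Eq.~\eqref{eq:asymptoticsPDt}, but the argument is the same.
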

\begin{proof} The characterization of a data loss event in Sec \ref{sec:characterization} does not depend on the statistics of the system. The calculations in Theorem \ref{thm:thmMain} can be thus carried out replacing $P(\hat{\bm{X}} \in \mathcal{R}_{\bm{b}})$ by 
\begin{equation} \int_{\mathcal{R}_{\bm{b}}} \hat{g}(\bm{x}) d \bm{x},
\end{equation}
where $\hat{g}(\bm{x})$ is the pdf of the order statistics obtained by sorting  $\bm{x}$ in ascending order. Since we assumed that $g(\bm{x})$ is bounded, so is $\hat{g}(\bm{x})$, and hence the above integral can be upper bounded by a constant times $\vol \mathcal{R}_{\bm{b}}$. The result now follows from the asymptotic analysis of $\vol \mathcal{R}_{\bm{b}}$, provided by Eq. \eqref{eq:asymptoticsPDt}.
\end{proof}
Notice that this result does not assume independence on the failure time or invariance under time.

\begin{exe}[Non-Homogeneous Poisson Processes]
 This type of process can model situations such as \textit{aging} effects and reliability growth. Let $\lambda(x)$ be a function of time (referred to as the \textit{rate function}). For this process, the probability that there are $m$ failures of one disk in the interval $[a, a+h]$ is given by:
\begin{equation} P(M_i(a,a+h) = m) = \frac{(\Lambda(a,a+h))^{m} \exp{\left[-\Lambda(a+h,a)\right]}}{m!},
\end{equation}
where $\Lambda(a,a+h) := \int_{a}^{a+h} \lambda(x) \mbox{d}x$. Define the \textit{normalized rate function} as 
\begin{equation}
\tilde{\lambda}(x) =  \frac{\lambda(x)}{\Lambda(0,t)}.
\end{equation}

It is not hard to see that the \textit{failure times} of a disk are independently, identically distributed with pdf $\tilde{\lambda}(x)$ (see, for example, \cite[p. 64]{Queue}, adapted to the non-homogeneous case). This way, the joint pdf of the ordered failure times, conditioned on the number of failures is $m$ is given by.
\begin{equation} g(\hat{x}_1,\ldots,\hat{x}_m) = s! \tilde{\lambda}(x_1)\tilde{\lambda}(x_2)\ldots \tilde{\lambda}(x_m),
\end{equation}

Now let $C(t) = \max_{x \in [0,t]} \tilde{\lambda}({x})$. We can bound ${P(\hat{\bm{X}} \in \mathcal{R}_{\bm{b}})}$ by $s! C(t)^s \mbox{vol } \mathcal{R}_b$, and thus

\begin{equation} P_m(\mathcal{D}_t) \leq s! C(t)^s \sum_{\bm{f}} \sum_{\bm{b} \in \bm{f}} \left( \mbox{\upshape{vol }} \mathcal{R}_b \right)(P_{\bm{m}}(\bm{f})).
\label{eq:maxBound}
\end{equation}
A special case of this process are the ``Power-Law Processes", where $\lambda(x) = \lambda x^{\beta}$, $\lambda > 0$. In this case,  $C(t) = (\beta+1)\frac{1}{t}.$

\end{exe}

\section{Set Avoidance Probabilities for Cartesian Products of Random Sets}
\label{sec:avoidance}
The closed form calculations performed in the previous sections are particularly useful to characterizing the asymptotic behavior of the system. The objective of this section is to provide a simple upper bound  based on Jensen's inequality. The proofs of the theorems, as well as an upper bound based on the inclusion-exclusion principle, along with a geometric characterization of situations when these bounds are tight, can be found in Appendix~\ref{app:setAvoidance} and in \cite{CampVai2}. The set avoidance lower bound is used to derive a lower bound on the reliability function in Sec.~\ref{sec:avoidance:app}, some examples are presented in Sec.~\ref{sec:avoidance:ex} and general results for $(n,k)$ MDS codes are presented in Sec.~\ref{sec:avoidance:MDS}.

\subsection{Lower Bounds}
\label{sec:avoidance:lb}
 Given sets $S_{1}$, $S_{2}$, $\mathcal{R} \subset S_{1} \times 
S_{2}$ and $x_{1} \in S_{1}$ we define the shadow of a section of $\mathcal R$ as $\mathcal{R}_{1}(x_{1})=\{x_{2} \in S_{2}~:~(x_{1},x_{2}) \in \mathcal{R}\}$. In the following, the operator $\times$ has precedence over set operations such $\bigcap$ and $\bigcup$.

\begin{lemma}
Let ${\mathcal X}:=\{X_{1},X_{2},\ldots,X_{m_1}\}$ and ${\mathcal Y}:=\{Y_{1},Y_{2},\ldots,Y_{m_2}\}$, where the $X_{i}$'s are i.i.d on a set ${\mathcal S}_{1}$ and the $Y_{i}$'s are i.i.d on a set ${\mathcal S}_{2}$. Let $X$, $Y$ be generic random variables distributed as $X_{i}$ and $Y_{i}$, resp. Let $\mathcal{R} \subset {\mathcal S}_{1} \times {\mathcal S}_{2}$. Then 
\begin{equation}
P\left({\mathcal X}\times {\mathcal Y} \bigcap \mathcal{R} =\emptyset\right) \geq \left( P(\{X\} \times {\mathcal Y} \bigcap \mathcal{R}=\emptyset)\right)^{m_1}
\end{equation}
and equality holds iff $P(X  \in \bigcup_{i=1}^{m_2} \mathcal{R}(y_{i}))$ is a constant for $(y_{1},y_{2},\ldots,y_{m_2})$ with positive pmf.
\label{lem:avoidance}
\end{lemma}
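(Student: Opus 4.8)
The plan is to condition on the random set $\mathcal{Y}$ and exploit the fact that, given $\mathcal{Y}$, the event $\{\mathcal{X}\times\mathcal{Y}\cap\mathcal{R}=\emptyset\}$ factors over the i.i.d.\ points $X_1,\ldots,X_{m_1}$. Concretely, fix a realization $(y_1,\ldots,y_{m_2})$ of $\mathcal{Y}$. A point $x\in\mathcal{S}_1$ satisfies $(x,y_i)\notin\mathcal{R}$ for all $i$ precisely when $x\notin\bigcup_{i=1}^{m_2}\mathcal{R}(y_i)$, using the shadow notation from the statement (here $\mathcal{R}(y)$ denotes the section $\{x\in\mathcal{S}_1:(x,y)\in\mathcal{R}\}$). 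Hence, writing $q(y_1,\ldots,y_{m_2}):=P\bigl(X\notin\bigcup_{i=1}^{m_2}\mathcal{R}(y_i)\bigr)$ and using independence of $X_1,\ldots,X_{m_1}$ (and their independence from $\mathcal{Y}$),
\begin{equation*}
P\bigl(\mathcal{X}\times\mathcal{Y}\cap\mathcal{R}=\emptyset \mid \mathcal{Y}=(y_1,\ldots,y_{m_2})\bigr)=q(y_1,\ldots,y_{m_2})^{m_1}.
\end{equation*}
Taking expectation over $\mathcal{Y}$ gives $P(\mathcal{X}\times\mathcal{Y}\cap\mathcal{R}=\emptyset)=E\bigl[q(\mathcal{Y})^{m_1}\bigr]$.

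The second step is a single application of Jensen's inequality. Since $u\mapsto u^{m_1}$ is convex on $[0,1]$ (as $m_1\geq 1$), we have $E[q(\mathcal{Y})^{m_1}]\geq \bigl(E[q(\mathcal{Y})]\bigr)^{m_1}$. It remains to identify $E[q(\mathcal{Y})]$ with $P(\{X\}\times\mathcal{Y}\cap\mathcal{R}=\emptyset)$; but this is immediate from the same factorization with $m_1=1$: conditioning on $\mathcal{Y}$ gives $P(\{X\}\times\mathcal{Y}\cap\mathcal{R}=\emptyset\mid\mathcal{Y})=q(\mathcal{Y})$, and averaging over $\mathcal{Y}$ yields $E[q(\mathcal{Y})]=P(\{X\}\times\mathcal{Y}\cap\mathcal{R}=\emptyset)$. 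Combining the three displays proves the inequality.

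For the equality condition, recall that Jensen's inequality for a strictly convex function (which $u\mapsto u^{m_1}$ is on $[0,1]$ when $m_1\geq 2$; the case $m_1=1$ is trivial with equality always holding) is an equality if and only if the random variable $q(\mathcal{Y})$ is almost surely constant, i.e.\ constant over all tuples $(y_1,\ldots,y_{m_2})$ of positive probability mass. Unwinding the definition of $q$, this says exactly that $P\bigl(X\in\bigcup_{i=1}^{m_2}\mathcal{R}(y_i)\bigr)$ is constant over such tuples, which is the stated condition. I would also note in passing that one should take the complementary probability $P(X\in\bigcup\mathcal{R}(y_i))=1-q$ to match the statement verbatim; constancy of $q$ and of $1-q$ are obviously equivalent.

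The only mild subtlety — and the step I would be most careful about — is the measure-theoretic bookkeeping in the conditioning: one must ensure the section $\mathcal{R}(y)$ is measurable in $\mathcal{S}_1$ for (almost) every $y$ so that $q(y_1,\ldots,y_{m_2})$ is well defined and measurable in the $y_i$'s, and that Fubini/Tonelli applies when averaging over $\mathcal{Y}$. Under the implicit standing assumption that $\mathcal{R}$ is a measurable subset of $\mathcal{S}_1\times\mathcal{S}_2$ (product $\sigma$-algebra), this is standard. Everything else is the elementary independence factorization plus one invocation of Jensen, so there is no real obstacle; the content is entirely in setting up the conditioning cleanly.
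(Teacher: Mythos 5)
Your proof is correct and follows essentially the same route as the paper's: condition on $\mathcal{Y}$, use independence of the $X_i$'s to factor the conditional avoidance probability as $q(\mathcal{Y})^{m_1}$, apply Jensen's inequality to the convex map $u\mapsto u^{m_1}$, and read off the equality condition from the Jensen equality condition. Your added remarks on the $m_1=1$ edge case and on measurability of the sections are sound but do not change the argument.
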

\begin{cor}
\begin{equation}
P\left({\mathcal X}\times {\mathcal Y} \bigcap \mathcal{R} =\emptyset\right) \geq P((X,Y) \notin  \mathcal{R})^{m_1 m_2}.
\end{equation}
\label{Cor:main}
\end{cor}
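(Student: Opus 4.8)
The plan is to derive the bound by invoking Lemma~\ref{lem:avoidance} twice, peeling off one Cartesian factor at a time. First I would apply the lemma verbatim to remove the family $\mathcal{X}$, obtaining
\[
P\left(({\mathcal X}\times {\mathcal Y}) \cap \mathcal{R} =\emptyset\right) \ge \Big( P\big((\{X\} \times {\mathcal Y}) \cap \mathcal{R}=\emptyset\big)\Big)^{m_1},
\]
so that it remains only to establish the ``inner'' inequality $P\big((\{X\}\times\mathcal{Y})\cap\mathcal{R}=\emptyset\big) \ge P((X,Y)\notin\mathcal{R})^{m_2}$.

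To prove the inner inequality I would condition on $X=x$ and use that the $Y_j$ are i.i.d.\ and independent of $X$: conditionally on $X=x$ the events $\{(x,Y_j)\notin\mathcal{R}\}$, $j=1,\dots,m_2$, are independent, so $P\big((\{x\}\times\mathcal{Y})\cap\mathcal{R}=\emptyset\big)=P((x,Y)\notin\mathcal{R})^{m_2}$. Writing $h(x):=P((x,Y)\notin\mathcal{R})=P((X,Y)\notin\mathcal{R}\mid X=x)$ and taking the expectation over $X$, Jensen's inequality applied to the convex function $u\mapsto u^{m_2}$ on $[0,1]$ gives $E_X[h(X)^{m_2}]\ge (E_X[h(X)])^{m_2}=P((X,Y)\notin\mathcal{R})^{m_2}$, which is exactly the inner inequality. (Alternatively, the inner inequality is literally Lemma~\ref{lem:avoidance} applied with the two coordinates interchanged and the one-element family $\{X\}$ playing the role of the $\mathcal{Y}$-family; indeed the proof of Lemma~\ref{lem:avoidance} is nothing but this Jensen step.) Combining the two displays yields $P\big(({\mathcal X}\times{\mathcal Y})\cap\mathcal{R}=\emptyset\big)\ge P((X,Y)\notin\mathcal{R})^{m_1 m_2}$, which is the claim.

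I do not expect a genuine obstacle here — the corollary is a routine iteration of the lemma. The only points requiring a moment of care are bookkeeping: verifying that the conditional independence of the events $\{(x,Y_j)\notin\mathcal{R}\}$ is justified by the assumed i.i.d.\ structure and the independence of $\mathcal{X}$ and $\mathcal{Y}$, and checking that the convexity used in Jensen's inequality points in the direction that yields a lower bound (it does, since $u\mapsto u^{m_2}$ is convex). For completeness one could also propagate the equality conditions from the two applications of Lemma~\ref{lem:avoidance} to characterize when the corollary is tight, but that is not needed for the statement as given.
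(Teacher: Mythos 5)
Your proposal is correct and follows the natural route the paper intends: the paper states this as a corollary of Lemma~\ref{lem:avoidance} with no separate proof, and your argument is exactly the expected iteration of that lemma, peeling off $\mathcal{X}$ first and then $\mathcal{Y}$ (the latter either by a direct conditioning-plus-Jensen step or, equivalently, by applying the lemma with the coordinates swapped and the one-element family $\{X\}$). The Jensen direction is checked correctly ($u\mapsto u^{m_2}$ is convex), and the conditional independence you invoke is supported by the i.i.d.\ assumptions in the lemma's hypotheses, so there is nothing missing.
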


\subsection{Application of Set Avoidance Calculations to the Data Loss Probability Calculation}
\label{sec:avoidance:app}
We first apply the bounds developed in the previous section to derive a lower bound on the reliability function ${R}(t)$. The bound is given in terms of the volume of the error
region associated with a given code. A systematic method for calculating the volume of the error region is then presented along with an overview of some of the theoretical results related to the calculation of an error polynomial associated with the code.
Proofs are presented in the next session and in the appendix.

For the avoidance upper bound, we need a different definition of a data loss event. Let $\bm{f}$ be a failure pattern. We consider that data loss occurs if there is an ordered sequence of failures $x_i,\ldots,x_{i+n-k+1}$ from $n-k+1$ different disks such that $x_{j+1}-x_j < t_\rep$, for all $j = i, \ldots i+n-k+1$, even when the failure pattern has repeated consecutive elements. From Remark~\ref{rmk:asymptoticallyNegligible}, this characterization is asymptotically the same as the one in \ref{sec:characterization}.

Let $\mathcal{R}$ be the region
\begin{eqnarray}
\mathcal{R} & := & \{ (x_1,x_2,\ldots,x_n) \in [0,t]^n\ :\ |x_{i_1}-x_{i_2}|< t_{rep}, \nonumber \\ & & |x_{i_2}-x_{i_3}| < t_{rep},\ldots,|x_{i_{n-k+1}} -x_{i_{n-k}}| < t_{rep}  \nonumber \\ & &\mbox{ for some } i_1,i_2,\ldots,i_{n-k+1}\}.
\label{eqn:error-region}
\end{eqnarray}
Note that $\mathcal{R} \subset [0,t]^n$ contains the error regions of a code when there is precisely one failure of each disk ($s = m_1+\ldots+m_n=n$ and $m_i = 1$).
Suppose that in the interval $[0,t]$, disk $i$ fails $m_i>0$ times. Let $\bm{m}=(m_1,m_2,\ldots,m_n)$.  The failure instants of the $i$-th disk are denoted by ${\mathcal X}_i = \{X_{i1},\ldots,X_{i m_i}\}$, where the $X_{ij}$ are independently and uniformly drawn on the time interval $[0,t]$.
 A data loss event occurs if and only if ${\mathcal X_{1}} \times {\mathcal X_{2}} \times \ldots \times {\mathcal X_{n}}\bigcap {\mathcal R} \neq \emptyset$, where ${\mathcal R}$ is error region for the code as defined in (\ref{eqn:error-region}). Let $X_{i},~i=1,2,\ldots,n$ be i.i.d. random variables, uniformly distributed on $[0,t]$ and let $\bm{X}=(X_{1},X_{2},\ldots,X_{n})$.  The following proposition follows immediately  from Cor.~\ref{Cor:main}.

\begin{thm} The probability that there is no data loss in the interval $[0,t]$, given $m_{i}$, the number of failures for disk $i$ in $[0,t]$,  $m_i > 0$, $i=1,2,\ldots,n$ satisfies
\begin{equation} \begin{split} P_{\bm{m}}(\mathcal{D}_t^c) \geq P_{\bm{m}}( \bm{X} \in \mathcal{R}^c)^{m_1 m_2 \ldots m_n} = \left(1-\frac{\mbox{\upshape{vol }}\mathcal{R}}{t^n}\right)^{m_1 \ldots m_n}.
\end{split}
\label{eq:Bound1}
\end{equation}
\label{thm:Bound}
\end{thm}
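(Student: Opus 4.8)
The plan is to read $\mathcal{D}_t^c$ (no data loss) as a set-avoidance event and apply the avoidance bound of Corollary~\ref{Cor:main} in an $n$-fold, iterated form. First I would record the two identifications that reduce the statement to a pure avoidance inequality. By the characterization in the paragraph preceding the theorem, $\mathcal{D}_t^c$ holds exactly when $\mathcal{X}_1\times\mathcal{X}_2\times\cdots\times\mathcal{X}_n\cap\mathcal{R}=\emptyset$, so $P_{\bm{m}}(\mathcal{D}_t^c)=P(\mathcal{X}_1\times\cdots\times\mathcal{X}_n\cap\mathcal{R}=\emptyset)$. Moreover, since $X_1,\dots,X_n$ are i.i.d.\ uniform on $[0,t]$, $P_{\bm{m}}(\bm{X}\in\mathcal{R}^c)=1-\mathrm{vol}(\mathcal{R})/t^n$, which is the claimed right-hand side. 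It thus remains to prove $P(\mathcal{X}_1\times\cdots\times\mathcal{X}_n\cap\mathcal{R}=\emptyset)\geq(1-\mathrm{vol}(\mathcal{R})/t^n)^{m_1 m_2\cdots m_n}$.

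For $n=2$ this is precisely Corollary~\ref{Cor:main} (with $\mathcal{X}=\mathcal{X}_1$ and $\mathcal{Y}=\mathcal{X}_2$). For general $n$ I would run an induction that peels off one coordinate-family at a time. Define, for $0\leq k\leq n$, $\phi_k := P(\{X_1\}\times\cdots\times\{X_k\}\times\mathcal{X}_{k+1}\times\cdots\times\mathcal{X}_n\cap\mathcal{R}=\emptyset)$, where $X_1,\dots,X_k$ are single i.i.d.\ uniform points on $[0,t]$, independent of everything else, and $\mathcal{X}_{k+1},\dots,\mathcal{X}_n$ are the full failure-time families; thus $\phi_0=P_{\bm{m}}(\mathcal{D}_t^c)$ and $\phi_n=1-\mathrm{vol}(\mathcal{R})/t^n$. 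The key claim is $\phi_k\geq\phi_{k+1}^{\,m_{k+1}}$. To prove it, condition on $X_1=x_1,\dots,X_k=x_k$; the event then becomes $\mathcal{X}_{k+1}\times\cdots\times\mathcal{X}_n\cap\mathcal{R}^{(x_1,\dots,x_k)}=\emptyset$, where $\mathcal{R}^{(x_1,\dots,x_k)}$ is the section of $\mathcal{R}$ at $(x_1,\dots,x_k)$. Applying the avoidance bound with the genuinely i.i.d.\ family $\mathcal{X}_{k+1}$ on $[0,t]$ played against $\mathcal{X}_{k+2}\times\cdots\times\mathcal{X}_n$ gives a conditional lower bound equal to the $m_{k+1}$-th power of $P(\{X_{k+1}\}\times\mathcal{X}_{k+2}\times\cdots\times\mathcal{X}_n\cap\mathcal{R}^{(x_1,\dots,x_k)}=\emptyset\mid x_1,\dots,x_k)$; taking expectation over $x_1,\dots,x_k$, Jensen's inequality for the convex map $u\mapsto u^{m_{k+1}}$ on $[0,1]$ together with the tower rule turns this into $\phi_{k+1}^{\,m_{k+1}}$. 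Chaining $\phi_0\geq\phi_1^{m_1}\geq\phi_2^{m_1 m_2}\geq\cdots\geq\phi_n^{m_1 m_2\cdots m_n}$ then finishes the argument.

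The point requiring care — which I expect to be the only real obstacle — is that the ``second factor'' $\mathcal{X}_{k+2}\times\cdots\times\mathcal{X}_n$ appearing in the inductive step is a Cartesian product of i.i.d.\ families, so its elements are \emph{not} i.i.d., and Lemma~\ref{lem:avoidance} is not literally applicable (it assumes both families i.i.d.). I would therefore first verify the mild strengthening that the inequality $P(\mathcal{X}\times\mathcal{Y}\cap\mathcal{R}=\emptyset)\geq P(\{X\}\times\mathcal{Y}\cap\mathcal{R}=\emptyset)^{m_1}$ holds whenever $\mathcal{X}=\{X_1,\dots,X_{m_1}\}$ is i.i.d.\ and independent of $\mathcal{Y}$, the family $\mathcal{Y}$ being otherwise arbitrary. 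This follows from the argument behind Lemma~\ref{lem:avoidance}: conditioning on $\mathcal{Y}$, the $m_1$ events $\{X_i\notin\bigcup_j\mathcal{R}(Y_j)\}$ are i.i.d., so the conditional probability equals $(1-P(X\in\bigcup_j\mathcal{R}(Y_j)\mid\mathcal{Y}))^{m_1}$, and Jensen's inequality for the convex function $u\mapsto(1-u)^{m_1}$ on $[0,1]$, applied after taking expectation over $\mathcal{Y}$, yields the stated bound. With this in hand the induction goes through verbatim, and Theorem~\ref{thm:Bound} follows. (The hypothesis $m_i\geq 1$ is used only to guarantee convexity of the maps $u\mapsto u^{m_i}$ and $u\mapsto(1-u)^{m_i}$.)
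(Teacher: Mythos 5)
Your proposal is correct and follows the same basic route as the paper's proof: reduce $\mathcal{D}_t^c$ to the avoidance event $\mathcal{X}_1\times\cdots\times\mathcal{X}_n\cap\mathcal{R}=\emptyset$ and invoke the avoidance bound of Corollary~\ref{Cor:main}, with the equality following from the $X_i$ being i.i.d.\ uniform on $[0,t]$. The paper simply asserts that the $n$-family bound ``follows directly'' from the two-family Corollary~\ref{Cor:main}, whereas you correctly observe that the naive iteration plays the avoidance lemma against a family $\mathcal{X}_{k+2}\times\cdots\times\mathcal{X}_n$ whose elements are \emph{not} i.i.d.\ (they share coordinates), so Lemma~\ref{lem:avoidance} as stated does not apply; your strengthening (that the lemma's proof uses only the i.i.d.\ structure of the peeled-off family $\mathcal{X}$, with $\mathcal{Y}$ arbitrary) is exactly the right fix, and indeed is what the lemma's own proof already shows without needing any new argument. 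In short, your proof is a more careful and complete version of what the paper intends, and fills a genuine rigor gap in the paper's one-line justification.
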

\begin{proof} The quantity on the left is $P({\mathcal X_{1}} \times {\mathcal X_{2}} \times \ldots \times {\mathcal X_{n}}\bigcap {\mathcal R} = \emptyset)|\bm{M}=\bm{m})$. Thus the inequality in \eqref{eq:Bound1} follows directly from Cor.~\ref{Cor:main}. The equality in \eqref{eq:Bound1} follows from the fact that $X_{i}$'s are uniform random variables iid over $[0,t]$. 
\end{proof}

We proceed to calculate the volume of ${\mathcal R}$ for a few example codes, and then state a general result. 

\subsection{Graphical Representation of Constraints, Some Example Error Regions }
\label{sec:avoidance:ex}
In order to help calculate the volume of the error region $\mathcal{R}$ defined in (\ref{eqn:error-region}) we 
consider a binary vector $\bm{b}=(b_1, b_2, \ldots, b_{l}), l \leq n-1$ and to define $\mathcal{R}_{\bm{b}}\in [0,t]^n$ as the region
$$\mathcal{R}_{\bm{b}} = \left\{ (x_1,\ldots,x_n) \in [0,t]^n: \begin{array}{c} x_1 \leq \ldots \leq x_n  \\
x_{i+1} - x_i \leq t_{\rep} \mbox{ if } b_i = 1 \\
x_{i+1} - x_i > t_{\rep} \mbox{ if } b_i = 0\end{array} \right\}.$$
Note that except for the dimension of the binary vector this definition coincides with  Def. \ref{def:errorRegion}.

The  vector $\bm{b}$ is  conveniently visualized as a graph $G_{\bm{b}}$ with $n$ vertices such that there is an edge between $i$ and $i+1$ iff $b_i = 1$. The region $\mathcal{R}$ can be decomposed into a disjoint union of regions $\mathcal{R}_{\bm{b}}$, the union being over all edges $\bm{b}$ that are {\bf error vectors}.

In cases where there are no constraints between successive failure instants, the dimension of the vector is reduced and the corresponding graph has fewer nodes. As an example consider an ordered vector of failure instants $\bm{x}=(x_{1},x_{2},x_{3},x_{4})$ with the constraints $x_{2}-x_{1}> \trep$, $x_{3}-x_{2} < \trep$. This constraint is represented by the vector $\bm{b}=01$, and is shown as a graph with three vertices.

A systematic method for calculating the volume of $\mathcal{R}_{\bm{b}}$ and hence of $\mathcal{R}$ is presented in Sec.~\ref{sec:ErrorRegions}. Here we show by example, the error vectors that correspond to specific codes.

\begin{exe}
{(n,n-1)-single parity code.}
\label{sub:Simplex}
In general, if $k=n-1$, any simultaneous two disk failures (within an interval of length $t_{\text{rep}}$) will cause data loss. Therefore
\begin{equation*} \mathcal{R}= \left\{ (x_1,\ldots, x_n) \in [0,t]^n: \exists i \neq j \mbox{ s.t. } |x_i - x_j| \leq t_{\text{rep}} \right\}, \mbox{and}
\end{equation*}
\begin{equation*} \mathcal{R}^c = \left\{ (x_1,\ldots, x_{n}) \in [0,t]^n: |x_i - x_j| > t_{\text{rep}} \mbox{ for all } i,j\right\}.
\end{equation*}
Fig.~\ref{fig:regionR3} is an illustration of region $\mathcal{R}^c$ in three dimensions ($n = 3, k = 2$),
\begin{figure}[t]
\centering
\subfloat{
\includegraphics[scale=0.3]{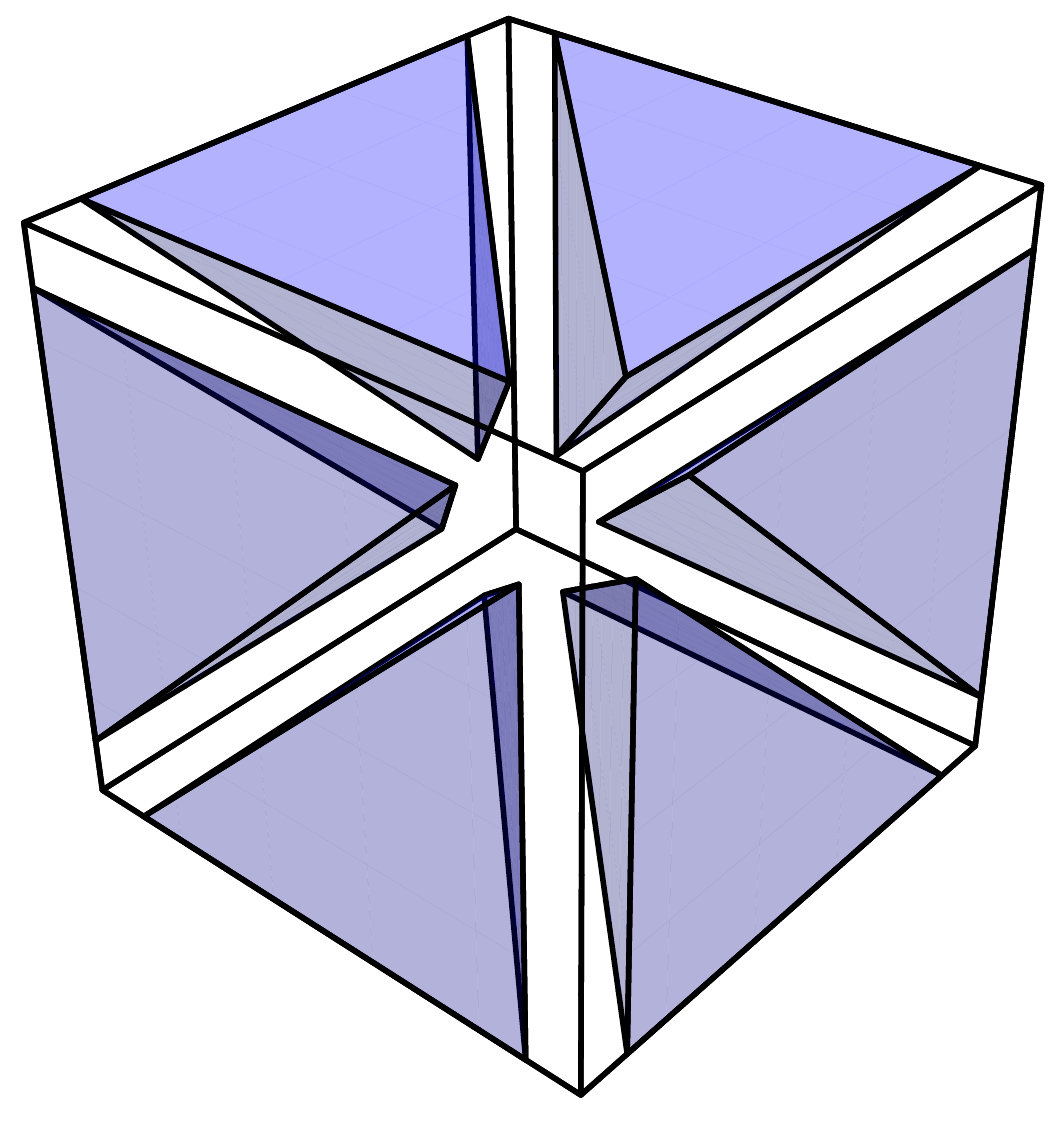}
}
\hfill\subfloat{

\includegraphics[scale=0.3]{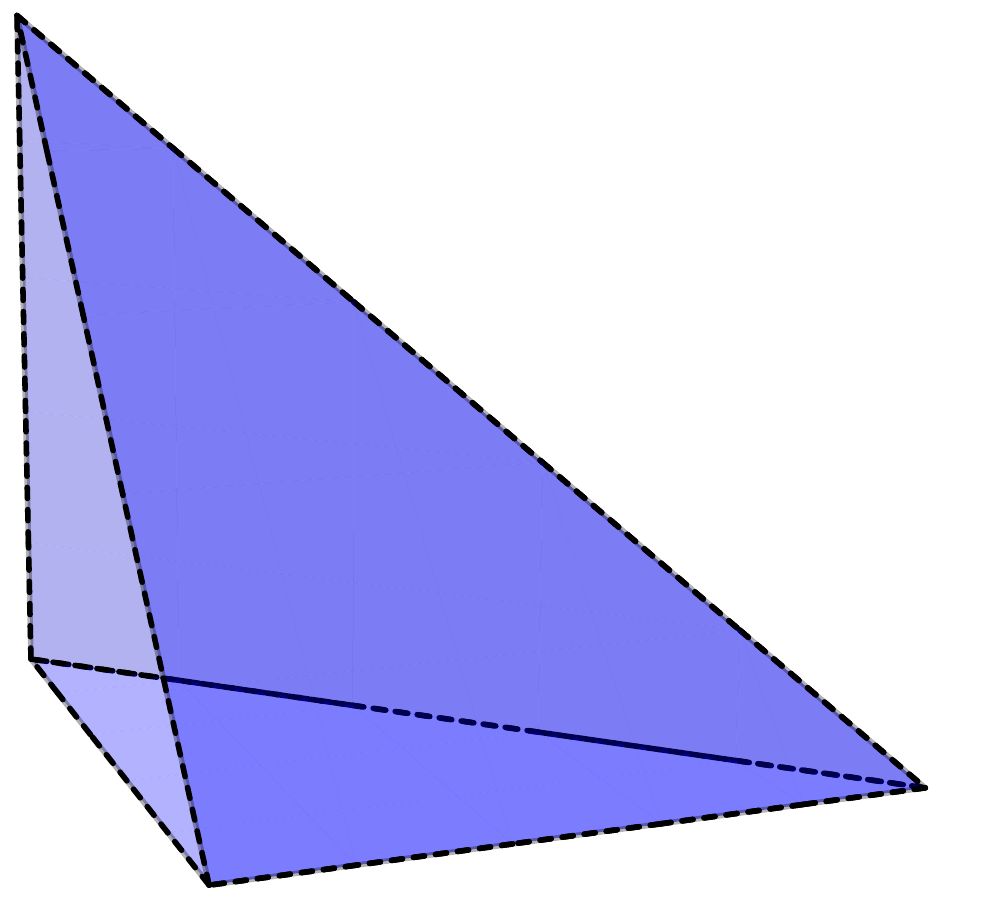} }
\\
\subfloat{\includegraphics[scale=0.5]{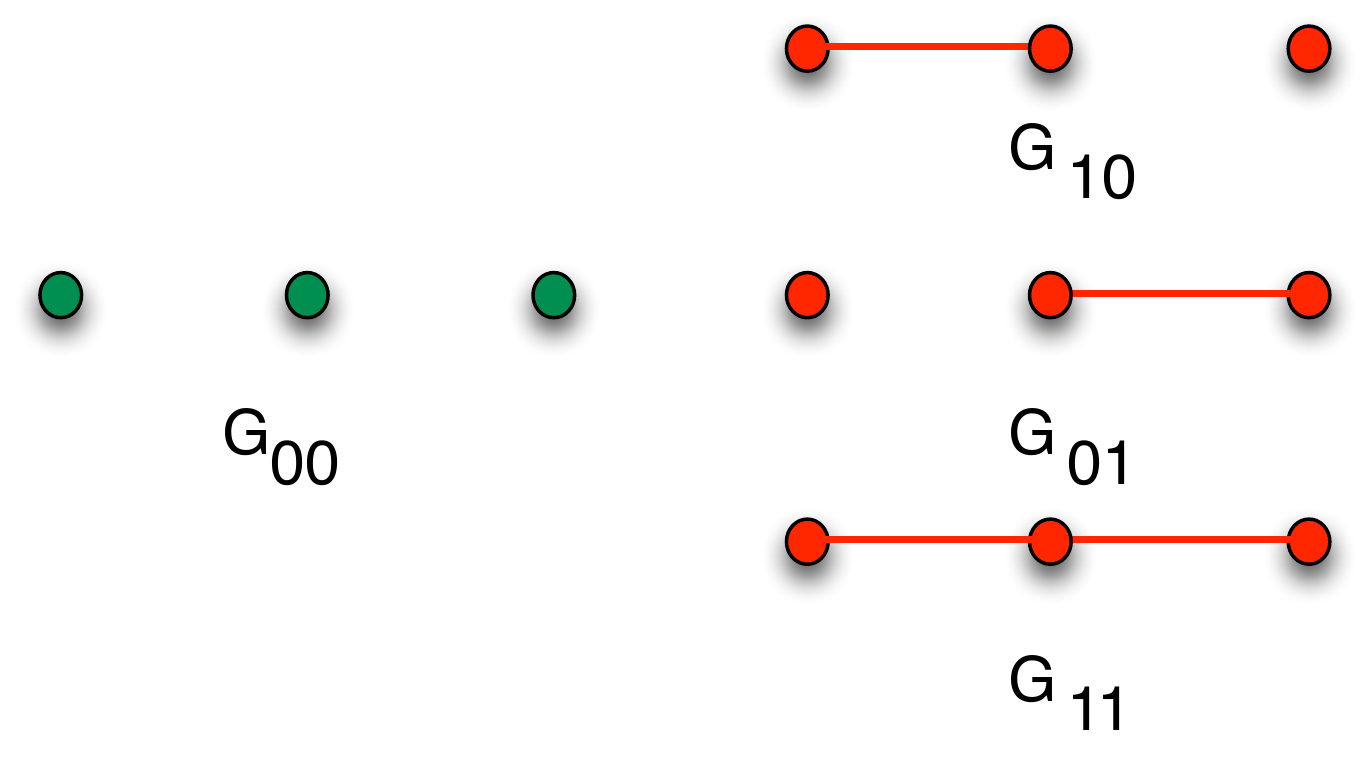}}
\caption{Illustration of the error region $\mathcal{R}^c$ and no-error region $\mathcal{R}^c$ for the $(3,2)$ code. Also shown is a single simplex, corresponding to one of the orderings of $(x_1,x_2,x_3)$. The error region is the `star-shaped' region that is unshaded. The error region is the disjoint union of the polytopes $\mathcal{R}_{\bm{b}}$. The corresponding error graphs  are  $G_{\bm{b}}$, with error vectors $10$, $01$ and $11$.}
\label{fig:regionR3}
\end{figure}
The fact that the above region is a simplex is proved in Appendix A, Lemma \ref{lemma:Auxiliar}. It is also proved that
\begin{equation*}
\mbox{vol }\mathcal{R}^c =(t-(n-1)t_{\text{rep}})^n.
\end{equation*}
Also shown in Fig. \ref{fig:regionR3} is a graphical representation for the error and non-error vectors.
\end{exe}
\begin{rmk}
For the analysis above, we require that ${t \geq (n-1) t_{\text{rep}}}$.  
\end{rmk}

For general codes the no-error regions are not elementary simplices as in a $(n,n-1)$-code. However, a systematic method
for calculating the volume of an error region is presented in Sec.~\ref{sec:ErrorRegions}. 

\begin{exe}{$(4,2)$-Code: }
\begin{figure}[t]
\centering
\includegraphics[scale=0.5]{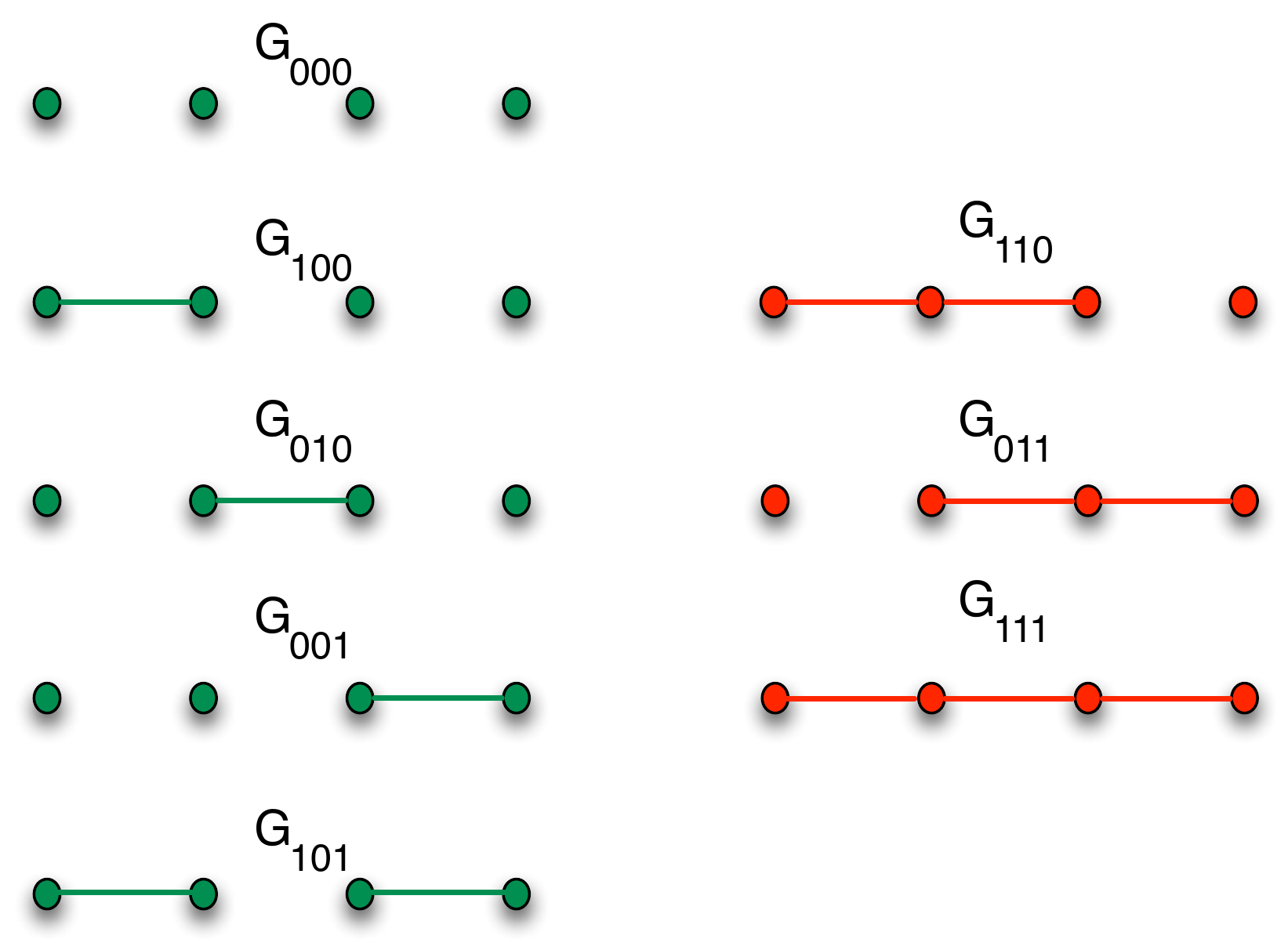}
\caption{No-error graphs (green, left)  and the error vectors (red, right) for the $(4,2)$ code.  $G_{000}$, $G_{100}$, $G_{010}$, $G_{001}$ and $G_{101}$ represent the no-error region  ${\mathcal R}_{w}^{c}$ and  $G_{110}$, $G_{011}$ and $G_{111}$ represent the error region, ${\mathcal R}_{w}$.}
\label{fig:42code-1}
\end{figure}

The error graphs of a $(4,2)$ code are represented in Fig.~\ref{fig:42code-1}. 
For $t \geq 3 t_{\trep}$, the volume of the error region is given by
$$\mbox{\upshape{vol} }\mathcal{R} = 24 t^2 t_{\text{rep}}^2-72 t_{\text{rep}}^3 t+64 t_{\text{rep}}^4.$$
\end{exe}
Details of the volume calculation are presented in Sec.~\ref{sec:ErrorRegions}.

\subsection{Set Avoidance Bounds for $(n,k)$ MDS codes}
\label{sec:avoidance:MDS}
For an $(n,k)$ MDS code, let $\alpha_{j}(n,k)$ denote the number of \emph{error} graphs labeled by error vectors $\bm{b}$ with Hamming weight $j$. We define the error polynomial as:

\begin{equation}
e(\rho)=\sum_{j=n-k}^{n-1}\alpha_{j}(n,k)v_{n-1-j,j}.
\label{eq:errorPolynomial}
\end{equation}
Let $\beta_{j}(n,k)$ be the number of no-error graphs of weight $j$ for an $(n,k)$ code,
$$\beta_j(n,k) = {n-1 \choose j} - \alpha_j(n,k),$$
where the term ${n-1 \choose j}$ is the total number of binary strings of length $n-1$ and Hamming weight $j$. Analyzing the labels $b_1 \ldots b_{n-1}$, it follows that $\beta_j(n,k)$ is the number of binary strings of length $n-1$ and weight $j$ that has no runs of $(n-k)$ or more $1s$. This number and its relation with generalizations of the Pascal Triangle was thoroughly studied in \cite{Bollinger1,Bollinger2}. It follows immediately that $\alpha_j(n,k) = 0$, for $j = 0, 1, \ldots, (n-k-1)$.

Combining two results from \cite[Thm. 3.3]{Bollinger1} and \cite[Eq. 3]{Bollinger2}, we have $\beta_j(n,n-k) = C_{n-k}(n-j,j)$, where $C_m(l,s)$ is the coefficient of $x^s$ in the expansion of the polynomial generating function $(1 + x + \ldots + x^{m-1})^l$. This leads to the following
\begin{lemma} The number of no-error graphs of Hamming weight $j$ of an $(n,k)$-code is given by
\begin{equation}\begin{split}
\beta_j(n,k) &= C_{n-k}(n-j,j) = \\ &= \sum_{i=0}^a (-1)^i {n-j \choose i} {{n-1-(n-k)i} \choose n-j-1},
\end{split}
\end{equation}
where $a = \min \{ n-j, \left\lfloor j/(n-k) \right\rfloor \}$.
\label{lemma:HammingWeight}
\end{lemma}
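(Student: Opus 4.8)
\textbf{Proof plan for Lemma~\ref{lemma:HammingWeight}.}
The plan is to prove the two asserted identities in turn: first the combinatorial identification $\beta_j(n,k) = C_{n-k}(n-j,j)$, and then the closed-form expansion of $C_{n-k}(n-j,j)$ as the alternating sum. The first half is already essentially granted by the text, which cites \cite[Thm.~3.3]{Bollinger1} and \cite[Eq.~3]{Bollinger2}: recall that $\beta_j(n,k)$ counts binary strings $b_1\ldots b_{n-1}$ of weight $j$ with no run of $(n-k)$ or more consecutive $1$s. First I would make this self-contained by a direct bijection: encode such a string by the sequence of gap-lengths. A weight-$j$ string of length $n-1$ with no run of $\geq n-k$ ones is determined by writing the $n-1-j$ zeros in a row, which creates $n-j$ slots (including the two ends), and distributing the $j$ ones among these slots with at most $n-k-1$ ones per slot; the number of such distributions is exactly the number of ways to write $j = s_1 + \cdots + s_{n-j}$ with $0 \le s_i \le n-k-1$, which is by definition the coefficient of $x^j$ in $(1 + x + \cdots + x^{n-k-1})^{n-j}$, i.e. $C_{n-k}(n-j,j)$. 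This also transparently gives $\alpha_j(n,k)=0$ for $j \le n-k-1$ since then no run can reach length $n-k$.

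For the second half, I would extract the coefficient of $x^j$ in $(1+x+\cdots+x^{n-k-1})^{n-j}$ by the standard generating-function manipulation: write
\begin{equation*}
(1 + x + \cdots + x^{n-k-1})^{n-j} = \left(\frac{1-x^{n-k}}{1-x}\right)^{n-j} = (1 - x^{n-k})^{n-j}\,(1-x)^{-(n-j)}.
\end{equation*}
Expanding the first factor by the binomial theorem gives $\sum_{i\ge 0} (-1)^i \binom{n-j}{i} x^{(n-k)i}$, and expanding the second by the negative binomial series gives $\sum_{r\ge 0} \binom{r + n-j-1}{n-j-1} x^r$. Taking the coefficient of $x^j$ forces $r = j - (n-k)i$, so
\begin{equation*}
C_{n-k}(n-j,j) = \sum_{i \ge 0} (-1)^i \binom{n-j}{i} \binom{j - (n-k)i + n-j-1}{n-j-1} = \sum_{i\ge 0}(-1)^i\binom{n-j}{i}\binom{n-1-(n-k)i}{n-j-1}.
\end{equation*}
Finally I would determine the effective range of summation: the term vanishes unless $i \le n-j$ (from the first binomial) and $j - (n-k)i \ge 0$, i.e. $i \le \lfloor j/(n-k)\rfloor$ (from the second), so the sum runs to $a = \min\{n-j,\lfloor j/(n-k)\rfloor\}$, matching the statement.

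The routine parts here are genuinely routine — binomial series expansion and a slot-counting bijection. The only place that needs slight care is the boundary bookkeeping in the last step: checking that $\binom{n-1-(n-k)i}{n-j-1}$ really is zero exactly when $i > \lfloor j/(n-k)\rfloor$ (one must confirm the top index drops below $n-j-1$, using $n-1-(n-k)i < n-j-1 \iff j < (n-k)i$), and that no spurious nonzero terms appear between $\lfloor j/(n-k)\rfloor$ and $n-j$ when the two bounds disagree. I expect this edge-case verification, rather than any conceptual difficulty, to be the main thing to get right; everything else follows from the cited results of Bollinger together with the elementary generating-function identity above.
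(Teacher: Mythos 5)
Your proof is correct and is more self-contained than what the paper actually does: the paper proves nothing here and simply combines \cite[Thm.~3.3]{Bollinger1} with \cite[Eq.~3]{Bollinger2}, treating both the identification $\beta_j(n,k)=C_{n-k}(n-j,j)$ and the alternating-sum form as imported facts. Your slot-counting bijection (place the $n-1-j$ zeros to create $n-j$ gaps, then put at most $n-k-1$ ones in each) is the standard ``runs'' argument and gives the first equality cleanly, and the $\bigl(\frac{1-x^{n-k}}{1-x}\bigr)^{n-j}$ manipulation correctly reproduces the alternating sum. Your care about the summation range is well placed but slightly mischaracterized: the bound $i\le\lfloor j/(n-k)\rfloor$ is most transparently forced by $r=j-(n-k)i\ge 0$ in the series $(1-x)^{-(n-j)}$ (no negative powers of $x$), rather than by a vanishing-binomial argument — since when $(n-k)i>n-1$ the entry $\binom{n-1-(n-k)i}{n-j-1}$ has a negative upper index and, under the generalized-binomial convention, need not vanish. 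Anchoring the cutoff to the exponent $r$ rather than to the binomial coefficient sidesteps that convention issue entirely. With that small adjustment, your argument gives a complete replacement for the cited references.
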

We are now in position to prove:
\begin{thm} The error polynomial for an $(n,k)$ MDS code satisfies
\begin{equation} e(\rho) = \frac{n!}{(k-1)!} \rho^k + O(\rho^{k-1}).
\end{equation}
\label{thm:errorPolynomial}
\end{thm}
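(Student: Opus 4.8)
The plan is to isolate the dominant-order behaviour of $e(\rho)=\sum_{j=n-k}^{n-1}\alpha_j(n,k)\,v_{n-1-j,j}(\rho)$ as $\rho\to\infty$. Recall from Sec.~\ref{sec:closedForm} that if $\bm b$ is a permutation of $0^i1^j$ with $i+j$ constraints on an ordered vector of $s$ points, then $v_{ij}(\rho)=\frac{s!}{(s-j)!}\rho^{s-j}+O(\rho^{s-j-1})$. Here the ambient dimension is $s=n$, and an error graph of weight $j$ imposes $n-1-j$ ``$>$'' constraints and $j$ ``$<$'' constraints, so $v_{n-1-j,j}(\rho)=\frac{n!}{(n-j)!}\rho^{n-j}+O(\rho^{n-j-1})$. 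The exponent $n-j$ is maximised over the summation range $j\in\{n-k,\dots,n-1\}$ by the \emph{smallest} admissible weight, $j=n-k$, giving leading exponent $n-(n-k)=k$. Hence the only term that contributes at order $\rho^k$ is the one with $j=n-k$, and I would first argue that all terms with $j>n-k$ are $O(\rho^{k-1})$.

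Next I would pin down the coefficient of $\rho^k$. It equals $\alpha_{n-k}(n,k)$ times $\frac{n!}{(n-(n-k))!}=\frac{n!}{k!}$. So everything reduces to computing $\alpha_{n-k}(n,k)$, the number of binary strings $b_1\cdots b_{n-1}$ of Hamming weight exactly $n-k$ that \emph{do} contain a run of $n-k$ consecutive $1$'s — equivalently, that \emph{are} error vectors of minimal weight. But a weight-$(n-k)$ string containing a run of $n-k$ ones must consist of exactly that block of ones and nothing else; the block of $n-k$ consecutive ones can start at any of positions $1,2,\dots,(n-1)-(n-k)+1=k$. Therefore $\alpha_{n-k}(n,k)=k$. (Alternatively one can get this from Lemma~\ref{lemma:HammingWeight}: $\alpha_{n-k}(n,k)={n-1\choose n-k}-\beta_{n-k}(n,k)$, and the right-hand side evaluates to $k$ after simplifying the binomial identity.)

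Combining the two steps gives the coefficient of $\rho^k$ as $k\cdot\frac{n!}{k!}=\frac{n!}{(k-1)!}$, and all lower-degree contributions are absorbed into $O(\rho^{k-1})$, yielding $e(\rho)=\frac{n!}{(k-1)!}\rho^k+O(\rho^{k-1})$ as claimed.

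I expect the main obstacle to be bookkeeping rather than conceptual: one must be careful that the volume-polynomial estimate $v_{ij}(\rho)=\frac{s!}{(s-j)!}\rho^{s-j}+O(\rho^{s-j-1})$ from Thm.~\ref{thm:volPol} is applied with the correct value of $s$ (here $s=n$, the relevant dimension for the set-avoidance region $\mathcal R\subset[0,t]^n$, as opposed to the general $s=\sum m_i$ used elsewhere), and that the ``$>$'' versus ``$<$'' constraints do not affect the leading term — which is exactly the content of the remark that the volume polynomial depends on $\bm b$ only through the number of constraints. The combinatorial identification $\alpha_{n-k}(n,k)=k$ is the other place to be precise; the direct ``where can the all-ones block sit'' argument is the cleanest justification.
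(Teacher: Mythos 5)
Your proposal is correct and follows essentially the same route as the paper: use Remark~\ref{rmk:degreeV} to read off $\deg v_{ij}=n-j$ and the leading coefficient, observe that only the $j=n-k$ term reaches degree $k$, and multiply by $\alpha_{n-k}(n,k)=k$. The one small difference is that you justify $\alpha_{n-k}(n,k)=k$ by the direct count of where the all-ones block can sit, whereas the paper invokes Lemma~\ref{lemma:HammingWeight}; your direct argument is cleaner, and your leading coefficient $n!/k!$ for $v_{k-1,n-k}$ is the correct form (the paper's ``$k!\binom{n}{k}$'' appears to be a misprint for $(n-k)!\binom{n}{n-k}=n!/k!$, though of course the final product $k\cdot n!/k!=n!/(k-1)!$ is what both derivations intend).
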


\begin{proof}
When expressed as a polynomial in $\rho$, the volume polynomial is given by
$$e(\rho)=\sum_{j=0}^{{n}} b_s \rho^{s}.$$
From Remark~\ref{rmk:degreeV} which follows Thm.~\ref{thm:volPol},  $b_s = 0$ for $s = k+1, \ldots n$, i.e. each volume polynomial in Eq. \eqref{eq:errorPolynomial} has degree at most $k$. In fact, the only polynomial that  has degree $k$ is $v_{k-1,n-k}(\rho)$. Also from Remark~\ref{rmk:degreeV}, the coefficient of $\rho^k$ in $v_{k-1,n-k}(\rho)$ is $k! {n \choose k}$, whereas from Lemma \ref{lemma:HammingWeight}, $\alpha_{n-k}(n,k) = k$. Thus, the highest degree term of $e(\rho)$ is $\alpha_{n-k}(n,k)k! {n \choose k} \rho^k = n!/(k-1)! \rho^k$, from where the theorem follows. 
\end{proof}

\begin{cor} The volume of $\mathcal{R}$ for an $(n,k)$-code satisfies
\begin{equation} \vol \mathcal{R} = \frac{n!}{(k-1)!} t^k \trep^{n-k} + \sum_{s=0}^{k-1} a_s t^s \trep^{n-s},
\end{equation}
where $a_s$, $s = 0, \ldots, k-1$, are constants.
\label{cor:Cor4}
\end{cor}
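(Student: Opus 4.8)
The plan is to derive the Corollary directly from Theorem~\ref{thm:errorPolynomial} by simply unwinding the definition of the volume polynomial and the error polynomial. Recall that for a subregion $\hat{\mathcal S}$ of the fundamental simplex the volume polynomial is defined by $v(\rho) = (s!/\trep^{n})\vol\hat{\mathcal S}$ with $\rho = t/\trep$; in the present setting $s = n$ (exactly one failure per disk), so $e(\rho) = (n!/\trep^{n})\vol\mathcal{R}$, i.e. $\vol\mathcal{R} = (\trep^{n}/n!)\, e(t/\trep)$. Thus the entire content of the corollary is the translation of the polynomial identity $e(\rho) = \frac{n!}{(k-1)!}\rho^{k} + O(\rho^{k-1})$ into a statement about $\vol\mathcal{R}$ as a function of $t$ and $\trep$.

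First I would record that $e(\rho)$ is a polynomial in $\rho$ of degree exactly $k$: by Remark~\ref{rmk:degreeV} each constituent $v_{n-1-j,j}(\rho)$ has degree at most $k$, and the degree-$k$ term comes only from $v_{k-1,n-k}(\rho)$, so write $e(\rho) = \sum_{s=0}^{k} c_s \rho^{s}$ with $c_k = n!/(k-1)!$ by Theorem~\ref{thm:errorPolynomial}. Then substitute $\rho = t/\trep$ and multiply by $\trep^{n}/n!$:
\begin{equation}
\vol\mathcal{R} = \frac{\trep^{n}}{n!}\sum_{s=0}^{k} c_s \left(\frac{t}{\trep}\right)^{s} = \sum_{s=0}^{k}\frac{c_s}{n!}\, t^{s}\,\trep^{\,n-s}.
\end{equation}
Isolating the $s=k$ term gives $\frac{c_k}{n!}t^{k}\trep^{\,n-k} = \frac{1}{(k-1)!}t^{k}\trep^{\,n-k}$, and setting $a_s := c_s/n!$ for $s = 0,\ldots,k-1$ yields exactly the claimed form. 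The constants $a_s$ depend only on $n$ and $k$ (through the $\alpha_j(n,k)$ and the lower-order coefficients of the $v_{ij}$), not on $t$ or $\trep$, which is all that is asserted.

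There is essentially no obstacle here: the only thing to be careful about is the bookkeeping of which quantity is ``$s$'' — in Theorem~\ref{thm:volPol}/Remark~\ref{rmk:degreeV} the symbol $s$ is the total number of failures, which for the region $\mathcal{R}\subset[0,t]^n$ relevant to the set-avoidance bound equals $n$, so that the volume polynomial has $n$ in the normalizing factorial and degree bounded by $k = n-(n-k)$. I would state this identification explicitly at the start of the proof to avoid confusion, and then the corollary follows by the one-line substitution above. If desired, one can also remark that the representation is consistent with the worked $(4,2)$ example: there $\vol\mathcal{R} = 24t^2\trep^2 - 72\trep^3 t + 64\trep^4$, which matches $k=2$, leading coefficient $4!/(k-1)! = 24$, and lower-order constants $a_1 = -72$, $a_0 = 64$.
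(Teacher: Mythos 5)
Your overall strategy is the right one — the corollary is indeed a one-line translation of Theorem~\ref{thm:errorPolynomial} via the definitions — but there is a concrete error in the translation that actually loses a factor of $n!$, and you did not notice it because your $(4,2)$ check silently uses a different formula than the one you derived.

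You write $e(\rho) = (n!/\trep^{n})\vol\mathcal{R}$. This misapplies the definition. The volume polynomial $v(\rho)=(s!/\trep^{s})\vol\hat{\mathcal S}$ is defined for a subregion $\hat{\mathcal S}$ of the \emph{fundamental simplex} $\mathcal S$, i.e.\ for a single ordering. The set $\mathcal{R}$ in Eq.~\eqref{eqn:error-region}, by contrast, lives in $[0,t]^n$ without any ordering constraint, and, being symmetric in the coordinates, satisfies $\vol\mathcal{R} = n!\,\vol(\mathcal{R}\cap\mathcal S)$. Since $e(\rho)=\sum_{\bm{b}}v_{\bm{b}}(\rho)$ with the sum over error vectors, and each $v_{\bm{b}}(\rho)=(n!/\trep^{n})\vol\mathcal{R}_{\bm{b}}$ with $\mathcal{R}_{\bm{b}}\subset\mathcal S$, one gets
\begin{equation}
e(\rho)=\frac{n!}{\trep^{n}}\vol(\mathcal{R}\cap\mathcal S)=\frac{n!}{\trep^{n}}\cdot\frac{\vol\mathcal{R}}{n!}=\frac{\vol\mathcal{R}}{\trep^{n}},
\end{equation}
so the correct identity is $\vol\mathcal{R}=\trep^{n}\,e(t/\trep)$, with no $n!$ in the denominator. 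Carrying that through, $\vol\mathcal{R}=\sum_{s=0}^{k}c_{s}\,t^{s}\trep^{\,n-s}$ with $a_{s}=c_{s}$ (not $c_{s}/n!$) and leading coefficient $c_{k}=n!/(k-1)!$, which is what the corollary asserts. Your own derivation, by contrast, produced a leading coefficient of $1/(k-1)!$; and if you evaluate your formula $a_{s}=c_{s}/n!$ on the $(4,2)$ example you get $a_{1}=-72/24=-3$ and $a_{0}=64/24$, which disagree with the $-72$ and $64$ you correctly read off from the worked example. That discrepancy was the signal that the factor of $n!$ had been misplaced; the rest of the argument is fine once the identity $\vol\mathcal{R}=\trep^{n}e(\rho)$ is substituted.
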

\begin{rmk}  When $\trep/t$ is small ($\trep/t \to 0$), 
\begin{equation}
\vol \mathcal{R} \approx \frac{n!}{(k-1)!} t^k \trep^{n-k}.
\label{cor:volErroRegionAsympt}
\end{equation}
\end{rmk}

\subsection{Averaging the Set Avoidance Bound for Poisson Failures the Multiplicative Gap}

We now evaluate the bound for Poisson failures with rate $\lambda$ i.e. inter failure durations that are iid exponential with mean $1/\lambda$. We also evaluate the multiplicative gap between the set avoidance upper bound and the asymptotic result (\ref{eqn:asymptoticConst}).

\begin{thm}\label{prop:boundRj} Let $\mathcal{R}_j$ be the error region of a $(j,j-(n-k))$-code, $j \geq n-k+1$. The probability of data loss of an $(n,k)$ coded is bounded by
\begin{equation}
P (\mathcal{D}_t) \leq \sum_{j=n-k+1}^n {n \choose j} e^{-\lambda t(n-j)} (\lambda t)^j \left(\frac{\vol \mathcal{R}_j}{t^j}\right).
\label{eq:upperBoundRj}
\end{equation}
\end{thm}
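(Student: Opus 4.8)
The plan is to derive (\ref{eq:upperBoundRj}) by conditioning on $\bm{M}=(M_1,\ldots,M_n)$, the numbers of failures of the $n$ disks in $[0,t]$, and then discarding the terms whose contribution is subsumed by an $(n,k)$ code behaving like a shortened code on the disks that actually fail. First I would write
\[
P(\mathcal{D}_t) = \sum_{\bm{m}} P_{\bm{m}}(\mathcal{D}_t)\, P(\bm{M}=\bm{m}),
\]
and split the sum according to $J := |\{i : m_i \geq 1\}|$, the number of distinct disks that fail. If $J \leq n-k$ then no data loss is possible, so only $J \in \{n-k+1,\ldots,n\}$ contributes. For a fixed set $T$ of failing disks with $|T|=j$, the data loss event depends only on the failure instants of those $j$ disks, and the relevant error region is exactly that of a $(j, j-(n-k))$ MDS code (same erasure-correcting capability $n-k$, block length $j$) — this is the region $\mathcal{R}_j \subset [0,t]^j$ of Theorem~\ref{thm:Bound}.

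The key step is to bound $P_{\bm{m}}(\mathcal{D}_t)$ for such an $\bm{m}$ supported on $T$ using the complement of the set-avoidance bound of Theorem~\ref{thm:Bound} applied to the $(j,j-(n-k))$ code: for the disks in $T$ with $m_i \geq 1$,
\[
P_{\bm{m}}(\mathcal{D}_t) \leq 1 - \left(1 - \frac{\vol \mathcal{R}_j}{t^j}\right)^{\prod_{i\in T} m_i} \leq \frac{\vol \mathcal{R}_j}{t^j}\prod_{i\in T} m_i,
\]
where the last inequality is Bernoulli's inequality $1-(1-a)^N \leq Na$. Then I would sum over all $\bm{m}$ supported on a given $T$: since the $M_i$ are i.i.d.\ $\mathrm{Poisson}(\lambda t)$ and the $M_i$ for $i\notin T$ must be zero, the contribution of $T$ is at most $\frac{\vol\mathcal{R}_j}{t^j}\, e^{-\lambda t(n-j)}\, E\!\left[\prod_{i\in T} M_i \mathbf{1}_{M_i\geq 1}\right]$; bounding $M_i\mathbf{1}_{M_i\geq 1}\leq M_i$ and using independence together with $E[M_i] = \lambda t$ gives $(\lambda t)^j$. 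Finally, summing over the $\binom{n}{j}$ choices of $T$ and over $j$ from $n-k+1$ to $n$ yields exactly the right-hand side of (\ref{eq:upperBoundRj}).

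The main obstacle is making precise the claim that, restricted to the failing disks $T$ with $|T|=j$, the data loss event is governed by the error region $\mathcal{R}_j$ of the $(j,j-(n-k))$ code: one must check that the characterization of a data loss event in Sec.~\ref{sec:avoidance:app} (an ordered run of $n-k+1$ failures from distinct disks with successive gaps below $\trep$) is unaffected by the presence of disks that never fail, and that the uniform-on-$[0,t]$ joint law of the failure instants conditioned on $\bm{M}=\bm{m}$ makes the equality $P_{\bm{m}}(\bm{X}\in\mathcal{R}_j^c) = (1 - \vol\mathcal{R}_j/t^j)$ hold with the correct ambient dimension $j$. Once that identification is in place, the rest is the two elementary inequalities above and bookkeeping of the Poisson moments, so I do not expect further difficulty.
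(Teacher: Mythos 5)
Your proof is correct and takes essentially the same route as the paper: condition on the number $j$ of disks that fail at least once, apply the set-avoidance bound (Thm.~\ref{thm:Bound}) in the $j$-dimensional ambient space with error region $\mathcal{R}_j$, linearize via $1-(1-a)^N\leq Na$, and then average the Poisson moments. The only cosmetic difference is the order of operations: the paper first passes to the conditional expectation of the exponent (implicitly via Jensen's inequality, yielding the bound (\ref{eq:boundRj}) with exponent $(\lambda t/(1-e^{-\lambda t}))^j$) and then applies the inequality $(1-x)^r\geq 1-rx$, whereas you apply Bernoulli pointwise in $\bm{m}$ and then take the expectation, which sidesteps the Jensen step entirely; both orderings produce the identical right-hand side.
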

\begin{proof}Let $w(t)$ denote the random variable associated to the weight, i.e. the number of disks that failed at least once within $[0,t]$. We have:
\begin{eqnarray*}
P(\mathcal{D}_t)  = \sum_{j=n-k+1}^{n} { n \choose j} (1-e^{-\lambda t})^{j} e^{-\lambda t (n-j)} P(\mathcal{D}_t | w(t) = j).
\label{eq:boundLotsOfCalculations}
\end{eqnarray*}
and the RHS of the above equation can be bounded by using lower-dimensional versions of Thm. \ref{thm:Bound}:

\begin{equation}
\begin{split} 
&P(\mathcal{D}_t^c | w(t) = j) \geq \left(1 - \frac{\vol R_j}{t^j} \right)^{\lambda^j t^j/(1-e^{-\lambda t})^j}
\end{split}.
\label{eq:boundRj}
\end{equation}

The proof now follows by bounding \eqref{eq:boundRj} using the fact that $(1-x)^r \geq 1 - rx$ for any real numbers $r,x$ such that $r \geq 1$ and $0 \leq x \leq 1$.
\end{proof}

\begin{cor} Let $P^{(u)}(\mathcal{D}_t)$ be the upper bound in \eqref{eq:boundRj}. The multiplicative gap between $P^{(u)}(\mathcal{D}_t)$ and $P(\mathcal{D}_t)$ satisfies
\begin{equation}
\lim_{t_\rep \to 0} \frac{P^{(u)}(\mathcal{D}_t)}{P(\mathcal{D}_t)} = (e^{-\lambda t} + \lambda t)^{k-1}
\label{eq:multiplicativeGap}
\end{equation}
In particular, when $k = 1$ the bound is asymptotically tight.
\end{cor}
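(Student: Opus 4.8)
The plan is to compute the $t_\rep\to 0$ asymptotics of $P^{(u)}(\mathcal{D}_t)$ and of $P(\mathcal{D}_t)$ separately, each after dividing by $t_\rep^{\,n-k}$, and then take the quotient. For the denominator nothing new is needed: Theorem~\ref{thm:asymptoticBehavior} already gives $\lim_{t_\rep\to 0} P(\mathcal{D}_t)/t_\rep^{\,n-k} = \frac{n!}{(k-1)!}\lambda^{n-k+1}t$. So all the content is in evaluating the limit of the numerator and recognizing the resulting closed form.

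First I would take the explicit bound $P^{(u)}(\mathcal{D}_t)=\sum_{j=n-k+1}^{n}\binom{n}{j}e^{-\lambda t(n-j)}(\lambda t)^{j}\,\vol\mathcal{R}_j/t^{j}$ from \eqref{eq:upperBoundRj} (the pre-linearized form of the bound in \eqref{eq:boundRj} has the same leading term, since $1-(1-x)^{r}\sim rx$ as $x\to 0$ with $r$ fixed). Here $\mathcal{R}_j$ is the error region of a $(j,\,j-(n-k))$ code, whose erasure-correcting capability is still $n-k$. Applying Corollary~\ref{cor:Cor4} with $(n,k)$ replaced by $(j,\,j-(n-k))$ gives $\vol\mathcal{R}_j=\frac{j!}{(j-(n-k)-1)!}\,t^{\,j-(n-k)}t_\rep^{\,n-k}$ plus terms carrying a power of $t_\rep$ strictly larger than $n-k$, so $\vol\mathcal{R}_j/t^{j}\sim \frac{j!}{(j-(n-k)-1)!}(t_\rep/t)^{n-k}$. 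Since the sum over $j$ is finite, passing to the limit termwise yields
\[
\lim_{t_\rep\to 0}\frac{P^{(u)}(\mathcal{D}_t)}{t_\rep^{\,n-k}}=\frac{1}{t^{n-k}}\sum_{j=n-k+1}^{n}\binom{n}{j}e^{-\lambda t(n-j)}(\lambda t)^{j}\frac{j!}{(j-(n-k)-1)!}.
\]

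Finally I would divide this by $\frac{n!}{(k-1)!}\lambda^{n-k+1}t$ and simplify combinatorially. Substituting $r=j-(n-k)$ (so that $r$ runs from $1$ to $k$) and using $\binom{n}{(n-k)+r}\frac{((n-k)+r)!}{(r-1)!}=\frac{n!}{(k-r)!(r-1)!}$, the $n!$ in the numerator cancels the one in the denominator and, after cancelling the matching powers of $\lambda$ and $t$, the quotient collapses to $\sum_{r=1}^{k}\binom{k-1}{r-1}(\lambda t)^{r-1}e^{-\lambda t(k-r)}$; with $i=r-1$ this is $\sum_{i=0}^{k-1}\binom{k-1}{i}(\lambda t)^{i}(e^{-\lambda t})^{k-1-i}$, which the binomial theorem identifies as $(e^{-\lambda t}+\lambda t)^{k-1}$, equal to $1$ when $k=1$. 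I expect the only real obstacle to be bookkeeping: applying the substitution $(n,k)\mapsto(j,\,j-(n-k))$ in Corollary~\ref{cor:Cor4} consistently — keeping $n-k$, not the sub-code's redundancy, fixed — and carrying the factorial identity through without slips. There is no analytic subtlety, as every sum is finite and every limit is taken termwise.
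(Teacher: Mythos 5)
Your proposal is correct and follows essentially the same route as the paper: apply the leading-order asymptotics of Corollary~\ref{cor:Cor4} to each $\mathcal{R}_j$ inside \eqref{eq:upperBoundRj}, divide by the limit from Theorem~\ref{thm:asymptoticBehavior}, and recognize the resulting sum as a binomial expansion. The only difference is that you carry out explicitly the combinatorial cancellation (via $\binom{n}{j}\frac{j!}{(j-(n-k)-1)!}=\frac{n!}{(k-r)!(r-1)!}$ with $r=j-(n-k)$) that the paper dismisses as ``some algebraic manipulation,'' and you also note that the pre-linearized bound \eqref{eq:boundRj} has the same leading term, which is a small but welcome extra precision.
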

\begin{proof} From Equation \eqref{eq:boundRj} and (\ref{cor:volErroRegionAsympt}), the ratio $P^{(u)}(\mathcal{D}_t)/t_\rep^{n-k}$ is well approximated by
\begin{equation*}
\begin{split} & \approx \sum_{j=n-k+1}^{n} { n \choose j} e^{-\lambda t (n-j)} \left(\frac{j! t^{j-(n-k)}}{(j-(n-k)-1)!t^j} \right) {(\lambda t)^j}, \\ &=
\frac{n!}{(k-1)!} \lambda^{n-k+1} t \\ & \times \left[ \sum_{j=n-k+1}^{n} { k-1 \choose n-j} e^{-\lambda t (n-j)} (\lambda t)^{j-(n-k+1)}\right]
\end{split}
\end{equation*}
and the approximation is tight as $t_\rep \to 0$. Using Theorem \ref{thm:asymptoticBehavior} and after some algebraic manipulation, we conclude \eqref{eq:multiplicativeGap}.
\end{proof}

\section{Volume Calculations for Ordered Sets with Constrained Differences}
\label{sec:ErrorRegions}
Both of the approaches presented for estimating the data loss probability, the direct approach of Sec.~\ref{sec:closedForm} and the bounds based on set avoidance presented in Sec.~\ref{sec:avoidance} ultimately rely on the methods for volume calculation presented in this section.
The calculations presented here are for an ordered $s$-tuple, where $s$ is a dummy variable, no longer necessarily associated with the number of disks failures in the interval $[0,t]$.  In order to apply the results to Sec.\ref{sec:closedForm},  $s$ will indeed represent the total number of failures that occur in the interval $[0,t]$, whereas in order to apply the results to Sec.~\ref{sec:avoidance}, $s$  will be replaced by $n$, the number of disks in the system. The results in this generalizes the formulas in \cite{CampVai1} for any $(n,k)$ and provides the exact behavior of such formulas.

The volume of the error region can be determined by splitting it into disjoint simplices. Since, by definition, the region $\mathcal{R}$ is symmetric with respect to different orderings of the failures, we have ${\vol \mathcal{R}  = s! \, \vol({R} \cap \mathcal{S}})$.  We can thus restrict our analyses to ordered vectors $x_1 \leq x_2 \ldots \leq x_s$. The volume of the regions restricted to the ordered simplex is now presented.

We first observe that $\vol \mathcal{R}_{\bm{b}}$ only depends on the weight (number of nonzero entries) of $\bm{b}$ (see Lemma~\ref{lemma:Auxiliar}  and the remarks that follow in the Appendix). Thus it suffices to study graphs of the form $G_{0^i1^j}$, where $j$ is the weight of the vector $\bm{b}$. We will work with  volume polynomials $v_{ij}(\rho)$, a scaled version of the volume of the region $\mathcal{R}_{0^i1^j}$, where for convenience we repeat that  $\rho = t/t_{\rep}$ and $v_{ij}(\rho)$ associated with $\mathcal{R}_{0^i 1^j}$ is given  by ${v_{ij}(\rho)=s! \vol G_{0^i 1^j}/t_{rep}^s}.$

We prove in Appendix \ref{app:proofs} that $\mathcal{R}_{0^i}$ is a simplex with volume $(t-it_\rep)^s/s!$, provided $t \geq i t_\rep$. Alternatively, $v_{i0} = (\rho-i)^s$. For instance $v_{00}(\rho) = \rho^s$ is the volume polynomial of the region with no constraints on the differences $x_{i+1} - x_{i}$. Since the union of a region such that  $x_{i+1} - x_{i} \leq t_\rep$ and another one such that $x_{i+1} - x_{i} \geq t_\rep$ gives a region with no constraints on $x_{i+1} - x_i$, we have the following ``difference'' identity:
$$v_{i,j}(\rho)=v_{i,j-1}(\rho)-v_{i+1,j}(\rho).$$

Summarizing, the following rules provide a systematic method for calculating the volume polynomials associated with any node in the supergraph.
\begin{itemize}
\item{(Shift)}  $v_{i+1,j}(\rho) = v_{i,j}(\rho-1)$, $j=0,1,2,\ldots$, 
\item{(First Difference)} $v_{i,j}(\rho)=v_{i,j-1}(\rho)-v_{i+1,j}(\rho)$,
\item{(Initial Condition)} $v_{00}=\rho^{s}$.
\end{itemize}

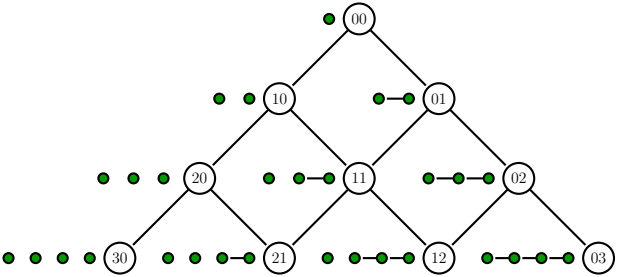
\begin{figure}[!htb]
\centering
\begin{tikzpicture}[-,>=stealth',shorten >=1pt,auto,node distance=3cm,
  thick,main node/.style={circle,fill=white!10,font=\sffamily\large\bfseries,scale=0.5}]

  \node[main node,draw] (1) {$00$};
  	  \node[main node,draw,scale=0.8,fill=black!40!green] (41) [left of = 1,xshift=2cm] {};
  \node[main node,draw] (2) [below left of=1] {${10}$};
  	  \node[main node,draw,scale=0.8,fill=black!40!green] (37) [left of = 2,xshift=2cm] {};
	  \node[main node,draw,scale=0.8,fill=black!40!green] (38) [left of = 37,xshift=2cm] {};
	  
  \node[main node,draw] (3) [below right of=1] {${01}$};
    	  \node[main node,draw,scale=0.8,fill=black!40!green] (39) [left of = 3,xshift=2cm] {};
	  \node[main node,draw,scale=0.8,fill=black!40!green] (40) [left of = 39,xshift=2cm] {};
	  
  \node[main node,draw] (4) [below left of=2] {${20}$};
	  \node[main node,draw,scale=0.8,fill=black!40!green] (28) [left of = 4,xshift=1.8cm] {};
	  \node[main node,draw,scale=0.8,fill=black!40!green] (29) [left of = 28,xshift=2cm] {};
	  \node[main node,draw,scale=0.8,fill=black!40!green] (30) [left of = 29,xshift=2cm] {};
  
  \node[main node,draw] (5) [below left of=3] {$11$};
  	  \node[main node,draw,scale=0.8,fill=black!40!green] (31) [left of = 5,xshift=2cm] {};
	  \node[main node,draw,scale=0.8,fill=black!40!green] (32) [left of = 31,xshift=2cm] {};
	  \node[main node,draw,scale=0.8,fill=black!40!green] (33) [left of = 32,xshift=2cm] {};
	  
  \node[main node,draw] (6) [below right of=3] {$02$};
	  \node[main node,draw,scale=0.8,fill=black!40!green] (34) [left of = 6,xshift=2cm] {};
	  \node[main node,draw,scale=0.8,fill=black!40!green] (35) [left of = 34,xshift=2cm] {};
	  \node[main node,draw,scale=0.8,fill=black!40!green] (36) [left of = 35,xshift=2cm] {};
	  
  \node[main node,draw] (7) [below right of=5] {$12$};
  \node[main node,draw] (8) [below left of=5] {$21$};
  \node[main node,draw] (9) [below left of=4] {$30$};
  \node[main node,draw] (11) [below right of=6] {$03$};
  \node[main node,draw,scale=0.8,fill=black!40!green] (12) [left of = 11,xshift=2cm] {};
  \node[main node,draw,scale=0.8,fill=black!40!green] (13) [left of = 12,xshift=2.1cm] {};
  \node[main node,draw,scale=0.8,fill=black!40!green] (14) [left of = 13,xshift=2.1cm] {};
  \node[main node,draw,scale=0.8,fill=black!40!green] (15) [left of = 14,xshift=2.1cm] {};
  
  \node[main node,draw,scale=0.8,fill=black!40!green] (16) [left of = 7,xshift=2cm] {};
  \node[main node,draw,scale=0.8,fill=black!40!green] (17) [left of = 16,xshift=2.1cm] {};
  \node[main node,draw,scale=0.8,fill=black!40!green] (18) [left of = 17,xshift=2.1cm] {};
  \node[main node,draw,scale=0.8,fill=black!40!green] (19) [left of = 18,xshift=2.1cm] {};
  
    \node[main node,draw,scale=0.8,fill=black!40!green] (20) [left of = 8,xshift=2cm] {};
  \node[main node,draw,scale=0.8,fill=black!40!green] (21) [left of = 20,xshift=2.1cm] {};
  \node[main node,draw,scale=0.8,fill=black!40!green] (22) [left of = 21,xshift=2.1cm] {};
  \node[main node,draw,scale=0.8,fill=black!40!green] (23) [left of = 22,xshift=2.1cm] {};

    \node[main node,draw,scale=0.8,fill=black!40!green] (24) [left of = 9,xshift=2cm] {};
  \node[main node,draw,scale=0.8,fill=black!40!green] (25) [left of = 24,xshift=2.1cm] {};
  \node[main node,draw,scale=0.8,fill=black!40!green] (26) [left of = 25,xshift=2.1cm] {};
  \node[main node,draw,scale=0.8,fill=black!40!green] (27) [left of = 26,xshift=2.1cm] {};

  \path[every node/.style={font=\sffamily\small}]
    (1) edge node [right] {} (2)
    (1) edge node [right] {} (3)
    (2) edge node [right] {} (4)
    (2) edge node [right] {} (5)
    (3) edge node [right] {} (5)
    (3) edge node [right] {} (6)
    (5) edge node [right] {} (7)
    (5) edge node [right] {} (8)
    (4) edge node [right] {} (8)
    (4) edge node [right] {} (9)
    (6) edge node [right] {} (7)
    (6) edge node [right] {} (11)
    (12) edge node [right] {} (13)
    (13) edge node [right] {} (14)
    (14) edge node [right] {} (15)
    (16) edge node [right] {} (17)
    (17) edge node [right] {} (18)
    (20) edge node [right] {} (21)
    (31) edge node [right] {} (32)
    (34) edge node [right] {} (35)
    (35) edge node [right] {} (36)
    (39) edge node [right] {} (40);

\end{tikzpicture}
\caption{Supergraph representation of the set of graphs $G_{0^i 1^{j}}$ for the $(4,2)$ code. A vertex with label $ij$ represents the graph $G_{0^i 1^{j}}$ (depicted on the left of each node of the supergraph). Note that the number of constraints
increases from zero at the top layer of the supergraph to three at the bottom layer of the supergraph.} 
\label{fig:graphp}
\end{figure}

The graphs $G_{0^i 1^j}$ are conveniently organized into a \emph{supergraph}, as illustrated in Fig. \ref{fig:graphp}, in order to facilitate computation of the volume polynomials. In this graph, each node is associated with a volume polynomial. For example, the top or root node in Fig.\ref{fig:graphp} is associated with the volume polynomial $v_{00}(\rho)$ and the polynomial associated with the graph  $G_{011}$ is $v_{12}$. 

We revisit the $(4,2)$ MDS code and compute the volume of the error region.
\begin{exe}[$(4,2)$-Code]
The  error vectors of a $(4,2)$ code are represented in Fig.~\ref{fig:42code-1}. Summing the volume polynomial corresponding to all error vectors and considering all orderings of the vector $(x_1,\ldots,x_n)$ we obtain
\begin{eqnarray}
\frac{1}{t_{rep}^4}\vol {\mathcal R}_w  & = &  2 v_{12}(\rho)+  v_{03}(\rho) \stackrel{(a)}{=}  v_{00}-v_{10}-v_{20}+v_{30} \nonumber \\
& = & \rho^4 - (\rho-1)^4 - (\rho-2)^4 + (\rho-3)^4  \nonumber \\ &=& 24 \rho^2 - 72 \rho + 64,
\end{eqnarray}
where in $(a)$ we applied the first difference rule and $(b)$ is a combination of the shift rule and the initial condition. This gives us, for $t \geq 3 t_{\trep}$,
$$\mbox{\upshape{vol} }\mathcal{R} = 24 t^2 t_{\text{rep}}^2-72 t_{\text{rep}}^3 t+64 t_{\text{rep}}^4.$$
\label{ex:one}
\end{exe}

In the following two examples, we calculate $\sum_{\bm{b} \in {\mathcal B}_{\bm{f}}} v_{\bm{b}}(\rho)$ 
in (\ref{eqn:thmMain}), related to the direct calculation of the data loss probability.
\begin{exe}
Suppose $\bm{m}=(1,1,1,1)$ and a $(4,2)$ MDS code is used. Consider the failure pattern $\bm{f}=1234$. Then  $B_{\bm{f}}=\{ 2(0^1 1^2),0^01^3 \}$ and $s=4$.
From this we can write down the volume polynomial as $2v_{12}(\rho)+v_{03}(\rho)$.
Upon simplification we obtain $v_{00}(\rho) -v_{10}(\rho)-v_{20}(\rho) +  v_{30}(\rho)=\rho^4-(\rho-1)^4-(\rho-2)^4+(\rho-3)^4= 24\rho^2-72\rho+64$. 
\label{ex:two}
\end{exe}
\begin{rmk}
Observe that the volume polynomials for Ex.~\ref{ex:one} and Ex.~\ref{ex:two} are identical.
\end{rmk}
Another example related to the direct calculation.
\begin{exe}
Suppose $\bm{m}=(2,2,1,1)$ and a $(4,2)$ MDS code is used. Consider the failure pattern $\bm{f}=121234$. Then  $B_{\bm{f}}=\{ 2(0^1 1^2),0^01^3 \}$ and $s=6$.
From this we can write down the volume polynomial as $2v_{12}(\rho)+v_{03}(\rho)$.
Upon simplification we obtain $v_{00}(\rho) -v_{10}(\rho)-v_{20}(\rho) +  v_{30}(\rho)=\rho^6-(\rho-1)^6-(\rho-2)^6+(\rho-3)^6= 60\rho^4-360\rho^3+960\rho^2-1260\rho+664$. 
\end{exe}

The following lemma uses the aforementioned rules to provide closed form expressions for $v_{ij}(\rho)$. 
\begin{thm}
The volume polynomial $v_{ij}(\rho)=\sum_{r}a_{r} \rho^{r}$ satisfies the following properties
\newline
(i)
\begin{equation}
v_{ij}(\rho)=\sum_{l=0}^j (-1)^{j-l} {j \choose l} (\rho-i-j+l)^s.
\label{eq:ShiftForm}
\end{equation}
(ii) \begin{equation}
a_{r}= {s \choose r} j! (-1)^{s-r+j} \left( \sum_{m=0}^{s-j}{s-r \choose m} i^{m} S(s-r-m,j) \right),
\end{equation}
where $S(l,m)$ is a Stirling number of the second kind (see, e.g., \cite{Cameron:1994}). 

\label{thm:volPol}
\end{thm}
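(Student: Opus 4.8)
The plan is to prove \eqref{eq:ShiftForm} first, directly from the recursion rules (shift, first difference, initial condition) established above, and then to read off the coefficient formula in part (ii) from \eqref{eq:ShiftForm} by a binomial expansion together with the standard finite-difference representation of Stirling numbers of the second kind. Because, as already noted, $\vol \mathcal{R}_{\bm{b}}$ depends on $\bm{b}$ only through its Hamming weight, it suffices to treat the canonical vectors $0^i1^j$, which is exactly what the quantities $v_{ij}(\rho)$ encode.

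For part (i) I would induct on $j$. The case $j=0$ is the identity $v_{i0}(\rho)=(\rho-i)^s$, obtained by applying the shift rule $i$ times to the initial condition $v_{00}(\rho)=\rho^s$. For the inductive step, the first-difference rule removes one of the $j$ tight ($\le$) constraints and writes $v_{i,j}$ as a difference of two volume polynomials each carrying only $j-1$ tight constraints; iterating this down to level $0$ shows that $v_{i,j}(\rho)$ equals the $j$-fold finite difference (in the index $i$) of the sequence $i'\mapsto v_{i',0}(\rho)=(\rho-i')^s$, namely
\[
v_{i,j}(\rho)=\sum_{l=0}^{j}(-1)^{l}\binom{j}{l}(\rho-i-l)^{s}.
\]
Reindexing $l\mapsto j-l$ and using $\binom{j}{l}=\binom{j}{j-l}$ turns this into the stated form \eqref{eq:ShiftForm}.

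For part (ii) I would start from \eqref{eq:ShiftForm}, expand each $(\rho-i-j+l)^{s}$ by the binomial theorem, interchange the two finite sums, and isolate the coefficient of $\rho^{r}$:
\[
a_{r}=\binom{s}{r}(-1)^{s-r}\sum_{l=0}^{j}(-1)^{j-l}\binom{j}{l}(i+j-l)^{s-r}.
\]
After the substitution $l\mapsto j-l$, the inner sum becomes $(-1)^{j}$ times the $j$-th forward difference of $x\mapsto x^{s-r}$ evaluated at $x=i$. Expanding $(i+l)^{s-r}=\sum_{p}\binom{s-r}{p}i^{\,s-r-p}l^{\,p}$ and invoking the classical identity $\sum_{l=0}^{j}(-1)^{j-l}\binom{j}{l}l^{\,p}=j!\,S(p,j)$, and then setting $m=s-r-p$, collapses the inner sum to $j!\sum_{m}\binom{s-r}{m}i^{m}S(s-r-m,j)$. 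Substituting this back yields the asserted expression for $a_{r}$; one may freely extend the range of $m$ up to $s-j$ rather than only to $s-r-j$, since the extra terms involve $S(s-r-m,j)$ with second argument larger than the first and hence vanish.

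I do not anticipate a genuine obstacle in this argument; the only place that demands care is the sign and range bookkeeping in part (ii) — in particular verifying that the two reindexings ($l\mapsto j-l$ inside the $\rho^{r}$ coefficient, and $p\mapsto s-r-m$ after the Stirling identity) compose correctly — together with making sure, in part (i), that the first-difference rule is unwound in the direction that produces $(\rho-i-l)^{s}$ rather than $(\rho-i+l)^{s}$.
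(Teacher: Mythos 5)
Your proof is correct and follows essentially the same route as the paper: part (i) is the observation that $v_{ij}$ is obtained by applying $S^i\Delta^j$ to $\rho^s$ (you derive this by unwinding the shift/difference rules rather than writing the operators explicitly, but it is the same computation), and part (ii) proceeds by binomial expansion of $(\rho-i-j+l)^s$, interchange of sums, and the finite-difference identity $\sum_{l}(-1)^{j-l}\binom{j}{l}l^p=j!\,S(p,j)$. One small point worth flagging: you implicitly (and correctly) use the difference rule in the form $v_{i,j}=v_{i,j-1}-v_{i+1,j-1}$, which is the version consistent with $v_{ij}=S^i\Delta^j\rho^s$; the rule as displayed in the paper has a small index typo.
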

\begin{proof}
Given a function $f(x)~:~\mathbb{R} \rightarrow \mathbb{R}$, define the shift operator $S(f(x)):=f(x-1)$ and the first difference operator $\Delta(f(x)):=f(x)-f(x-1)$. Then (i) follows from the observation that $v_{ij}(\rho)=S^i\Delta^j (\rho^s)$.  Write $S=(1-\Delta)$ in order to
express the operator in terms of powers of $\Delta$. This gives 
\begin{equation}
v_{ij}(\rho)=\left(\sum_{l=0}^i (-1)^{i-l}{i \choose l} \Delta^{i+j-l} \right)(\rho^s).
\end{equation}
To prove (ii), expand the last term in (\ref{eq:ShiftForm}) and interchange the order of summation, so that
\begin{equation}
v_{ij}(\rho)=\sum_{m=0}^{n} \rho^{m} {s \choose m} (-1)^{s-m} \sum_{l=0}^{j} (-1)^{l} {j \choose l}(i+l)^{s-m}.
\end{equation}
The result follows directly by further expanding the last term in the above equation and from an identity for Stirling numbers of the second kind (e.g. Prop. 5.3.5, \cite{Cameron:1994}).
 \end{proof}

\begin{rmk}
 ${deg(v_{ij}(\rho))=s-j}$ and $a_{s-j}=j! {s \choose j}$.
\label{rmk:degreeV}
\end{rmk}

%
%


\section{Numerical and Simulation Results}
\label{sec:Numerical}
\begin{figure}[!htb]
\centering
\includegraphics[scale=0.53]{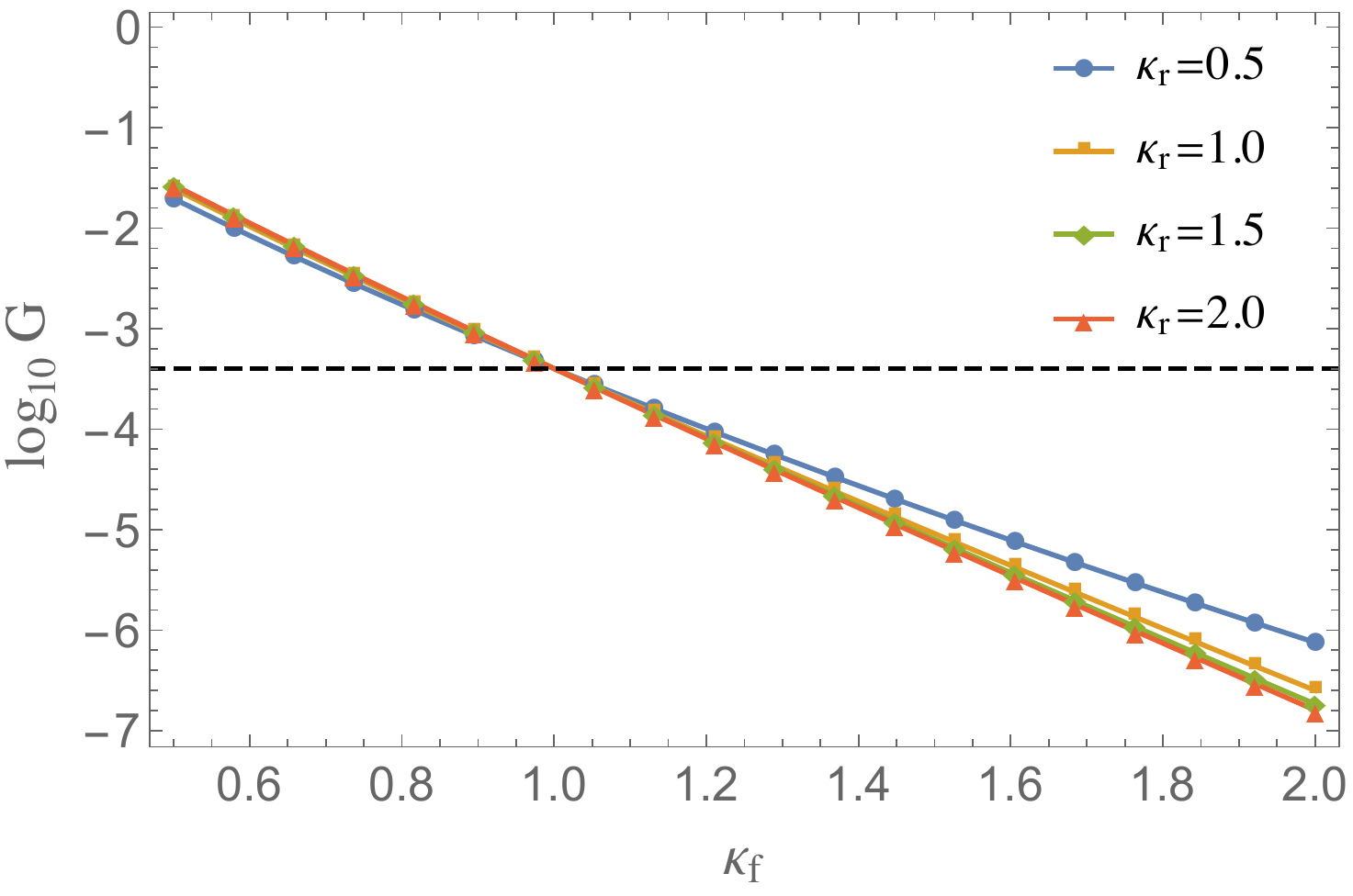}
\hfill
\includegraphics[scale=0.53]{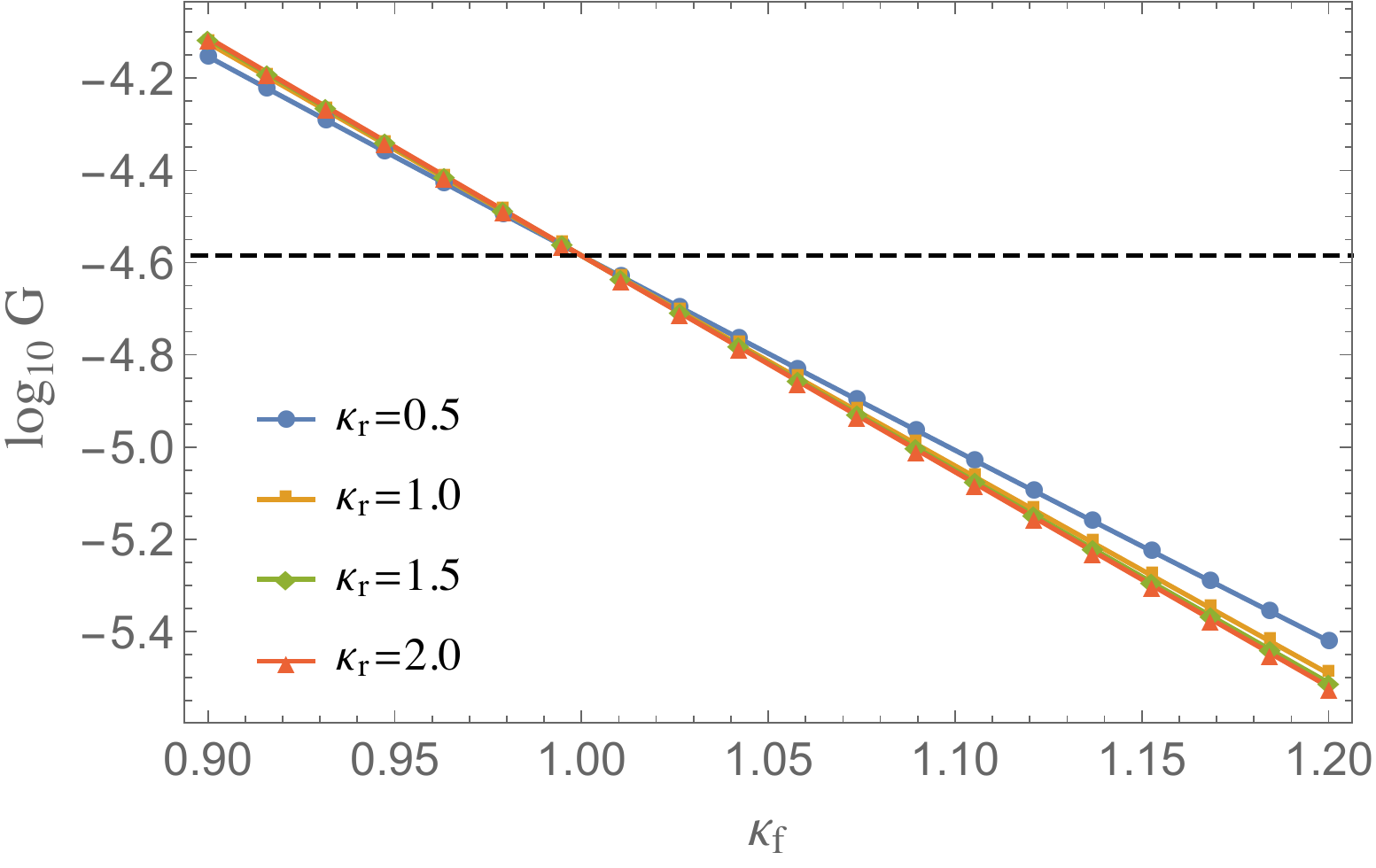}
\caption{$G=P(Y<Z)$ for Weibull distributed failures and repairs for  $(1/\lambda,1/\mu)=(0.25,10^{-4})$ (left), and $(1/\lambda,1/\mu)=(461386,12)$ (right).  The horizontal line is the value of $G$ for exponentially distributed repair and failure distributions. }
\label{fig:Weibull}
\end{figure}

\begin{table*}[!htb]
\centering
\begin{tabular}{cccccccccc}
$n$ & $k$ & $t$ & $\kappa_f$ & $\kappa_r$ & $1/\lambda$ & $1/\mu$ & $P(D_t)$ (\ref{eq:LimitingForm}) & $P(D_t)$ (Simulation) & Standard Deviation \\ \hline
4 & 2 & 1.0 & 1.5 & 2.0 & 0.1 & 0.001 & $3.343\times 10^{-6}$ & $3.429 \times 10^{-6}$ & $4.07 \times 10^{-7}$ \\ 
4 & 2 & 1.0 & 0.75 & 2.0 & 0.1 & 0.001 & 0.0044 & 0.0044 & $5.38 \times 10^{-4}$ \\
4 & 2 & 1.0 & 0.75 & 0.75 & 0.1 & 0.001 & 0.0035 & 0.0036 & $1.94 \times 10^{-4}$ \\
4 & 2 & 1.0 & 0.75 & 0.75 & 0.1 & $10^{-6}$ & $1.185 \times 10^{-7}$ & $1.221 \times 10^{-7}$ & $1.22 \times 10^{-8}$ \\ 
8 & 5 & 1.0 & 0.75 & 1.25 & 0.001 & $10^{-6}$ & $8.9289 \times 10^{-5}$ & $8.8383 \times 10^{-5}$ & $1.2397 \times 10^{-5}$ \\ 
8 & 5 & 1.0 & 2.0 & 2.0 & 0.01 & $0.001$ & $3.981 \times 10^{-5}$ & $4.012 \times 10^{-5}$ & $1.548 \times 10^{-6}$ \\ 
8 & 5 & 1.0 & 0.5 & 2.0 & 0.01 & $10^{-6}$ & $1.013 \times 10^{-4}$ & $1.008 \times 10^{-4}$ & $2.766 \times 10^{-6}$ \\
\hline
\end{tabular}
\caption{Validation of Eqn. (\ref{eq:LimitingForm}) through simulation. $t$ is the observation time window, $(\kappa_f,1/\lambda)$ and $(\kappa_r,1/\mu)$ are the (shape,mean) values for the Weibull distributed failure and repair durations, resp.  }
\label{table:SimulationResults}
\end{table*}
The  data loss probability expression (\ref{eq:LimitingForm}) for general distributions was validated through simulation for the family of Weibull distributions, 
with pdf given by
\begin{equation}
f(x) = \frac{\kappa}{\alpha}\left(\frac{x}{\alpha}\right)^{\kappa-1}e^{-(x/\alpha)^{\kappa}},
\end{equation}
where $\kappa$ is the \textit{shape parameter} and $\alpha$ is the \textit{scale parameter}. Its mean is given by $\alpha \Gamma(1+1/\kappa)$, where $\Gamma$ is the Gamma function. The exponential distribution is a special case of this family, for $\kappa = 1$ and $\alpha=1/\lambda$.
A few results are tabulated in Table~\ref{table:SimulationResults}. The agreement between theory and simulation is seen to be close. The simulator used here generates a sequence of failure and repair durations according to the specified distribution, calculates runs of the event $Y<Z$ and for each run generates disk labels drawn uniformly and iid on $\{1,2,\ldots,n\}$. The standard deviation of the estimates reported in Table~\ref{table:SimulationResults} was varied by changing the number of independent experiments.

The probability  $G=P(Y<Z)$ (defined immediately following (\ref{eqn:JointProb})) that a failure will occur prior to the completion of a repair is sensitive to the shape of the distribution.
This is illustrated, in Fig.~\ref{fig:Weibull}, for the family of Weibull distributions, 
In the plots below, it is assumed that failures are Weibull distributed with mean failure duration set to $1/\lambda=461386$ and mean repair duration $t_\rep = 1/\mu$ set to $12.0$, values that were obtained by Elerath and Pecht in an experimental study of disk failures~\cite{elerath2007enhanced}. For a fixed value of $\kappa_r$, the shape parameter for the Weibull repair duration distribution with mean $t_\rep$, the shape parameter $\kappa_f$ of the failure duration is varied, and the scale parameter is adjusted to keep the mean constant at $1/\lambda$.  As $\kappa_f$ varies the value of $G$ is seen to change significantly. This plot also shows the value of $G$ for exponentially distributed failure and repair durations. The value of $G$ is seen to be less sensitive to the shape parameter of the repair distribution.

\begin{figure}[!htb]
\centering
\includegraphics[scale=0.50]{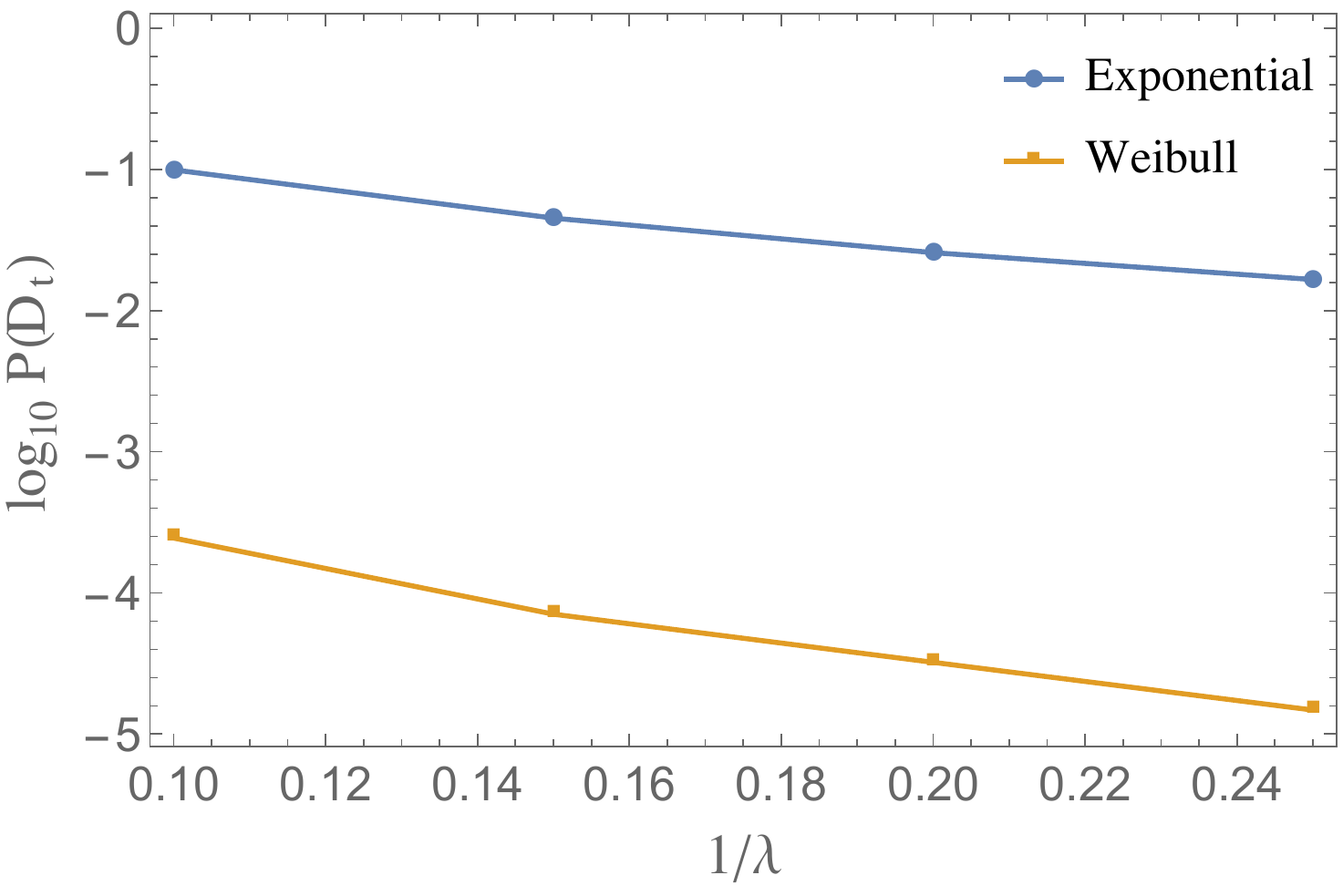}
\caption{Probability of data loss for time window of duration $t=1$,  when (failures, repairs) are both Weibull with $(\kappa_f,\kappa_r)=(0.5,2.0)$ and when (failures, repairs) are exponentially distributed. In all cases the mean repair duration is  $10^{-4}$. The scale parameters of the Weibull distributions are adjusted to keep the mean failure duration $1/\lambda$ and mean repair duration constant. }
\label{fig:CodeWeibullExp}
\end{figure}

In Fig.~\ref{fig:CodeWeibullExp} the performance of a $(4,2)$ code is compared for Weibull and exponential distributions for fixed mean repair and failure durations. As already observed, the value of $G$ is seen to depend on the shape parameter, and the impact on the data loss probability is magnified by redundancy $(n-k)$ of the code. The predicted gap in reliability is verified by the simulation.
The impact on the reliability of the system especially for a powerful code can be quite dramatic---an order of magnitude difference in $G$ becomes ten orders of magnitude for a code with $10$ check symbols.  It is also clear that the mean time between failures for individual disks is not a sufficient determinant of overall system reliability.

\begin{figure}[htb]
\centering
\includegraphics[width=3.0in]{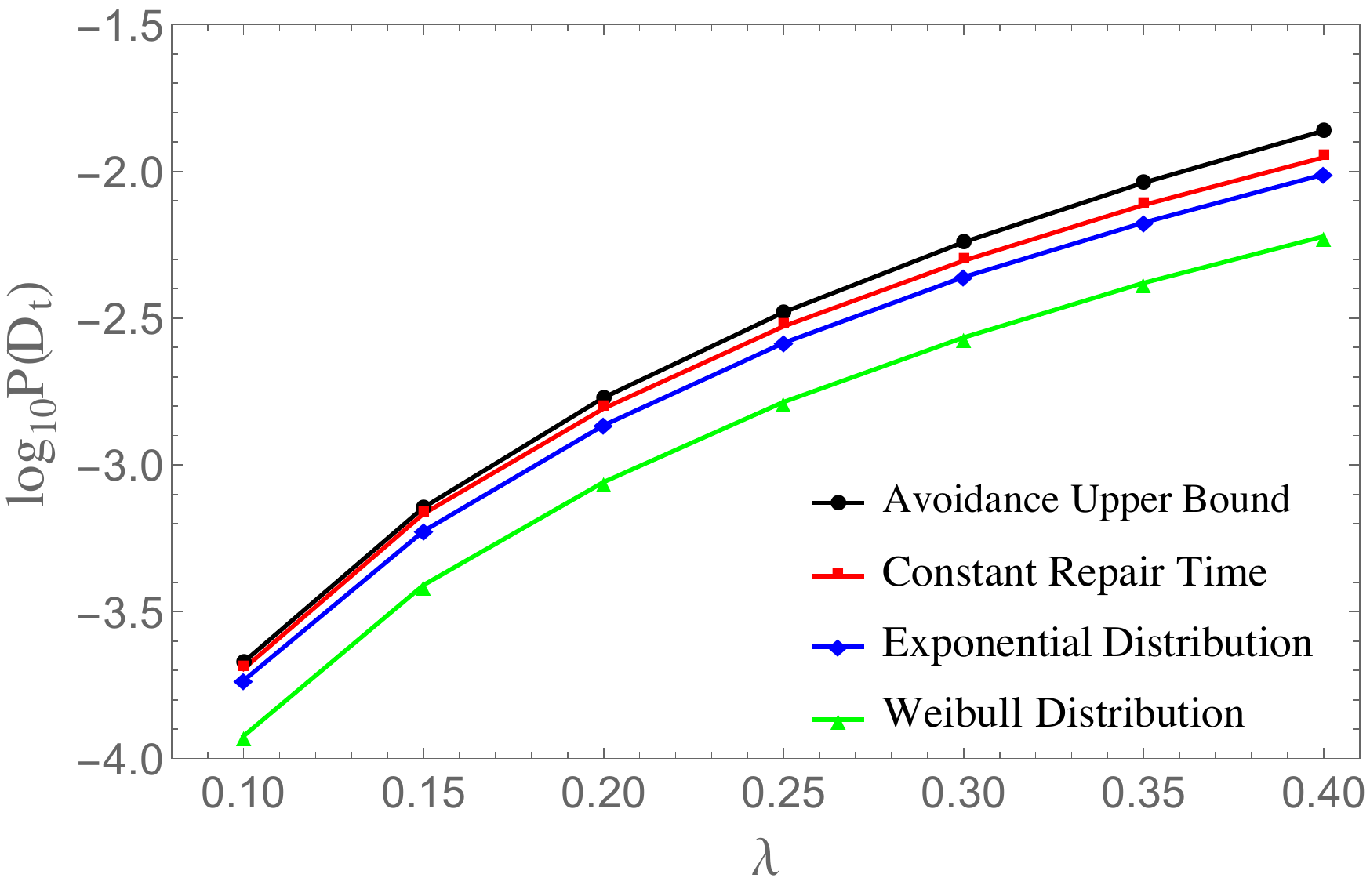}
\caption{Simulation results for exponential, constant and Weibull distributions, for $(n,k)=(4,2)$. Here $t = 1$, the mean repair time is set to $0.1$, and the mean time to failure is assumed to be exponential with parameter $\lambda$ between $0.1$ and $0.4$. For the Weibull distribution, $\kappa = 0.5$ and $\beta = t_\rep/2$ (so its mean is $t_\rep$). }
\label{fig:ComparisonsToConstant}
\end{figure}

In Fig.~\ref{fig:ComparisonsToConstant} we show that constant repair duration represents the worst case among the distributions considered. The simulation was carried out for a $(4,2)$-code, for fixed mean repair time ($t_{\rep}=0.1$), and exponential failure times with mean $1/\lambda$. For the Weibull distribution, the shape parameter was chosen to be equal to $0.5$ and the scale parameter $t_\rep/2$, so that the mean equals $t_{\rep}$. Simulations were based on $10^{7}$ samples for each value of $\lambda$, using the algorithm of \cite{sasidharan2013high} for the failure process.

Also in Fig~\ref{fig:ComparisonsToConstant}, for the {constant-repair-duration} case, we compared the simulation results and the upper bound, and a good agreement between both in cases where $\lambda t$ is close to unity. It is to be noted that this is the case in many practical situations. We stress the fact that in this case we have a \textit{true} upper bound on the reliability, whereas other methods proposed in the literature only provide estimates. We also caution the reader that the gap between the upper bound and the simulation results can be large when the product $\lambda t$ is large, as we have observed earlier in our discussion about the multiplicative gap.

\begin{figure}[htb]
\centering
\includegraphics[width=3.0in]{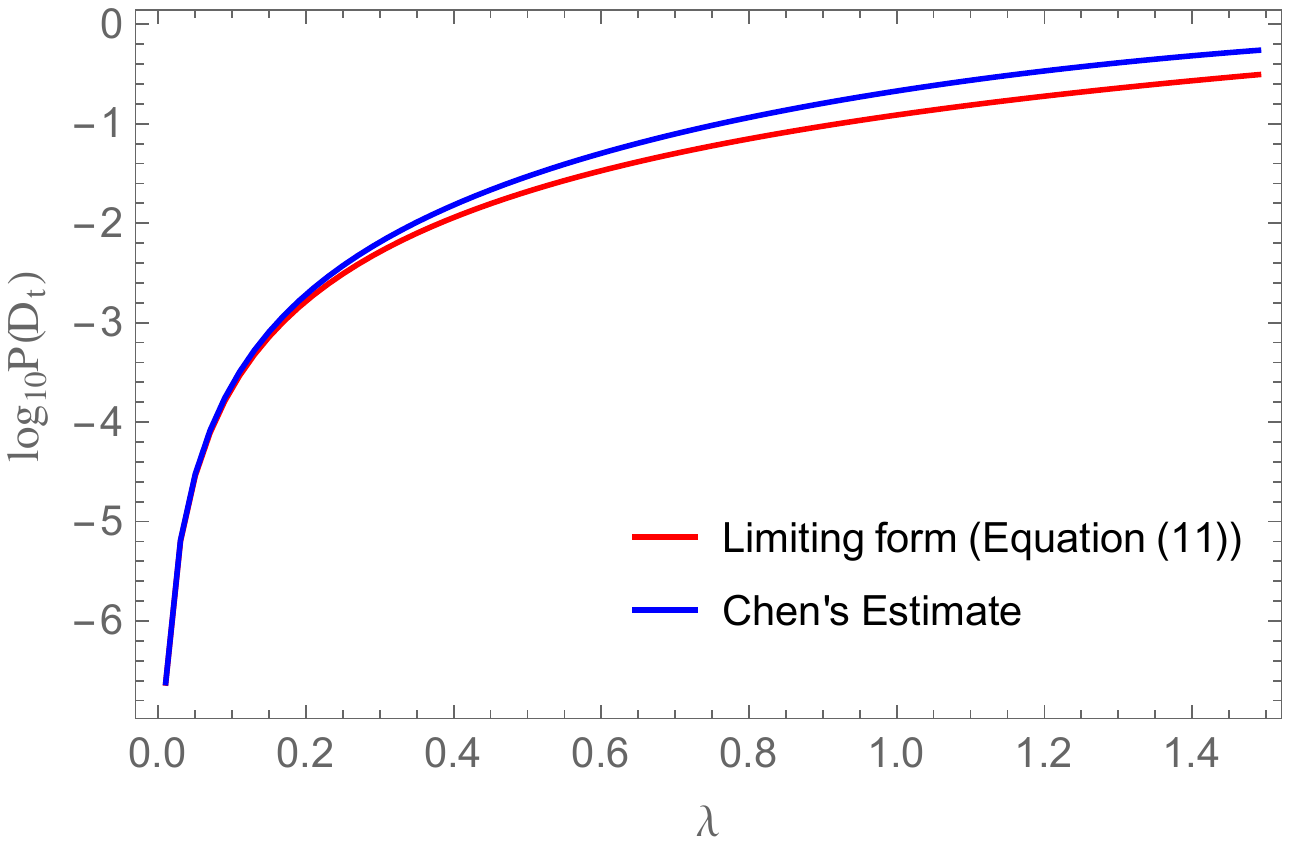}
\caption{Comparison between the limiting form (\ref{eq:LimitingForm}) for the exponential distribution and Chen's estimate with mean repair time $1/\mu = t_\rep = 0.1$, $t = 1$ and $(n,k)=(4,2)$.}
\label{fig:boundVsChen}
\end{figure}

For exponential failure and repair duration distributions it is shown in ~\cite{resch2011development}, based on the results in \cite{Chen:1994} (see also \cite[Eq. 6.69]{VenkThesis2012}), that the average time until a data loss event for an $(n,k)$ MDS erasure code is given by
\begin{equation} MTTDL_c = \frac{1}{\lambda^{n-k+1} t_\rep^{n-k}} \frac{(k-1)!}{n!}.
\end{equation}
From this we get the approximation $\hat{P}(D_t) = 1- \hat{R}(t) = 1 - e^{-t/MTTDL_c}$.  A first order approximation of $P(D_t)$ is
\begin{equation}
\hat{P}(D_t) \approx \frac{n!}{(k-1)!} \lambda^{n-k+1} t_\rep^{n-k} t
\label{eq:gapchen}
\end{equation}
A comparison of (\ref{eq:LimitingForm}) for exponential failure and repair distributions and (\ref{eq:gapchen}) is provided in Fig~\ref{fig:boundVsChen}. While there is close agreement in the limit, a deviation is observed for moderately large probabilities.

\section{Summary, Conclusions and Future Work}
\label{sec:Conclusions}

We have addressed the problem of directly evaluating the probability of data loss in an erasure coded distributed data storage system. A formula is derived for general  iid failure and repair duration distributions using combinatorial methods. For the case where  the repair duration is constant, we develop a  combinatorial-geometric approach that enables us to directly calculate and bound the data loss probability, in contrast to widely used methods that estimate the integral of the reliability function.  Further, our analysis is more refined, in the sense that we are able to derive expressions for the data loss probability conditioned on the number of failures in a given time window.

Our analytic results for general distributions indicate that $G$, the probability that a failure duration is smaller than a repair duration, is sufficient  for characterizing the data loss probability for highly reliable systems. In particular, distributions with the same mean failure and repair durations are seen to exhibit a wide range of values for $G$ and hence data loss probability.
This provides motivation for studying, in addition to erasure coding strategies, the impact of networking technologies, and network design, as well as the impact of physical component design (mechanical components in disk drives) with $G$ as a figure of merit. 

Finally, we mention that the set avoidance bound misses some of the subtle correlations between $n$-tuples of failure times. Analytic methods for taking these correlations into account should help to improve this bound. We leave this to future work.

\section{Acknowledgement}
We thank the reviewers and the AE for carefully reading the manuscript and for the numerous suggestions that have resulted in many improvements. This work was begun while the second author was visiting AT\&T Labs-Research in 2013.
The authors acknowledge with appreciation Prof. Sueli Costa for enabling the visit of the first author (VV) to Campinas in 2014.

\begin{appendices}
\section{Combinatorial Preliminaries}
\label{app:proofs}

\subsection{Properties of the Error Polytope (Sec. \ref{sec:ErrorRegions}) }
We now provide formal justification of the general rules for calculating the volume polynomial $v_{ij}(\rho)$.
We start with some observations on the error region and error graphs.

\begin{lemma}
Let $j_0, \ldots, j_i$ be integers such that $0 = j_0 < j_1 < j_2 < \ldots < j_i < s$. Consider the region
\begin{equation} \mathcal{R} = \left\{ (x_1,\ldots,x_s) : \begin{array}{c} 0 \leq x_1 \leq \ldots \leq x_s \leq t \\
x_{j_l+1} - x_{j_l} \geq t_{\text{rep}}, \,\, l = 1, \ldots, i \\
 \end{array} \right\}.
\end{equation} 
We have $\vol \mathcal{R} = (t-i t_\rep)^s/s!$. 
\label{lemma:Auxiliar}
\end{lemma}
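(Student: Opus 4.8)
The plan is to compute $\vol \mathcal{R}$ by a change of variables that ``absorbs'' the $i$ gap constraints $x_{j_l+1}-x_{j_l}\ge \trep$ into a translation, turning the constrained ordered region back into a plain ordered simplex of reduced side length. First I would introduce new coordinates $y_1,\dots,y_s$ defined by $y_m = x_m - c(m)\trep$, where $c(m)$ counts how many of the distinguished indices $j_1,\dots,j_i$ are strictly less than $m$; equivalently, $c(m)$ jumps by one as $m$ crosses each $j_l+1$. This map is a volume-preserving (unit Jacobian, triangular, determinant $1$) affine shear, so $\vol \mathcal{R}$ equals the volume of the image region in $y$-coordinates.

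Next I would check that the image region is exactly the simplex $\{0\le y_1\le y_2\le\dots\le y_s\le t-i\trep\}$. For a non-distinguished step $m\notin\{j_1,\dots,j_i\}$ we have $c(m+1)=c(m)$, so $x_{m+1}\ge x_m$ translates directly into $y_{m+1}\ge y_m$. For a distinguished step $m=j_l$ we have $c(j_l+1)=c(j_l)+1$, so the constraint $x_{j_l+1}-x_{j_l}\ge\trep$ becomes precisely $y_{j_l+1}\ge y_{j_l}$. The lower bound $x_1\ge 0$ gives $y_1\ge 0$ (since $c(1)=0$), and the upper bound $x_s\le t$ gives $y_s\le t - c(s)\trep = t - i\trep$ (since all $i$ distinguished indices are $<s$, so $c(s)=i$). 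Hence the image is the standard ordered simplex with side $t-i\trep$, whose volume is $(t-i\trep)^s/s!$, giving the claim. (Implicitly one assumes $t\ge i\trep$ so the region is nonempty; otherwise both sides are $0$.)

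The only subtle point — and the step I would be most careful about — is verifying that no \emph{other} inequalities are secretly imposed or lost in the $y$-coordinates, i.e.\ that the correspondence between the two regions is a bijection and not merely a containment. This follows because the shear is invertible with the inverse $x_m = y_m + c(m)\trep$, and applying the inverse to the simplex constraints reproduces exactly the constraint list defining $\mathcal{R}$ (each ordering inequality $y_{m+1}\ge y_m$ maps back to either $x_{m+1}\ge x_m$ or a gap constraint, and the box constraints map back to $0\le x_1$ and $x_s\le t$). Since both the forward and inverse maps are affine with unit Jacobian, the volumes are equal. As a sanity check: taking $i=0$ recovers $\vol\{0\le x_1\le\dots\le x_s\le t\}=t^s/s!$, and taking all $x$'s constrained pairwise (the $(n,n-1)$ case used in Example~\ref{sub:Simplex}) is the specialization $s=n$, $i=n-1$, $j_l=l$, yielding $(t-(n-1)\trep)^n/n!$ per ordering and $(t-(n-1)\trep)^n$ after multiplying by $n!$, matching the stated $\vol\mathcal{R}^c$.
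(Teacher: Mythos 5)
Your proof is correct and uses essentially the same method as the paper: the paper also defines a coordinate-wise translation $\phi(\bm{x})=\bm{x}-\bm{u}$ with $u_m=l\,t_{\text{\upshape{rep} }}$ for $m\in\{j_l+1,\ldots,j_{l+1}\}$ (which is your $c(m)\,t_{\text{\upshape{rep} }}$), observes the constraints become those of the ordered simplex of side $t-i\,t_{\text{\upshape{rep} }}$, and concludes. The only small quibble is that you call the map a ``shear''; since $y_m$ depends only on $x_m$ it is in fact a pure translation, but the unit-Jacobian conclusion is the same.
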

\begin{proof}
Consider the translation $\phi(\bm{x})= \bm{x} - \bm{u}$, where $\bm{u}=(u_1,u_2,\ldots,u_s)$ defined as
$${u}_i = l t_\rep\mbox{, if }i = j_{l}+1,\ldots,j_{l+1}.$$
Let $y = \phi(\bm{x})$. The translated region $\phi(\mathcal{R})$ is given by:
\begin{equation*}
\phi(\mathcal{R}) = \left\{(y_1,\ldots,y_s) : \begin{array}{c} 0 \leq y_1 + u_1 \leq \ldots \leq y_s + u_s \leq t \\
y_{j_l+1} - y_{j_l} \geq 0, \,\, l = 1, \ldots, i \\
 \end{array} \right\}.
\end{equation*}
Eliminating redundant inequalities we obtain
\begin{equation*}
\phi(\mathcal{R}) = \left\{ (y_1,\ldots,y_s): 0 \leq y_1 \leq y_2 \leq \ldots \leq y_s \leq t-it_\rep\right\}.
\end{equation*} 
This last set of inequalities corresponds to a well-known regular simplex whose volume is $(t-it_\rep)^s/s!$, concluding the proof.
\end{proof}
In particular, Lemma~\ref{lemma:Auxiliar} shows that the volume of a polytope ${\mathcal R}_{\bm{b}}$ defined by an vector $\bm{b}$, depends only on its weight.
\begin{lemma}
Let $1 \leq i \leq s-1$. The volume polynomial associated with the $i$th node along the left boundary of the super-graph is given by:
\begin{equation}v_{i0}(\rho)= (\rho- i)^s. \label{eq:leftBoundary}
\end{equation}
\label{lemma:leftedge}
\end{lemma}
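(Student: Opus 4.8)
The plan is to read off that the $i$th node along the left boundary of the super‑graph is the graph $G_{0^i}$ (a node labelled $i0$ being $G_{0^i1^0}=G_{0^i}$), so that it is attached to the region
\[
\mathcal{R}_{0^i}=\left\{(x_1,\ldots,x_s):\ 0\leq x_1\leq\ldots\leq x_s\leq t,\ x_{l+1}-x_l\geq\trep,\ l=1,\ldots,i\right\},
\]
and then to convert $\vol\mathcal{R}_{0^i}$ into the normalisation $v_{i0}(\rho)=s!\,\vol\mathcal{R}_{0^i}/\trep^{\,s}$ used in Sec.~\ref{sec:ErrorRegions}, with $\rho=t/\trep$.

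First I would apply Lemma~\ref{lemma:Auxiliar} with $j_0=0$ and $j_l=l$ for $l=1,\ldots,i$; the required chain $0=j_0<j_1<\ldots<j_i<s$ holds precisely because $i\leq s-1$, so the indices $1,\ldots,i$ all lie in $\{1,\ldots,s-1\}$ and are strictly increasing. Lemma~\ref{lemma:Auxiliar} then yields $\vol\mathcal{R}_{0^i}=(t-i\trep)^{\,s}/s!$ (valid for $t\geq i\trep$). Substituting into the definition gives
\[
v_{i0}(\rho)=\frac{s!}{\trep^{\,s}}\cdot\frac{(t-i\trep)^{\,s}}{s!}=\left(\frac{t}{\trep}-i\right)^{\!s}=(\rho-i)^s .
\]
Since $v_{i0}(\rho)$ and $(\rho-i)^s$ are polynomials in $\rho$ that agree for all $\rho\geq i$, a set with nonempty interior, they coincide as polynomials, which is exactly the claimed identity \eqref{eq:leftBoundary}.

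I do not expect a genuine obstacle here: the lemma is essentially a corollary of Lemma~\ref{lemma:Auxiliar}. The only points needing care are (a) matching the left‑boundary node with $\mathcal{R}_{0^i}$ via the labelling convention for the super‑graph, and (b) verifying the index hypothesis of Lemma~\ref{lemma:Auxiliar}, which is exactly where the standing assumption $1\leq i\leq s-1$ is used. An alternative, equally short route is to iterate the Shift rule $v_{i+1,0}(\rho)=v_{i0}(\rho-1)$ starting from the Initial Condition $v_{00}(\rho)=\rho^s$; but since the Shift rule itself rests on the same translation argument that underlies Lemma~\ref{lemma:Auxiliar}, the direct computation above is preferable as a self‑contained justification of that rule.
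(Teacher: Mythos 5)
Your proof is correct and takes essentially the same route as the paper's: both reduce the claim to Lemma~\ref{lemma:Auxiliar} with the choice $j_l=l$ for $l=1,\ldots,i$ and then unwind the normalisation $v_{i0}(\rho)=s!\,\vol\mathcal{R}_{0^i}/t_{\text{rep}}^{\,s}$. The extra remarks you add (the explicit check of $0=j_0<j_1<\ldots<j_i<s$, the polynomial-identity argument, and the alternative Shift-rule derivation) are fine elaborations but do not change the substance.
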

\begin{proof} Recall that, by definition, $v_{i0}(\rho)= s! \vol G_{0^i}/t_\rep^s$, where $\rho = t/t_\rep$. Thus the statement is equivalent to $\vol G_{0^i} = (t-i t_\rep)^s/s!$, which, in turn, is a special case of Lemma \ref{lemma:Auxiliar}, for $j_l = l, l = 1, \ldots, i$.
\end{proof}

\section{Set Avoidance Bounds}
\label{app:setAvoidance}
\textit{Proof of Lemma \ref{lem:avoidance}:}
\begin{eqnarray*}
\lefteqn{P\left({\mathcal X}\times {\mathcal Y} \bigcap \mathcal{R} =\emptyset\right)} \nonumber \\
&  = &  E\left(P\left({\mathcal X} \bigcap \bigcup_{i=1}^{m_2} \mathcal{R}(y_{i}))=\emptyset~|~{\mathcal Y}\right)\right) \nonumber \\
& =  & E\left(P\left(X  \notin \bigcup_{i=1}^{m_2} \mathcal{R}(y_{i}))~|~{\mathcal Y}\right)^{m_1}\right) \nonumber \\ 
&\stackrel{(a)}{\geq} &  E\left(P\left(X  \notin \bigcup_{i=1}^{m_2} \mathcal{R}(y_{i}))~|~{\mathcal Y}\right)\right)^{m_1} \nonumber \\
& = &  \left( P(\{X\} \times {\mathcal Y} \bigcap \mathcal{R}=\emptyset)\right)^{m_1},
\end{eqnarray*}
where in (a) we have used Jensen's inequality. The condition for equality follows directly from the condition for equality in Jensen's inequality. \\ \qed

In general we do not expect the condition for equality to hold, except in the case where one of the random sets has a single element. 

For the next upper bound, we use the following generalized version of the union bound: if $A_1,A_2, \ldots A_m$ are $m$ events, then the probability of $\cup_{i=1}^m A_i$ is lower bounded by
\begin{equation} P\left(\bigcup_{i=1}^m A_i\right) \geq \sum_{i=1}^m P(A_i) - \sum_{j=i+1}^m\sum_{i=1}^m P(A_i \cap A_j).
\label{eq:generalUnion}
\end{equation}
In what follows we denote the event $\left\{ (X,Y)\in \mathcal{R}\right\}$ by  $\varepsilon(X,Y)$.
\begin{thm}  Let $Q_1(x) = P(\varepsilon(x,Y))$ and $Q_2(y) = P(\varepsilon(X,y))$. The set avoidance probability is upper bounded by
\begin{equation*}
\begin{split} &P\left({\mathcal X}\times {\mathcal Y} \bigcap \mathcal{R} =\emptyset\right)  \leq 1 - m_1 m_2 P(\varepsilon(X,Y)) + \\ &+ 2 {m_1 \choose 2}{m_2 \choose 2} P(\varepsilon(X,Y))^2 + \\ &+ m_2 {m_1 \choose 2} E[Q_1(X)^2] + m_1 {m_2 \choose 2} E[Q_2(Y)^2]
\end{split}
\end{equation*}
\label{thm:upperbound}
\end{thm}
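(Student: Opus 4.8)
The plan is to express the event $\{\mathcal{X}\times\mathcal{Y}\cap\mathcal{R}\neq\emptyset\}$ as a union over index pairs and then apply the second-order Bonferroni bound (\ref{eq:generalUnion}). For $i\in\{1,\ldots,m_1\}$ and $j\in\{1,\ldots,m_2\}$ let $A_{ij}$ be the event $\varepsilon(X_i,Y_j)=\{(X_i,Y_j)\in\mathcal{R}\}$. Then $\{\mathcal{X}\times\mathcal{Y}\cap\mathcal{R}\neq\emptyset\}=\bigcup_{i,j}A_{ij}$, so that $P(\mathcal{X}\times\mathcal{Y}\cap\mathcal{R}=\emptyset)=1-P(\bigcup_{i,j}A_{ij})$. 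Applying (\ref{eq:generalUnion}) to the $m_1m_2$ events $A_{ij}$ gives $P(\bigcup_{i,j}A_{ij})\geq\sum_{i,j}P(A_{ij})-\sum_{(i,j)\neq(k,l)}P(A_{ij}\cap A_{kl})$, the last sum being over unordered pairs of distinct index pairs; substituting this in yields $P(\mathcal{X}\times\mathcal{Y}\cap\mathcal{R}=\emptyset)\leq 1-\sum_{i,j}P(A_{ij})+\sum_{(i,j)\neq(k,l)}P(A_{ij}\cap A_{kl})$, and the task reduces to evaluating the two remaining sums.

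The singleton sum is immediate from the i.i.d.\ hypothesis: $\sum_{i,j}P(A_{ij})=m_1m_2\,P(\varepsilon(X,Y))$, which produces the linear term. For the pairwise sum I would split the unordered pairs $\{(i,j),(k,l)\}$ into three classes according to how many coordinate indices they share. First, if $i\neq k$ and $j\neq l$, the pairs $(X_i,Y_j)$ and $(X_k,Y_l)$ involve disjoint collections of the underlying i.i.d.\ variables, so $A_{ij}$ and $A_{kl}$ are independent and $P(A_{ij}\cap A_{kl})=P(\varepsilon(X,Y))^2$; there are $2{m_1 \choose 2}{m_2 \choose 2}$ such pairs (choose the two first indices, the two second indices, and one of the two ways to match them). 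Second, if the pair shares its first index ($i=k$, $j\neq l$), conditioning on the common value $X_i=x$ and using that the two distinct second-coordinate variables are independent of each other and of $X_i$ gives conditional probability $Q_1(x)^2$, hence unconditional contribution $E[Q_1(X)^2]$. Third, symmetrically, a pair sharing its second index contributes $E[Q_2(Y)^2]$ after conditioning on the common value of the second coordinate. Counting the pairs in these last two classes and collecting all three contributions into the bound above produces the stated inequality.

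The main obstacle I anticipate is the bookkeeping in the pairwise sum: enumerating the three classes of pairs with the correct multiplicities, and carrying out the conditioning step carefully in each class — in particular conditioning on the \emph{shared} variable so that the remaining two variables become conditionally independent, which is exactly what makes the squared quantities $P(\varepsilon(X,Y))^2$, $E[Q_1(X)^2]$, $E[Q_2(Y)^2]$ appear. A small but essential point is to invoke (\ref{eq:generalUnion}) in the right direction: it is a \emph{lower} bound on the probability of a union, which is precisely what converts into the desired \emph{upper} bound on the complementary avoidance probability. No independence beyond the within-group i.i.d.\ assumptions and the mutual independence of $\mathcal{X}$ and $\mathcal{Y}$ is needed.
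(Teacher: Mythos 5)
Your proof takes essentially the same route as the paper: rewrite the avoidance probability as $1-P\bigl(\bigcup_{i,j}\varepsilon(X_i,Y_j)\bigr)$, apply the second-order Bonferroni lower bound of (\ref{eq:generalUnion}), and split the pairwise intersections into three classes (disjoint indices, shared first index, shared second index), using independence and conditioning exactly as the paper does. One point worth flagging: if you actually carry out the count you defer to in the last sentence, you get $m_1\binom{m_2}{2}$ pairs sharing their first index (pick the common $i$, then two of the $m_2$ second indices), so the coefficient of $E[Q_1(X)^2]$ should be $m_1\binom{m_2}{2}$, and symmetrically the coefficient of $E[Q_2(Y)^2]$ should be $m_2\binom{m_1}{2}$; the theorem statement appears to have these two coefficients interchanged, so your argument, done carefully, produces the corrected form rather than the inequality exactly as printed.
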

\begin{proof} First note that
\begin{eqnarray}
P\left({\mathcal X}\times {\mathcal Y} \bigcap \mathcal{R} = \emptyset \right) &=& 1 - P\left(  \bigcup_{j=1}^{m_2} \bigcup_{i=1}^{m_1} \varepsilon(X_i,Y_j) \right). \nonumber \\
\label{eq:rewriteAvoidance} \end{eqnarray}
From Eq. \eqref{eq:generalUnion}, the RHS of \eqref{eq:rewriteAvoidance} can be lower bounded 
\begin{eqnarray*}
1 - m_1 m_2 P\left(  \varepsilon(X,Y) \right) + \sum P \left( \varepsilon(X_i,Y_j) \cap \varepsilon(X_{i'},Y_{j'}) \right),
\end{eqnarray*}
where the summation is over all ${m_1 m_2 \choose 2}$ distinct choices of cross terms $\varepsilon(X_i,Y_j) \cap \varepsilon(X_{i'},Y_{j'})$. 
Now, for the probability of the cross terms, we have three cases. If $i \neq i'$ and $j \neq j'$ then, due to independence, $P \left( \varepsilon(X_i,Y_j) \cap \varepsilon(X_{i'},Y_{j'}) \right) = P\left( \varepsilon(X,Y) \right)^2.$
On the other hand, if $i = i'$ (and $j \neq j'$) let $f(x_i) = P( \varepsilon(X_i,Y_j) \cap \varepsilon (X_i,Y_{j}) | X_i = x_i) = Q_1(x_i)^2$. Then:
$$P \left( \varepsilon(X_i,Y_j) \cap \varepsilon(X_{i},Y_{j'}) \right) = E[ f(X) ] = E[ Q_1(X)^2].$$
The case $j = j'$ is analogous. Counting the number of occurrences of the three cases leads us to the theorem.
\end{proof}
 
If $X$ and $Y$ are uniformly distributed over a set $\mathcal{S} = \mathcal{S}_1 = \mathcal{S}_2$, the functions $Q_1, Q_2$ have a natural geometric interpretation, as can be seen in the next example.

\begin{exe} \label{ex:1} Let $\mathcal{R} = \left\{(x,y) \in [0,1]^2 : |x-y| \leq t_{\rep} \right\}$ be the error region of a $(2,1)$-code, and consider $X$ and $Y$ uniformly distributed over $[0,t]$. Then $P(\varepsilon(X,Y)) = (2 t_{\rep}t - t_{\rep}^2)/{t^2}.$
The function $Q_1(x)$ corresponds to the probability that $Y$ belongs to the shadow of $\mathcal{R}_1(x)$ on the $y$-axis, which, in this case, is the length of $\mathcal{R}_1(x)$. One can easily see that $Q_1(x) \leq 2 t_{\rep}$, thus $E[Q_1(X)^2] \leq 4t_{\rep}^2$. By symmetry, $E[Q_2(X)^2] \leq 4t_{\rep}^2$.

\begin{figure}[!htb]
\centering
\includegraphics[scale=0.45]{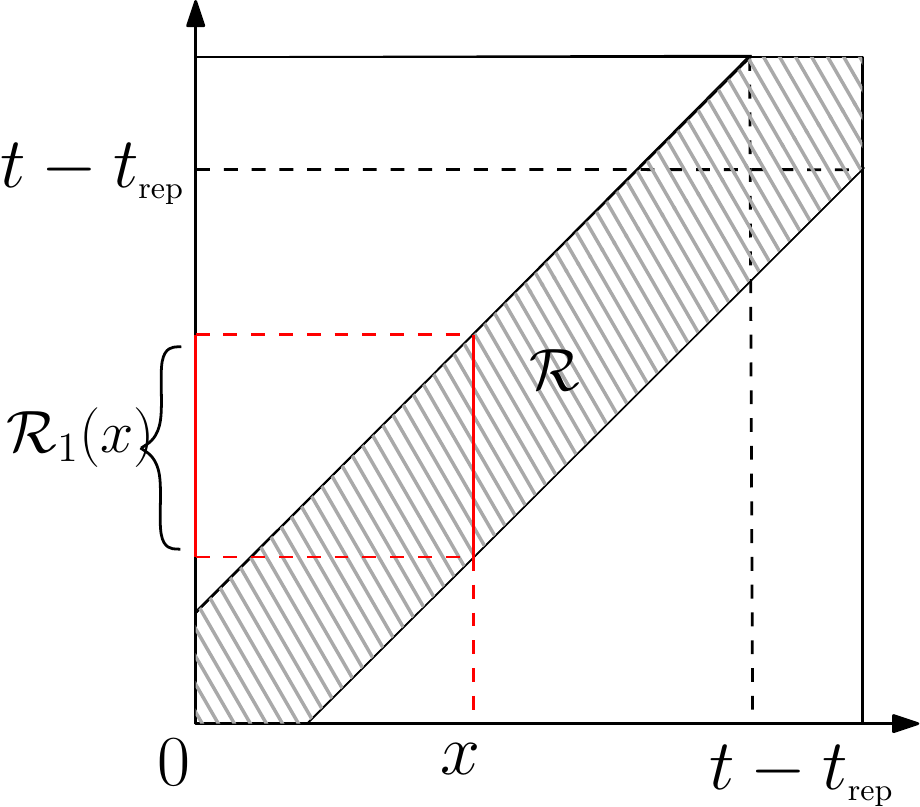}
\caption{Region $\mathcal{R}$ and the shadow of $\mathcal{R}_1(x)$ on the $y$-axis}
\label{fig:ex1}
\end{figure}

Assume that $X_1, \ldots, X_m$ and $Y_1, \ldots, Y_m$ are iid with the same distribution as $X$ and $Y$. Applying Corollary \ref{Cor:main}, we have
$$P\left({\mathcal X}\times {\mathcal Y} \bigcap \mathcal{R} =\emptyset\right) \geq (1-t_{\rep}/t)^{2m_1m_2} \geq 1 - 2m_1 m_2 \frac{t_{\rep}}{t}.$$
On the other hand, Thm. \ref{thm:upperbound} together with $E[Q_1(X)^2] = E[Q_2(Y)^2] \leq 4 t_{\rep}^2$, provides us an upper bound of the type
\begin{equation*}
\begin{split} P\left({\mathcal X}\times {\mathcal Y} \bigcap \mathcal{R} =\emptyset\right) &\leq 1 - 2 m_1 m_2 \frac{t_{\rep}}{t} + \\&+\frac{2 m_1 m_2 \left(m_1 m_2-1\right) t_{\text{rep}}^2}{t^2} + o((\trep/t)^2)
\end{split}
\end{equation*}
From this, we can estimate the gap between upper and lower bounds, and obtain the same asymptotic result as in Cor. \ref{cor:asymptotic21}.
\end{exe}
A more general upper bound can be found in \cite{CampVai2}. However the upper bound is not optimal, in the sense that it does not collapse with the lower bound for small $\trep$.

\end{appendices}
\bibliographystyle{plain}
\bibliography{storage}

\end{document}